\title{Sensitive instances\texorpdfstring{\\}{ }of the Constraint Satisfaction Problem} %Bounded strict width and sensitive instances (?) or
\titlerunning{Sensitive instances of the Constraint Satisfaction Problem} %TODO optional, please use if title is longer than one line
\author{Libor Barto}
{Department of Algebra, Faculty of Mathematics and Physics, Charles University, Sokolovska 83, 186 00 Praha 8, Czech Republic}
{barto@karlin.mff.cuni.cz}
{https://orcid.org/0000-0002-8481-6458}
{Research partially supported by the European Research Council (ERC)
under the European Union's Horizon 2020 research and innovation programme (grant agreement No 771005), and by the Czech Science Foundation grant 18-20123S}
\author{Marcin Kozik}
{Theoretical Computer Science,
Faculty of Mathematics and Computer Science,
Jagiellonian University,
Kraków,
Poland}
{marcin.kozik@uj.edu.pl}
{https://orcid.org/0000-0002-1839-4824}
{Research partially supported by the National Science Centre Poland grants\\
{2014/13/B/ST6/01812} and 2014/14/A/ST6/00138}
\author{Johnson Tan}
{Department of Mathematics, University of Illinois, Urbana-Champaign, Urbana, IL USA, 61801}
{jgtan2@illinois.edu}
{https://orcid.org/0000-0002-5459-9712}
{}
\author{Matt Valeriote}
{Department of Mathematics and Statistics, McMaster University, 1280 Main Street West, Hamilton, Ontario, Canada L8S 4K1}
{matt@math.mcmaster.ca}
{https://orcid.org/0000-0001-6568-7526}
{Supported by the Natural Sciences and Engineering Research Council of Canada.}
\authorrunning{L. Barto, M. Kozik, J. Tan and M. Valeriote} %TODO mandatory. First: Use abbreviated first/middle names. Second (only in severe cases): Use first author plus 'et al.'
\keywords{Constraint satisfaction problem, bounded width, local consistency, near unanimity operation, loop lemma} %TODO mandatory; please add comma-separated list of keywords
\newcommand{\vc}[1]{\mathbf{#1}}
\newcommand{\gp}[2]{#1^{\circ #2}}
\newcommand{\en}{\mathbb{N}}
\newcommand{\alg}[1]{{\mathbf #1}}
\newcommand{\csp}{{\mathsf {CSP}}}
\newcommand{\inst}[1]{{\mathcal #1}}
\newcommand{\var}[1]{{\mathcal #1}}
\newcommand{\constr}{{\mathcal C}}
\newcommand{\restr}[2]{#1_{|#2}}
\newcommand{\pat}[1]{\mathbb{#1}}
\newcommand{\asc}[1]{\mathcal{#1}}
\DeclareMathOperator{\proj}{proj}
\DeclareMathOperator{\dom}{dom}
\newcommand{\rel}[1]{{\mathbb #1}}
\newcommand{\V}[1]{\var V(\alg #1)}
\newcommand{\constraints}[1]{{\mathcal #1}}
\newenvironment{customthm}[1]
  {\innercustomthm}
  {\endinnercustomthm}
\begin{document}

\maketitle

%TODO mandatory: add short abstract of the document
\begin{abstract}
    We investigate the impact of modifying the constraining relations of a Constraint Satisfaction Problem ($\csp$) instance, with a fixed template,
    on the set of solutions of the instance.
    More precisely we investigate sensitive instances:
    an instance of the $\csp$ is called sensitive, if removing any tuple from any constraining relation invalidates some solution of the instance.
    Equivalently, one could require that every tuple from any one of its constraints extends to a solution of the instance.

    Clearly, any non-trivial template has instances which are not sensitive.
    Therefore we follow the direction proposed~%
    (in the context of strict width)
    by Feder and Vardi in \cite{feder-vardi} and require
    that only the instances produced by a local consistency checking algorithm are sensitive.
    In the language of the algebraic approach to the $\csp$
    we show that a finite idempotent algebra $\alg A$ has a $k+2$ variable near unanimity term operation if and only if any instance that results from running the $(k, k+1)$-consistency algorithm on an instance over $\alg A^2$ is sensitive.

    A version of our result, without idempotency but with
    the sensitivity condition holding in a variety of algebras, settles a question posed by G.\ Bergman about systems of projections of algebras that arise from some subalgebra of a finite product of algebras.

    Our results hold for infinite~%
    (albeit in the case of $\alg A$ idempotent) algebras as well
    and exhibit a surprising similarity to the  strict width~$k$ condition proposed by Feder and Vardi.
    Both conditions can be characterized by the existence of a near unanimity operation, but the arities of the operations differ by $1$.
\end{abstract}

\section{Introduction} \label{sec:intro}

One important algorithmic approach to deciding if a given instance of the Constraint Satisfaction Problem ($\csp$) has a solution is to first consider whether it has a consistent set of local solutions. Clearly, the absence of local solutions will rule out having any (global) solutions, but in general having local solutions does not guarantee the presence of a solution.  A major thrust of the recent research on the $\csp$  has focused on coming up with suitable notions of local consistency and then characterizing those $\csp$s for which local consistency implies outright consistency or some stronger property.
%
%IMHO too vague and too early, we lack proper definitions to state it in a meaningfull way
%In this paper we will consider a new notion of local consistency and
%provide an algebraic characterization of it over collections of $\csp$ instances whose constraint relations arise from a finite relational structure (sometimes called a template) and its (Cartesian) square, an algebra (possibly infinite) and its square, or a collection of algebras defined by a set of equations (known as a variety of algebras).
%provide an alVgebraic characterization of it over  collections of $\csp$ instances whose constraint relations are confined to a set prescribed by a finite relational structure (sometimes called a template), an algebra (possibly infinite), or a collection of algebras.
A good source for background material is the survey article~\cite{barto-krokhin-willard-how-to}.

Early results of Feder and Vardi \cite{feder-vardi} and also Jeavons, Cooper, and Cohen \cite{jeavons-cohen-cooper} establish that when a template (i.e., a relational structure) $\rel A$ has a special type of polymorphism, called a near unanimity operation, then not only will an instance of the $\csp$ over $\rel A$ that has a suitably consistent set of local solutions have a solution, but that any partial solution of it can always be extended to a solution.  The notion of local consistency that we investigate in this paper is related to that considered by these researchers but that, as we shall see, is weaker.

The following operations are central to our investigation.
\begin{definition}
An operation $n(x_1, \dots, x_{k+1})$ on a set $A$ of arity $k+1$ is called a {\em near unanimity operation on $A$} if it satisfies the equalities
\[
n(b,a,a, \dots, a) = n(a,b,a, \dots, a) = \dots = n(a,a, \dots, a, b) = a
\]
for all $a$, $b \in A$.
\end{definition}
  Near unanimity operations have played an important role in the development of universal algebra and first appeared in the 1970's in the work of Baker and Pixley \cite{BakerPixley} and Huhn \cite{Huhn1971}.  More recently they have been used in the study of the $\csp$  \cite{feder-vardi, jeavons-cohen-cooper} and related questions \cite{barto-nu, testability-focs}.  The main results of this paper can be expressed in terms of the $\csp$ and also in algebraic terms and we start by presenting them from both perspectives.
In the concluding section, Section~\ref{sec:conclusions}, a translation of parts of our results into a relational language is provided, along with some open problems.

\subsection{CSP viewpoint}

In their seminal paper, Feder and Vardi~\cite{feder-vardi} introduced
the notion of bounded width for the class of $\csp$ instances over  a finite template $\rel A$.  Their definition of bounded width was presented in terms of the logic programming language DATALOG but there is an equivalent formulation using local consistency algorithms, also given in~\cite{feder-vardi}.
Given a $\csp$ instance $\inst I$  and $k < l$, the $(k,l)$-consistency algorithm will produce a new instance having all $k$ variable constraints that can be inferred by considering $l$ variables  at a time of $\inst I$.  This algorithm rejects $\inst I$ if it produces an empty constraint. The class of $\csp$ instances over a finite template $\rel A$ will have width $(k,l)$ if the $(k,l)$-consistency algorithm rejects all instances from the class that do not have solutions, i.e., the $(k,l)$-consistency algorithm can be used to decide if a given instance from the class has a solution or not. The class has bounded width if it has width $(k,l)$ for some $k< l$.

A lot of effort, in the framework of the algebraic approach to the $\csp$, has gone  in to analyzing various properties of instances that are the outputs of these types of local consistency algorithms.
On one end of the spectrum of the research is a rather wide class of  templates of bounded width~\cite{barto-kozik-bw} and on the other a very restrictive class of templates having bounded strict width~\cite{feder-vardi}.

To be more precise, we now formally introduce instances of the CSP.

\begin{definition}
An  {\em instance} $\inst I$ of the $\csp$ is a pair $(V,\constraints C)$
where $V$ is a finite set of variables,
and $\constraints C$ is a set of constraints of the form $((x_1,\dotsc,x_n), R)$ where all $x_i$ are in $V$ and $R$ is an $n$-ary relation over (possibly infinite) sets $A_i$ associated to each variable $x_i$.

{\em A solution} of $\inst I$ is an evaluation $f$ of variables such that, for every $((x_1,\dotsc,x_n),R) \in \constraints C$ we have $(f(x_1),\dotsc,f(x_n))\in R$;
{\em a partial solution} is a partial function satisfying the same condition.

The \emph{$\csp$ over a relational structure $\rel A$}, written $\csp(\rel A)$, is  the class of $\csp$ instances whose constraint relations are from $\rel A$.
\end{definition}

\begin{example}\label{ex:colour}
For $k > 1$, the template associated with the graph $k$-colouring problem is the relational structure $\rel D_{k\text{colour}}$ that has universe $\{0,1, \dots, k-1\}$ and
a single relation $\ne_k = \{(x,y) \mid \text{$x, y < k$ and $x \ne y$}\}$.
The template associated with the HORN-3-SAT problem is the relational structure $\rel D_{\text{horn}}$ that has universe $\{0,1\}$ and two ternary relations $R_0$, $R_1$, where $R_i$ contains all the triples but $(1,1,i)$.
It is known that $\csp(\rel D_{\text{horn}})$ has width $(1,2)$, that $\csp(\rel D_{2\text{colour}})$ has width $(2,3)$, and that for
$k> 2$, $\csp(\rel D_{k\text{colour}})$  does not have bounded width (see \cite{barto-krokhin-willard-how-to}).
\end{example}

Instances produced by the $(k,l)$-consistency algorithm have uniformity and consistency properties that we highlight.
\begin{definition}
The $\csp$ instance $\inst I$ is {\em $k$-uniform}
if all of its constraints are $k$-ary and  every set of $k$ variables is constrained by a single constraint.

An instance is a \emph{$(k,l)$-instance} if it is $k$-uniform
and for every choice of a set $W$ of $l$ variables no additional information about the constraints can be derived by restricting the instance to the variables in $W$.
\end{definition}
This last, very important, property can be rephrased in the following way:
for every set $W\subseteq V$ of size $l$,
every tuple in every constraint of $\inst I_{|W}$ participates in a solution to $\inst I_{|W}$~%
(where $\inst I_{|W}$ is obtained from $\inst I$ by removing all the variables outside of $W$ and all the constraints that contain any such variables).

Consider the notion of strict width $k$~%
introduced by Feder and Vardi~\cite[Section 6.1.2]{feder-vardi}.
Let $\rel A$ be a template and let us assume, to avoid some technical subtleties, that every relation in $\rel A$ has arity at most $k$. The class $\csp(\rel A)$ has \emph{strict width $(k,l)$} if whenever the $(k, l)$-consistency algorithm does not reject an instance $\inst I$ from the class then ``it should be possible to obtain a solution by greedily assigning values to the variables one at a time while satisfying the inferred $k$-constraints.''
In other words,
if $\inst I$ is the result of applying the $(k,l)$-consistency algorithm to an instance of $\csp(\rel A)$, then any partial solution of $\inst I$ can be extended to a solution. The template $\rel A$ is said to have \emph{strict width $k$} if it has strict width $(k,l)$ for some $l > k$.

 A {\em polymorphism} of a template $\rel A$ is a function on $A$ that preserves all of the relations of $\rel A$.  Feder and Vardi prove the following.

\begin{theorem}[see Theorem 25, \cite{feder-vardi}] \label{thm:fv}
Let $k > 1$ and
    let $\rel A$ be a finite relational structure with relations of arity at most $k$.
The class $\csp(\rel A$) has strict width $k$
if and only if it has strict width $(k,k+1)$
if and only if $\rel A$ has a $(k+1)$-ary near unanimity operation as a polymorphism.
\end{theorem}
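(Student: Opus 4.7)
The implication $(2) \Rightarrow (1)$ is immediate, so my plan is to establish the other two implications.

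For $(3) \Rightarrow (2)$, suppose $\rel A$ has a $(k+1)$-ary near unanimity polymorphism $n$, and let $\inst I = (V, \constraints C)$ be a $(k,k+1)$-instance of $\csp(\rel A)$. The central tool I would use is a Baker--Pixley style lemma: any relation $R \subseteq \prod_{v \in V} A_v$ that is invariant under $n$ is determined by its projections onto $k$-element subsets of $V$, meaning $R = \{\, t : t|_T \in \pi_T(R) \text{ for every $k$-subset } T \subseteq V \,\}$. The proof of this lemma is by induction on the number of coordinates at which a candidate tuple differs from a tuple of $R$: given $t$ with every $k$-projection in $\pi_T(R)$, one picks $k+1$ tuples from $R$ each agreeing with $t$ on a different $k$-subset and combines them by $n$; since at every coordinate at least $k$ of the $k+1$ chosen tuples agree with $t$, the NU identity forces the combined tuple to take the target value there. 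Applying this lemma to the solution set of $\inst I$ (invariant under $n$, as all constraint relations are), together with an induction on $|V|$ using $(k,k+1)$-consistency, one shows that the projection of the solution set onto every $k$-subset equals the corresponding constraint relation; strict width then follows by iteratively extending a partial solution one variable at a time, with the value at each new variable obtained by applying $n$ to $k+1$ extensions provided by consistency.

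For $(1) \Rightarrow (3)$, the plan is to construct a canonical ``free'' instance $\inst F$ whose solutions are precisely the $(k+1)$-ary polymorphisms of $\rel A$, and then to extract from strict width a solution satisfying the NU identities. Define $\inst F$ to have variable set $V = A^{k+1}$; for each relation $R$ of arity $m \le k$ in $\rel A$ and each $m$-tuple of variables $(v_1,\dots,v_m) \in V^m$ with $(v_1[j],\dots,v_m[j]) \in R$ for every $j \in \{1,\dots,k+1\}$, include the constraint $((v_1,\dots,v_m), R)$. A solution of $\inst F$ is exactly a $(k+1)$-ary polymorphism $A^{k+1} \to A$ of $\rel A$. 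For each $a,b \in A$ and $i \in \{1,\dots,k+1\}$ let $d_i^{a,b} \in V$ denote the tuple with $b$ in position $i$ and $a$ elsewhere, and define a partial assignment $p$ on $\{d_i^{a,b} : a,b \in A,\, i \in \{1,\dots,k+1\}\}$ by $p(d_i^{a,b}) = a$. The crucial verification that $p$ remains a partial solution after $(k,l)$-consistency is a clean combinatorial check: given any constraint $((d_{i_1}^{a_1,b_1},\dots,d_{i_m}^{a_m,b_m}), R)$ with $m \le k$, there is some $j \in \{1,\dots,k+1\}$ avoiding $\{i_1,\dots,i_m\}$, and the coordinate projection $\pi_j : A^{k+1} \to A$ is itself a solution of $\inst F$ which assigns exactly the values $(a_1,\dots,a_m)$ to these variables, so this tuple survives consistency. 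By strict width, $p$ extends to a total solution $n$, which by construction satisfies $n(d_i^{a,b}) = a$ for all $a, b, i$ and is therefore a $(k+1)$-ary near unanimity polymorphism of $\rel A$.

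The principal technical obstacle is the Baker--Pixley lemma required for $(3) \Rightarrow (2)$. Its proof is the combinatorial heart of the argument: a counterexample $t \notin R$ whose $k$-projections all lie in $\pi_T(R)$ must be corrected coordinate by coordinate via $n$, increasing the maximal agreement of a witness tuple with $t$ until it equals $t$. This is exactly where the match between the arity $k+1$ of the NU and the consistency level $k+1$ becomes visible: one needs $k+1$ witnesses, and the NU must return the majority value when $k$ of its inputs agree. The $(1) \Rightarrow (3)$ direction, by contrast, is a tidy free-construction argument, with its only subtlety --- that $p$ survives consistency --- handled by exhibiting explicit polymorphisms, namely the coordinate projections.
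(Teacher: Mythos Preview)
The paper does not give its own proof of this theorem; it is quoted from Feder and Vardi with the attribution ``see Theorem 25, \cite{feder-vardi}'' and then used as a black box. So there is no in-paper argument to compare your proposal against directly.

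That said, your proposal is correct and lines up almost exactly with the arguments the paper \emph{does} supply for the closely related results. Your $(3)\Rightarrow(2)$ direction is the content of the paper's proof of Theorem~\ref{thm:BP} in Appendix~\ref{sec:AlgebraProofs}: the paper phrases it as an induction on $|W|$ showing that any solution of $\inst I_{|W}$ extends to $\inst I_{|W\cup\{v\}}$, with the base case given by $(k,k+1)$-consistency and the inductive step obtained by dropping $k+1$ vertices of $W$ one at a time and combining the resulting extensions via the near unanimity operation. Your sketch has the same ingredients, though the phrasing ``$k+1$ extensions provided by consistency'' is a little loose --- consistency only supplies the base case; the $k+1$ extensions at the inductive step come from the induction hypothesis, not directly from consistency. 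The Baker--Pixley detour (showing $\pi_T(\mathrm{Sol}) = R_T$) is not needed if you run the induction this way.

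Your $(1)\Rightarrow(3)$ direction via the free instance on $A^{k+1}$ is the relational-structure counterpart of the free-algebra argument the paper uses in Section~\ref{sect:gettingnu} (the proof of the ``if'' direction of Theorem~\ref{thm:mainresult}): there one works with the subalgebra of $\alg F^{k+2}$ generated by the near-unanimity tuples, here you work with the instance whose solutions are polymorphisms and whose near-unanimity partial assignment survives consistency because coordinate projections are global solutions. The pigeonhole step (at most $k$ scope variables, $k+1$ available projections) is exactly the right observation, and it is the same counting that drives Propositions~\ref{prop:sw} and~\ref{prop:sens}.
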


Using this Theorem we can conclude that $\csp(\rel D_{2\text{colour}})$ from Example~\ref{ex:colour} has strict width 2 since the ternary majority operation preserves the relation $\ne_2$.  In fact this operation preserves all binary relations over the set $\{0,1\}$. On the other hand, $\csp(\rel D_{\text{horn}})$ does not have strict width $k$ for any $k \geq 3$.

Following the algebraic approach to the $\csp$ we replace templates $\rel A$ with algebras $\alg A$.

\begin{definition}
An {\em algebra} $\alg A$ is a pair $(A, \mathcal F)$ where $A$ is a non-empty set, called the {\em universe} of $\alg A$ and $\mathcal F = (f_i \mid i \in I)$ is a set of finitary operations on $A$ called the {\em set of basic operations of $\alg A$.}  The function that assigns the arity of the operation $f_i$ to $i$ is called the {\em signature of $\alg A$}.  If $t(x_1, \dots, x_n)$ is a term in the signature of $\alg A$ then the interpretation of $t$ by $\alg A$ as an operation on $A$ is called a {\em term operation of $\alg A$} and is denoted by $t^{\alg A}$.

The \emph{$\csp$ over $\alg A$}, written $\csp(\alg A)$, is the class of $\csp$ instances whose constraint relations are amongst those relations over $A$ that are preserved by the operations of $\alg A$ (i.e., they are subuniverses of powers of $\alg A$).
\end{definition}
A number of important questions about the $\csp$ can be reduced to considering templates that have all of the singleton unary relations \cite{barto-krokhin-willard-how-to}; the algebraic counterpart to these types of templates are the  \emph{idempotent algebras}.
\begin{definition}
An operation $f:A^n \to A$ on a set $A$ is {\em idempotent} if $f(a, a, \dots, a) = a$ for all $a \in A$.  An algebra $\alg A$ is {\em idempotent} if all of its basic operations are.
\end{definition}
It follows that if $\alg A$ is idempotent then every term operation of $\alg A$ is an idempotent operation.
 As demonstrated in Example~\ref{Slupecki}, several of the results in this paper do not hold in the absence of idempotency.

The characterization of strict width in Theorem~\ref{thm:fv} has the following consequence in terms of algebras.
\begin{corollary} \label{cor:fv}
    Let $k > 1$ and
    let $\rel A$ be a finite relational structure with relations of arity at most $k$.
    Let $\alg A$ be the algebra with the same universe as $\rel A$ whose basic operations are exactly the polymorphisms of $\rel A$.  The following are equivalent:
    \begin{enumerate}
        \item $\alg A$  has a near unanimity term operation of arity $k+1$;
        \item in every $(k,k+1)$-instance over $\alg A$, every partial solution extends to a solution.
   \end{enumerate}
\end{corollary}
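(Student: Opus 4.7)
My plan is to derive the corollary directly from Theorem~\ref{thm:fv}, using the fact that the polymorphisms of $\rel A$ are by construction precisely the basic, and hence term, operations of $\alg A$. In particular, a $(k+1)$-ary near unanimity operation is a term operation of $\alg A$ if and only if it is a polymorphism of $\rel A$, so both implications reduce to matching condition~(2) with strict width $(k,k+1)$ of $\csp(\rel A)$.

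For (1)~$\Rightarrow$~(2), I would take a $(k+1)$-ary NU term operation $n$ of $\alg A$ together with a $(k,k+1)$-instance $\inst I$ over $\alg A$ and a partial solution $f$. Every constraint relation of $\inst I$, being a subuniverse of a power of $\alg A$, is preserved by $n$. Let $\rel B$ be the finite relational structure whose relations are exactly the constraint relations occurring in $\inst I$; then $n$ is a $(k+1)$-ary NU polymorphism of $\rel B$, so Theorem~\ref{thm:fv} applied to $\rel B$ gives that $\csp(\rel B)$ has strict width $(k,k+1)$. Since $\inst I$ is a $(k,k+1)$-instance in $\csp(\rel B)$, the partial solution $f$ extends to a full solution of $\inst I$.

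For (2)~$\Rightarrow$~(1), I would observe that every relation of $\rel A$ is, by definition, preserved by all polymorphisms of $\rel A$ and is therefore a subuniverse of a power of $\alg A$. Consequently every instance in $\csp(\rel A)$ is also an instance in $\csp(\alg A)$, and in particular any $(k,k+1)$-instance arising from $(k,k+1)$-consistency on an instance in $\csp(\rel A)$ is a $(k,k+1)$-instance over $\alg A$. By~(2) its partial solutions extend, which is exactly the statement that $\csp(\rel A)$ has strict width $(k,k+1)$. Theorem~\ref{thm:fv} then supplies a $(k+1)$-ary NU polymorphism of $\rel A$, which is by construction a basic, and hence term, operation of $\alg A$.

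The only real subtlety, and the step I expect to require the most care, is the asymmetry in (1)~$\Rightarrow$~(2): constraint relations of an instance over $\alg A$ need not be among the relations of $\rel A$ and can be arbitrary finite-arity subuniverses of powers of $\alg A$. What rescues the argument is that the $(k+1)$-ary NU term operation automatically preserves every such subuniverse, so I can bundle the constraints of $\inst I$ into the ad hoc finite template $\rel B$ and invoke Theorem~\ref{thm:fv} there without any additional work.
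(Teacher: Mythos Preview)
Your proposal is correct and matches the paper's own treatment: the corollary is stated there without a separate proof, simply as a consequence of Theorem~\ref{thm:fv}, and your argument spells out exactly that derivation. The one point you flag as delicate---that instances over $\alg A$ may use constraint relations not present in $\rel A$---is handled correctly by passing to the ad hoc template $\rel B$, and since $\inst I$ is $k$-uniform over a finite domain, $\rel B$ meets the hypotheses of Theorem~\ref{thm:fv}.
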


The implication ``1 implies 2'' in Corollary~\ref{cor:fv} remains valid for general algebras, not necessarily coming from finite relational structures with restricted arities of relations. However, the converse implication fails even if $\alg A$ is assumed to be finite and idempotent.

\begin{example}
Consider the rather trivial algebra $\alg A$ that has universe $\{0,1\}$ and no basic operations.  If $\inst I$ is a $(2,3)$-instance over $\alg A$ then since, as noted just after Theorem~\ref{thm:fv},  every binary relation over $\{0,1\}$ is invariant under the ternary majority operation on $\{0,1\}$ it follows that every partial solution of $\inst I$ can be extended to a solution.  Of course, $\alg A$ does not have a near unanimity term operation of any arity.
\end{example}

What this example demonstrates is that in general, for a fixed $k$, the $k$-ary constraint relations arising from an algebra do not capture that much of the structure of the algebra.  Example~\ref{Slupecki} provides further evidence for this.

Our first theorem shows that for finite idempotent algebras $\alg A$, by considering a slightly bigger set of $(k,k+1)$-instances, over $\csp(\alg A^2)$, rather than over $\csp(\alg A)$, we can detect the presence of a $(k+1)$-ary near unanimity term operation. Moreover, it is enough to consider only instances with $k+2$ variables.
We note that every $(k,k+1)$-instance over $\alg A$ can be easily encoded as a $(k,k+1)$-instance over $\alg A^2$.
\begin{theorem}\label{thm:swshort}
    Let $\alg A$ be a finite, idempotent algebra and $k>1$. The following are equivalent:
    \begin{enumerate}
        \item $\alg A$ (or equivalently $\alg A^2$) has a near unanimity term operation of arity $k+1$;
        \item in every $(k,k+1)$-instance over $\alg A^2$, every partial solution extends to a solution;
        \item in every $(k,k+1)$-instance over $\alg A^2$ \textbf{on $k+2$ variables}, every partial solution extends to a solution.
    \end{enumerate}
\end{theorem}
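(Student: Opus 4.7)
The implications $(1) \Rightarrow (2)$ and $(2) \Rightarrow (3)$ are straightforward. For $(1) \Rightarrow (2)$, a $(k{+}1)$-ary NU term operation of $\alg A$ acts coordinatewise as one for $\alg A^2$, and the forward direction of Theorem~\ref{thm:fv} (valid for general algebras, as noted after Corollary~\ref{cor:fv}) then yields the partial-solution extension property for every $(k,k{+}1)$-instance over $\alg A^2$. The implication $(2) \Rightarrow (3)$ is immediate by restriction.

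For the principal direction $(3) \Rightarrow (1)$, my plan is to argue by contrapositive: assuming $\alg A$ has no $(k{+}1)$-ary NU term, I will construct a $(k,k{+}1)$-instance on exactly $k{+}2$ variables over $\alg A^2$ whose partial solution does not extend, contradicting $(3)$. The absence of an NU term translates, via a standard universal-algebraic reduction using finiteness and idempotency, into the assertion that there is a pair $(a,b) \in A \times A$ for which the diagonal tuple $(a,\ldots,a) \in A^{k+1}$ lies outside the subalgebra $\alg S_{a,b} \leq \alg A^{k+1}$ generated by the $k{+}1$ near-constant tuples $g_i = (a,\ldots,b,\ldots,a)$ (with $b$ in slot $i$).

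The instance uses variables $v_1,\ldots,v_{k+1},w$. A key observation is that a $k$-ary constraint over $\alg A^2$ corresponds to a subalgebra of $\alg A^{2k}$, giving enough ``room'' to encode NU-like conditions. I would define each constraint $R_S$ (for a $k$-subset $S$ of variables) as the projection onto $S$ of a common $(k{+}2)$-ary relation $R \leq (\alg A^2)^{k+2}$, where $R$ is generated by tuples that encode, in the two coordinates of $\alg A^2$, pairs of near-constant tuples $g_i, g_j$ together with the target diagonal value $(a,a)$ at $w$. After running the $(k,k{+}1)$-consistency algorithm to guarantee the resulting instance is a genuine $(k,k{+}1)$-instance, the natural partial solution assigns each $v_\ell$ a pair in $A^2$ reflecting the two encoded scenarios. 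Under hypothesis $(3)$, this partial solution extends, producing a term $t$ of $\alg A$ such that $t(g_i) = a$ and $t(g_j) = a$ simultaneously. Ranging over all pairs $(i,j)$ and $(a,b)$, the resulting collection of terms combines—via a standard clone-theoretic product argument, valid because $\alg A$ is finite and idempotent—into a single $(k{+}1)$-ary NU term of $\alg A$, contradicting our assumption.

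The main obstacle will be the careful choice of generators of $R$ so that (a) the instance remains a genuine $(k,k{+}1)$-instance after running the consistency closure, and (b) the non-extendability of the partial solution is preserved under this closure in the absence of an NU term. The crucial role of $\alg A^2$ (rather than $\alg A$) is to supply the extra ``coordinate of freedom'' needed to rule out the pathologies—such as the trivial $\{0,1\}$-algebra with no basic operations mentioned in the example following Corollary~\ref{cor:fv}—for which the analogous characterization over $\alg A$ alone fails. A secondary technical issue is the clone-theoretic combination of per-pair witnesses into a single NU operation, which is routine given idempotency but would need to be spelled out.
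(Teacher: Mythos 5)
The decisive gap is in your argument for $(3)\Rightarrow(1)$. You assert as ``standard'' that the absence of a $(k{+}1)$-ary NU term yields a single pair $(a,b)$ such that the diagonal $(a,\dots,a)$ lies outside the subalgebra of $\alg A^{k+1}$ generated by the near-constant tuples $g_i$. That uniform-pair criterion is not the standard one, and it is not a consequence of ``no NU term'': the correct criterion for a finite idempotent algebra --- the one this paper actually relies on via Theorem~\ref{thm:localBP}, whose proof in Appendix~\ref{sec:AlgebraProofs} uses the Horowitz-style condition --- requires, for \emph{arbitrary} tuples $(a_1,\dots,a_{k+1})$ and $(b_1,\dots,b_{k+1})$, that $(a_1,\dots,a_{k+1})$ lie in the subalgebra generated by the tuples obtained by swapping $b_i$ into the $i$-th slot; equivalently, that \emph{every} subalgebra of $\alg A^{k+1}$ be determined by its $k$-ary projections. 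Your hypothesis covers only the special case $a_1=\dots=a_{k+1}$, $b_1=\dots=b_{k+1}$, and the per-pair witnessing terms cannot in general be merged into one NU term merely by finiteness and idempotency (this is exactly the gap between ``a term working for each pair'' and ``one term working for all mixed choices simultaneously''). So the contrapositive target you aim at is unjustified.

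Even granting a correct witness, your construction defers precisely the points where the work lies: you plan to run the $(k,k{+}1)$-consistency algorithm to make the system a genuine $(k,k{+}1)$-instance and then hope the designated partial solution remains a non-extendable partial solution, but enforcing consistency can prune the very tuples you need, and showing it does not is essentially the theorem itself. Moreover, an extension of a partial solution produces an element of a relation, not a term; to extract a term the conclusion must be membership of a specified tuple in a \emph{generated} subalgebra. The paper avoids both issues and does not argue by contrapositive: Proposition~\ref{prop:sw} takes an arbitrary $\alg R \leq \alg A^{k+1}$ and builds an instance on variables $x_1,\dots,x_{k+1}$ (over $A$) plus one variable $y_{12}$ (over $A^2$) whose constraints are projections of $R$ and of the relation coupling $y_{12}$ with $(x_1,x_2)$; this instance is sensitive \emph{by construction} (every constraint tuple comes from a tuple of $R$), hence already a $(k,k{+}1)$-instance on $k+2$ variables, and condition (3) applied to a tuple $\vc{b}$ whose $k$-ary projections lie in those of $R$ forces $y_{12}=(b_1,b_2)$ and hence $\vc{b}\in R$. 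Thus every subalgebra of $\alg A^{k+1}$ is determined by its $k$-ary projections, and Theorem~\ref{thm:localBP} (Baker--Pixley in the finite case) yields the $(k{+}1)$-ary NU term. A minor further point: for $(1)\Rightarrow(2)$ the appropriate reference is Theorem~\ref{thm:BP} rather than Theorem~\ref{thm:fv}, which is stated only for finite relational structures with relations of arity at most $k$.
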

In Theorem ~\ref{thm:swfull} we extend our result to infinite idempotent algebras by working with local near unanimity term operations.

Going back the original definition of strict width: ``it should be possible to obtain a solution by greedily assigning values to the variables one at a time while satisfying the inferred $k$-constraints''
we note that the requirement that the assignment should be greedy is rather restrictive.
The main theorem of this paper investigates
an arguably more natural
concept where the assignment need not be greedy.
\begin{definition}
An instance of the $\csp$ is called \emph{sensitive}, if removing any tuple from any constraining relation invalidates some solution of the instance.
\end{definition}
In other words, an instance is sensitive if every tuple in every constraint of the instance extends to a solution.
For $(k, k+1)$-instances, being sensitive is equivalent to the instance being  a $(k,n)$-instance, where $n$ is the number of variables present in the instance.
We provide the following characterization.
\begin{theorem}\label{thm:sensshort}
    Let $\alg A$ be a finite, idempotent algebra
    and $k>1$. The following are equivalent:
    \begin{enumerate}
        \item $\alg A$~(or equivalently $\alg A^2$) has a near unanimity term operation of arity $k+2$;
        \item every $(k,k+1)$-instance over $\alg A^2$ is sensitive;
        \item every $(k,k+1)$-instance over $\alg A^2$  \textbf{on $k+2$ variables} is sensitive.
    \end{enumerate}
\end{theorem}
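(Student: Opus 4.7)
We prove the three-way equivalence via the cycle $(1) \Rightarrow (2) \Rightarrow (3) \Rightarrow (1)$. The implication $(2) \Rightarrow (3)$ is immediate.

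For $(1) \Rightarrow (2)$, suppose $\alg A$ has a $(k+2)$-ary near unanimity term $n$; it acts componentwise on $\alg A^2$ as a near unanimity operation of the same arity. Let $\inst I = (V, \constraints C)$ be a $(k, k+1)$-instance over $\alg A^2$ and $t \in C_W$ a tuple in some $k$-ary constraint. We extend $t$ to a solution by induction on $|V|$. The base case $|V| = k+1$ follows at once from $(k, k+1)$-consistency. The crucial case is $|V| = k+2$: here $(k, k+1)$-consistency provides, for each index $i$ with $x_i \notin W$, a partial solution on the $(k+1)$-subset $V \setminus \{x_i\}$ extending $t$. Together with auxiliary $(k+1)$-partial solutions derived from other $(k+1)$-subsets, these are combined componentwise via $n$. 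The near unanimity identity $n(a,\dots,a,b,a,\dots,a) = a$ guarantees that at each variable the combined value matches the intended extension of $t$, and since the solution set $S \leq (\alg A^2)^V$ is closed under $n$ and is determined by its $(k+1)$-projections (Baker--Pixley), the result lies in $S$. For $|V| > k+2$, the inductive step removes one variable, applies the induction hypothesis to the subinstance, and then bridges across the removed variable using the $|V| = k+2$ case.

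For $(3) \Rightarrow (1)$ we argue contrapositively. If $\alg A$ has no $(k+2)$-ary near unanimity term then, by evaluating the free algebra on two generators at a suitable pair $(a, b) \in A^2$, we obtain a subpower $R \leq \alg A^{k+2}$ generated by the tuples $e_i = (a, \dots, a, b, a, \dots, a)$ (with $b$ in the $i$-th coordinate) such that the diagonal $d = (a, a, \dots, a)$ does not belong to $R$; yet every $(k+1)$-projection of $d$ coincides with the corresponding projection of some $e_i$ and hence lies in the projection of $R$. From this near unanimity-failure witness we construct a $(k, k+1)$-instance $\inst J$ on variables $y_1, \dots, y_{k+2}$ over $\alg A^2$ that fails to be sensitive. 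The constraints are defined as $W$-projections of a subalgebra of $(\alg A^2)^{k+2}$ designed to exploit the doubling: each variable encodes a pair of $\alg A$-values pairing a candidate element from $R$ with a candidate element aligned with $d$, in such a way that $(k, k+1)$-consistency holds (using that $(k+1)$-projections of $d$ lie in $R$'s) while a specific tuple corresponding to $d$ in some $C_W$ has no extension to a full solution (since any such extension would realize $d$ as an element of $R$, contradicting $d \notin R$).

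The principal difficulty lies in $(3) \Rightarrow (1)$: over $\csp(\alg A)$ alone the $(k+1)$-projection property of $d$ would make $d$ itself a global solution and sensitivity would hold trivially, so the doubled structure $\alg A^2$ is essential to separate the role of $R$ (on one coordinate) from that of $d$ (on the other) and to convert the purely algebraic obstruction $d \notin R$ into a concrete failure of sensitivity on just $k+2$ variables.
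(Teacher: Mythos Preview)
Your cycle structure is right and $(2)\Rightarrow(3)$ is indeed immediate, but both nontrivial implications have genuine gaps.

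\medskip
\textbf{The implication $(1)\Rightarrow(2)$.} Your argument for the base case $|V|=k+2$ does not go through. You invoke Baker--Pixley on the solution set $S\le (\alg A^2)^{V}$, noting that with a $(k+2)$-ary near unanimity term $S$ is determined by its $(k+1)$-ary projections. The problem is that you have no access to $\proj_I(S)$ for $|I|=k+1$: the instance only gives you $k$-ary constraints, and $(k,k+1)$-consistency tells you that every $k$-tuple extends to a \emph{local} $(k+1)$-solution, not that every such local solution lies in $\proj_I(S)$. Concretely, with $W=\{x_1,\dots,x_k\}$ and $|V|=k+2$, there are only two $(k+1)$-subsets containing $W$, so you obtain only two partial solutions extending $t$; the ``auxiliary'' partial solutions on the remaining $(k+1)$-subsets $V\setminus\{x_j\}$ for $j\le k$ do not contain $x_j$ and hence do not agree with $t$ there, so combining via $n$ cannot recover $t$ on $W$.

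Your inductive step $|V|>k+2$ amounts to: extend $t$ to a solution on $V\setminus\{v\}$ by induction, then extend across $v$. Extending a $(|V|-1)$-partial solution by one variable is exactly the strict width step; Bergman's argument for that requires a $(k+1)$-ary near unanimity term, and it genuinely fails with only a $(k+2)$-ary one (this gap between sensitivity and strict width is the point of the theorem). The paper's proof of $(1)\Rightarrow(2)$ is accordingly far more elaborate: it introduces patterns and $k$-trees, defines a notion of ``strong realizability'', and proves inductively that all $k$-trees are strongly realizable via two constructions, one of which rests on a new loop lemma (Theorem~\ref{thm:loop_inv}). None of this machinery is avoidable by the direct combination you sketch.

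\medskip
\textbf{The implication $(3)\Rightarrow(1)$.} Your contrapositive outline is in the right spirit, but the sentence ``constraints are defined as $W$-projections of a subalgebra of $(\alg A^2)^{k+2}$ designed to exploit the doubling'' is not a construction. The paper's Proposition~\ref{prop:sens} gives a specific and rather delicate instance: the $k+2$ variables are $x_5,\dots,x_{k+2}$ (taking values in the diagonal of $A^2$) together with $y_{12},y_{34},y_{13},y_{24}$ (taking values in $A^2$), one ``special'' constraint on $\{y_{12},y_{34},x_5,\dots,x_{k+2}\}$ is built from $R^*$, and all remaining constraints are built from $R$. The crucial trick is the re-pairing of indices ($\{1,2\},\{3,4\}$ versus $\{1,3\},\{2,4\}$), which is what forces any sensitive solution through the special constraint to actually witness membership in $R$. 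Verifying that this instance is a $(k,k+1)$-instance is itself a nontrivial claim. Your description does not supply any of this, and the closing remark that $\alg A^2$ is ``essential'' does not substitute for the construction.
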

\noindent Exactly as in Theorem~\ref{thm:swshort} we can consider infinite algebras at the cost of using local near unanimity term operations (see Theorem~\ref{thm:sensfull}).

In conclusion we investigate a  natural property of instances motivated by the definition of strict width and
provide a characterization of this new condition in algebraic terms.
A surprising conclusion is that the new concept is, in fact, very close to the strict width concept, i.e., for a fixed $k$ one characterization is equivalent to a near unanimity operation of arity $k+1$ and the second of arity $k+2$.

\subsection{Algebraic viewpoint}
Our work  has as an antecedent the papers  of Baker and Pixley \cite{BakerPixley} and of Bergman \cite{Bergman1977} on algebras having near unanimity term operations. In these papers the authors considered subalgebras of products of algebras and systems of projections associated with them.  Baker and Pixley showed that in the presence of a near unanimity term operation, such a subalgebra is closely tied with its projections onto small sets of coordinates.

\begin{definition}
A \emph{variety of algebras} is a class of algebras of the same signature that is closed under taking homomorphic images, subalgebras, and direct products. For $\alg A$ an algebra, $\V A$ denotes the smallest variety that contains $\alg A$ and is called the \emph{variety generated by $\alg A$}. A variety $\var V$ has a near unanimity term of arity $k+1$ if there is some $(k+1)$-ary term in the signature of $\var V$ whose interpretation in each member of $\var V$ is a near unanimity operation.
\end{definition}

Here is one version of the Baker-Pixley Theorem:
\begin{theorem}[see Theorem 2.1 from \cite{BakerPixley}]\label{thm:BPold}  Let $\alg A$ be an algebra and $k > 1$.  The following are equivalent:
\begin{enumerate}
  \item $\alg A$ has a $(k+1)$-ary near unanimity term operation;
  \item\label{condition2} for every $r > k$ and every $\alg A_i \in \V A$, $1 \le i \le r$, every subalgebra $\alg R$ of $\prod_{i = 1}^r \alg A_i$ is
    \textbf{uniquely} determined by the projections of $R$ on all products $A_{i_1} \times \dots \times  A_{i_{k}}$ for $1 \le i_1 < i_2 < \dots < i_{k}\le r$;
  \item  the same as condition~\ref{condition2}, with $r$ set to $k+1$.
\end{enumerate}
\end{theorem}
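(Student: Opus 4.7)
The plan is to prove $(1) \Rightarrow (2) \Rightarrow (3) \Rightarrow (1)$. The middle implication is immediate, and the other two require the main effort.

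For $(1) \Rightarrow (2)$, let $n$ be a $(k+1)$-ary near unanimity term of $\alg A$; since $n$ is given by a term in the signature, it realises a near unanimity operation in every member of $\V A$, hence in every $\alg A_i$ and in $\prod_i \alg A_i$. Fix $\alg R \le \prod_{i=1}^r \alg A_i$ and suppose $\tuple a = (a_1, \dots, a_r)$ has all of its $k$-projections in the corresponding $k$-projections of $R$; I want to conclude $\tuple a \in R$. The case $r = k+1$ is the heart of the argument: for each $j \in \{1, \dots, k+1\}$, the hypothesis yields $\tuple b^{(j)} \in R$ agreeing with $\tuple a$ on $\{1,\dots,k+1\} \setminus \{j\}$, and $n$ applied componentwise to $\tuple b^{(1)}, \dots, \tuple b^{(k+1)}$ produces $\tuple a$ in coordinate $i$ because the only possibly-deviant input there is $\tuple b^{(i)}_i$, while the other $k$ inputs all equal $a_i$. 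For $r > k+1$, induct on $r$: by the inductive hypothesis applied to each projection $\pi_{[r] \setminus \{j\}}(\alg R)$, there exist tuples $\tuple b^{(j)} \in R$, for $j = 1, \ldots, k+1$, with $\tuple b^{(j)}$ agreeing with $\tuple a$ on $[r] \setminus \{j\}$; applying $n$ to these $k+1$ tuples produces $\tuple a$ in coordinates $1, \ldots, k+1$ exactly as before, and in each coordinate $i > k+1$ by idempotency of $n$, since all of its inputs equal $a_i$ there.

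For the key implication $(3) \Rightarrow (1)$, I would fabricate the NU term from inside a free algebra. Let $\alg F$ denote the free algebra in $\V A$ on two generators $x, y$, and let $\alg R \le \alg F^{k+1}$ be the subalgebra generated by $\tuple g^{(1)}, \dots, \tuple g^{(k+1)}$, where $\tuple g^{(i)}$ has $y$ in coordinate $i$ and $x$ in every other coordinate. Since $\alg F \in \V A$, condition~(3) applies. For each $k$-subset $S = \{1,\dots,k+1\}\setminus\{j\}$, the generator $\tuple g^{(j)}$ projects to $(x,\dots,x)$ on $S$, so the constant tuple $(x,\dots,x) \in F^{k+1}$ has every $k$-projection in $\pi_S(R)$; condition~(3) therefore forces $(x,\dots,x) \in R$. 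Being in $R$ means there is a $(k+1)$-ary term $n$ in the signature of $\alg A$ with $n(\tuple g^{(1)}, \dots, \tuple g^{(k+1)}) = (x,\dots,x)$ in $\alg F^{k+1}$. Reading coordinate $i$ gives $n(x,\dots,x,y,x,\dots,x) = x$ in $\alg F$ (with $y$ in position $i$), and freeness of $\alg F$ on $\{x,y\}$ in $\V A$ promotes this to an identity of $\V A$; hence $n$ is a $(k+1)$-ary near unanimity term of $\alg A$.

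The step I expect to be the most delicate is the construction in $(3) \Rightarrow (1)$: identifying the right ambient algebra $\alg F^{k+1}$, the right subalgebra $\alg R$, and the right target tuple $(x, \dots, x)$ so that condition~(3) yields precisely the desired NU identities. Once this is set up, translating membership in $R$ into a term and translating the term back into identities in the variety uses only the freeness of $\alg F$; the inductive bookkeeping in $(1) \Rightarrow (2)$ is routine by contrast, its one subtlety being that one must apply the induction hypothesis to the $(r-1)$-fold projections $\pi_{[r] \setminus \{j\}}(\alg R)$ before invoking $n$ back in the original $r$-fold product.
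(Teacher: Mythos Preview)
The paper does not actually prove Theorem~\ref{thm:BPold}; it is quoted as a classical result of Baker and Pixley and used as a black box. Your proof is correct and is essentially the standard Baker--Pixley argument.

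For context, the paper does prove two closely related statements, and it is worth noting how your argument lines up with those. For the analogue of $(1)\Rightarrow(2)$ in Theorem~\ref{thm:localBP}, the paper organises the induction slightly differently: rather than inducting on $r$ and passing to the projected subalgebras $\pi_{[r]\setminus\{j\}}(\alg R)$, it fixes $r$ and $\alg R$ once and for all and inducts on the size $n$ of a coordinate set $J$, showing that some element of $R$ agrees with $\vc a$ on $J$. The two inductions are interchangeable and yield the same computation with the near unanimity term. For the analogue of $(3)\Rightarrow(1)$, the paper's proof of Theorem~\ref{thm:localBP} does \emph{not} use the free algebra; instead it invokes a local term condition due to Horowitz, because in that setting only \emph{local} near unanimity operations are being produced. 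Your free-algebra construction is, however, exactly the device the paper deploys in the proof of the ``if'' direction of Theorem~\ref{thm:mainresult} (with $k+2$ in place of $k+1$), so your choice of method is entirely in the spirit of the paper.
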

In other words, an algebra has a $(k+1)$-ary near unanimity term operation if and only if
every subalgebra of a product of algebras from $\V A$ is uniquely determined by its system of $k$-fold projections into its factor algebras.
A natural question, extending the result above, was investigated by Bergman~\cite{Bergman1977}:
when does a given ``system of $k$-fold projections'' arise from a product algebra?

Note that such a system can be viewed as a $k$-uniform $\csp$ instance:
indeed, following the notation of Theorem~\ref{thm:BPold}, we can introduce a variable $x_i$ for each $i\leq r$
and a constraint $((x_{i_1},\dotsc,x_{i_k});\proj_{i_1,\dotsc,i_k} R)$ for each $1 \le i_1 < i_2 < \dots < i_{k}\le r$.
In this way the original relation $R$ consists of solutions of the created instance (but in general will not contain all of them).
In this particular instance, different variables can be evaluated in different algebras.
Note that the instance is sensitive, if and only if it ``arises from a product algebra'' in the sense investigated by Bergman.

We will say that $\inst I$ is a $\csp$ instance  {\em over the variety $\var V$ (denoted $\inst I \in \csp(\var V)$)} if all the constraining relations of $\inst I$ are algebras in $\var V$.
In the language of the $\csp$, Bergman proved the following:

\begin{theorem}[\cite{Bergman1977}]\label{thm:BergmanNew}
If $\var V$ is a variety that has a $(k+1)$-ary near unanimity term
then every $(k,k+1)$-instance over $\var V$ is sensitive.
\end{theorem}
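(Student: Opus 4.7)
The plan is to prove the statement by induction on $|V|$. The base case $|V|=k+1$ is immediate: sensitivity of a $(k,k+1)$-instance on $k+1$ variables is precisely the $(k,k+1)$-consistency hypothesis.

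For the inductive step with $|V|>k+1$, fix a $k$-subset $W\subseteq V$ and a tuple $\tuple a\in C_W$; we seek a solution of $\inst I$ extending $\tuple a$. For every $y\in V\setminus W$, the inductive hypothesis applied to the restricted instance $\restr{\inst I}{V\setminus\{y\}}$ (still a $(k,k+1)$-instance over $\var V$ on $|V|-1$ variables) gives a solution $f_y\colon V\setminus\{y\}\to\bigsqcup_{v\in V} A_v$ with $\restr{f_y}{W}=\tuple a$. We merge these solutions into a single candidate $f$ using the $(k+1)$-ary near unanimity term $n$ of $\var V$: choose $k+1$ indices $y_1,\dots,y_{k+1}\in V$ and define $f$ coordinate-wise by an NU combination of the $f_{y_i}$-values, with ad hoc adjustments at the coordinates $v=y_i$ where $f_{y_i}(v)$ is undefined.

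The verification that $f$ is a solution rests on two facts: every constraint relation $C_{W'}$ is a subalgebra of $\prod_{w\in W'}A_w$ in $\var V$, hence closed under $n$; and the near unanimity identity $n(a,\dots,a,b,a,\dots,a)=a$ absorbs a single deviant coordinate. When $W'\cap\{y_1,\dots,y_{k+1}\}=\emptyset$ each $\restr{f_{y_i}}{W'}$ lies in $C_{W'}$, so the NU combination $\restr{f}{W'}$ does as well. When $W'$ meets some $y_i$, the identity handles the coordinate(s) where a candidate has been substituted, again placing $\restr{f}{W'}$ in $C_{W'}$.

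The main obstacle is the careful execution of this merge: one must arrange the choice of $y_1,\dots,y_{k+1}$ so that each $k$-subset $W'$ meets them in at most one element, which is straightforward for large $|V|$ but requires care when $|V|\leq 2k$. In that regime one enlarges the pool of candidate solutions by invoking Theorem~\ref{thm:BPold} on suitable $(k+1)$-subsets — reconstructing solutions from their $k$-fold projections (the $C_W$'s), which Baker--Pixley guarantees uniquely determine the solution set of $\restr{\inst I}{V'}$ for $|V'|=k+1$. The variety hypothesis, ensuring that $n$ is a single term operation acting uniformly on every $A_v$ and every $C_{W'}$, is what makes this combinatorial bookkeeping go through.
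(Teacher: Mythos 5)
Your overall plan---induct on $|V|$ and merge the candidate solutions $f_y$ with the near unanimity term---breaks down at the merging step, and the failure is not just bookkeeping. The arrangement you rely on, that every constrained $k$-set $W'$ meets $\{y_1,\dots,y_{k+1}\}$ in at most one element, is impossible no matter how large $|V|$ is: a $(k,k+1)$-instance is $k$-uniform, so \emph{every} $k$-subset of variables is a constraint scope, in particular the $k$-subsets of $\{y_1,\dots,y_{k+1}\}$ themselves, and these meet your index set in $k\geq 2$ elements. For such a scope the verification collapses: at a coordinate $y_i$ the value $f_{y_i}(y_i)$ is undefined, so the $i$-th argument of the NU combination must be replaced by an ad hoc tuple, and a scope containing two or more of the $y_j$'s then sees, at a single coordinate, two or more arguments deviating from the intended values; the near unanimity identity absorbs only one deviant argument per coordinate. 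Moreover, the ``ad hoc adjustment'' defining $f(y_i)$ must be a single value that works simultaneously for all constraints through $y_i$, and nothing in your construction (nor in the vague appeal to Theorem~\ref{thm:BPold} when $|V|\leq 2k$) produces such a value. Your induction hypothesis is also too weak to supply it: what is needed at $y_i$ is that a partial solution on $V\setminus\{y_i\}$ extends to the variable $y_i$, which is an extension property, not sensitivity of a smaller instance.

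The paper sidesteps exactly this by proving the stronger Theorem~\ref{thm:BP} by induction on the size of a set $W$ of already evaluated variables: any solution of $\inst I_{|W}$ extends over any one additional variable $v$. There the $k+1$ merged solutions $f_1,\dots,f_{k+1}$ all live on sets $(W\setminus\{v_i\})\cup\{v\}$ containing the new variable, so $f(v)=n(f_1(v),\dots,f_{k+1}(v))$ needs no adjustment; and when checking a constraint $U\ni v$, each substituted argument deviates only at its own omitted variable $v_i\in U$, a different coordinate for each $i$, so the NU identity applies coordinate-wise. To salvage your scheme you would have to strengthen the inductive statement to this one-variable-at-a-time extension property---at which point you have essentially reconstructed the paper's proof of Theorem~\ref{thm:BP}, of which the statement at hand is an immediate corollary.
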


In commentary that Bergman provided on his proof of this theorem he noted that a stronger conclusion could be drawn from it and he proved the following theorem.  We note that this theorem anticipates the results from~\cite{feder-vardi} and~\cite{jeavons-cohen-cooper} dealing with templates having near unanimity operations as polymorphisms.

\begin{theorem}[\cite{Bergman1977}]\label{thm:BP}
  Let $k > 1$ and   $\var V$ be a variety.  The following are equivalent:
  \begin{enumerate}
  \item $\var V$ has a $(k+1)$-ary near unanimity term;
  \item  any partial solution of a $(k,k+1)$-instance over $\var V$  extends to a solution.
  \end{enumerate}
\end{theorem}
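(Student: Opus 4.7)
The plan is to treat the two directions separately, using Theorem~\ref{thm:BergmanNew} and the Baker--Pixley theorem (Theorem~\ref{thm:BPold}) as the main tools.

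For $(1)\Rightarrow(2)$, let $\inst I=(V,\constraints C)$ be a $(k,k+1)$-instance over $\var V$ and let $R\le\prod_{v\in V}A_v$ be its set of solutions, a subalgebra of the product. By Theorem~\ref{thm:BergmanNew}, $\inst I$ is sensitive, so $\proj_S R=C_S$ for every $k$-subset $S\subseteq V$. Fix a partial solution $g$ on $W\subseteq V$; when $|W|<k$, first extend $g$ to some $k$-subset using the consistent projections $P_W=\proj_W C_{W'}$ (which do not depend on the $k$-superset $W'$, by $(k,k+1)$-consistency), so assume $|W|\ge k$. Apply Baker--Pixley to the subalgebra $\proj_W R\le\prod_{v\in W}A_v$: its membership is determined by its $k$-ary projections, and each such projection equals $\proj_S R=C_S$ by sensitivity. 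Hence a tuple lies in $\proj_W R$ if and only if its restriction to every $k$-subset $S\subseteq W$ lies in $C_S$ --- which is exactly the definition of a partial solution on $W$. In particular $g\in\proj_W R$, so $g$ extends to a full solution.

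For $(2)\Rightarrow(1)$, I would work in the free algebra $\alg F=F_{\var V}(x,y)$ and with the tuples $\mathbf{t}_j\in\alg F^{k+1}$ whose $j$-th coordinate is $y$ and whose other coordinates are $x$ (for $j=1,\dots,k+1$). A $(k+1)$-ary NU term exists in $\var V$ iff the diagonal $(x,\dots,x)$ lies in $R:=\langle\mathbf{t}_1,\dots,\mathbf{t}_{k+1}\rangle$. Build a $(k,k+1)$-instance on $k+2$ variables $v_1,\dots,v_{k+1},u$ where each $v_i$ has domain $\alg F$ and $u$ has domain $R$ (itself in $\var V$). For each $k$-subset $S\subseteq\{v_1,\dots,v_{k+1}\}$ use $C_S:=\proj_S R$; for each $k$-subset $S\ni u$, with $S\setminus\{u\}=\{v_{i_1},\dots,v_{i_{k-1}}\}$, use the graph-of-projection constraint $\{(r,r[i_1],\dots,r[i_{k-1}])\mid r\in R\}$. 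Verification of $(k,k+1)$-consistency is routine: each tuple in any constraint comes from some $\mathbf{r}\in R$, and $\mathbf{r}$ itself, together with the assignment $u:=\mathbf{r}$, extends to a solution of any $(k+1)$-restriction. Now apply hypothesis $(2)$ to the partial solution on $\{v_1,\dots,v_{k+1}\}$ sending every $v_i$ to $x$ (it is a partial solution because $(x,\dots,x)=\proj_S\mathbf{t}_j\in C_S$ for the unique $j\notin S$). The extension must assign $u$ some $r\in R$ with $r[i]=x$ for every $i\in\{1,\dots,k+1\}$, forcing $r=(x,\dots,x)\in R$ and delivering the desired NU term.

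The main obstacle is the design of the $(2)\Rightarrow(1)$ instance. A naive attempt using only the variables $v_1,\dots,v_{k+1}$ with the constraints $\proj_S R$ fails: extendability would only force the extension into the a~priori larger Baker--Pixley closure $R':=\{\mathbf{s}\mid\proj_S\mathbf{s}\in\proj_S R\text{ for every }k\text{-subset }S\}$, which need not coincide with $R$ in the absence of NU. The auxiliary variable $u$ with domain $R$ together with the graph-of-projection constraints closes this gap by insisting that every extension encode a genuine element of $R$. Checking $(k,k+1)$-consistency of the augmented instance and confirming that the chosen partial solution really is one (i.e.\ lies in every $\proj_S R$) are the main technical points of the argument.
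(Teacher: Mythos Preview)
Your proposal is correct, and both directions take a genuinely different route from the paper's proof.

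For $(1)\Rightarrow(2)$, the paper argues by a direct induction on $|W|$: given a partial solution $f$ on $W$ with $|W|=r\ge k$ and a new variable $v$, it drops each of $k+1$ vertices $v_i\in W$ in turn, extends the resulting smaller partial solution to include $v$ by the inductive hypothesis, and then combines the $k+1$ proposed values $f_i(v)$ using the near unanimity term. Your argument instead invokes Theorem~\ref{thm:BergmanNew} (sensitivity) to identify $\proj_S R=C_S$ for every $k$-subset $S$, and then Baker--Pixley to conclude that the partial-solution condition on $W$ is exactly membership in $\proj_W R$. Your route is more modular and conceptually clean (it reduces the statement to two black boxes already present in the paper), while the paper's induction is self-contained and does not rely on the unproved Theorem~\ref{thm:BergmanNew}. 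One minor remark: your handling of the case $|W|<k$ (``first extend $g$ to some $k$-subset'') tacitly assumes $g$ lies in the projection of some $k$-ary constraint, which is not guaranteed by the literal definition of partial solution; the paper sidesteps this by starting the induction at $r=k$.

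For $(2)\Rightarrow(1)$, both arguments work in the free algebra $\alg F=F_{\var V}(x,y)$ and build a $(k,k+1)$-instance on $k+2$ variables whose extendability forces $(x,\dots,x)\in R=\langle\mathbf t_1,\dots,\mathbf t_{k+1}\rangle$. The paper (via Proposition~\ref{prop:sw}) uses an auxiliary variable $y_{12}$ with domain $\alg F^2$ encoding only two coordinates, in keeping with the paper's $\alg A^2$ theme; you instead give $u$ the domain $R$ itself and use graph-of-projection constraints. Your construction is arguably more direct---the single variable $u$ packages the entire witness---whereas the paper's $\alg F^2$ variant shows that the square already suffices, which is precisely the point exploited elsewhere (Theorems~\ref{thm:swfull} and~\ref{thm:sensfull}).
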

\noindent We present a proof of this theorem in Appendix~\ref{sec:AlgebraProofs}.

Theorem~\ref{thm:BergmanNew} provides a partial answer to the question that Bergman posed in~\cite{Bergman1977}, namely that in the presence of a $(k+1)$-ary near unanimity term, a necessary and sufficient condition for a $k$-fold system of algebras to arise from a product algebra is that the associated $\csp$ instance is a $(k, k+1)$-instance.

In \cite{Bergman1977} Bergman asked whether the converse to Theorem~\ref{thm:BergmanNew} holds, namely,
that if all $(k,k+1)$-instances over a variety are sensitive,
must the variety have  a $(k+1)$-ary near unanimity term?
He provided examples that suggested that the answer is no, and we confirm this by proving that the condition is actually equivalent to the variety having a near unanimity term of arity $k+2$.
The main result of this paper, viewed from the algebraic perspective (but stated in terms of the $\csp$), is the following:

\begin{theorem}\label{thm:mainresult}
Let $k > 1$.  A variety $\var V$ has a $(k+2)$-ary near unanimity term if and only if each $(k, k+1)$-instance of the $\csp$ over $\var V$ is sensitive.
\end{theorem}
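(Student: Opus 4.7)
The plan is to prove the two implications separately. The forward direction --- $(k+2)$-ary NU implies sensitivity of every $(k,k+1)$-instance over $\var V$ --- slightly strengthens Bergman's Theorem~\ref{thm:BergmanNew}. The reverse direction --- sensitivity implies $(k+2)$-ary NU --- answers Bergman's question from \cite{Bergman1977} and carries the bulk of the new content. I would handle them by a reduction to Theorem~\ref{thm:BP} and a tailored instance construction, respectively.

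For the forward direction, given a $(k,k+1)$-instance $\inst I$ over $\var V$ and a tuple $t$ in some $k$-ary constraint $C_D$, I form the $(k+1)$-uniform instance $\inst I^+$ whose constraint $R_W$ on each $(k+1)$-subset $W$ of variables is the full solution set of $\inst I|_W$. The $(k,k+1)$-consistency of $\inst I$ ensures that the $k$-fold projections of the $R_W$ recover the $C_D$, so $\inst I$ and $\inst I^+$ have the same solution set, and $t$ is a partial solution of $\inst I^+$ on $k<k+1$ variables (vacuously satisfying no $(k+1)$-ary constraint). Once $\inst I^+$ is shown to be $(k+1,k+2)$-consistent, Theorem~\ref{thm:BP} applied with $k$ replaced by $k+1$ extends $t$ to a full solution, finishing this direction. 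The key step, and the main obstacle, is verifying $(k+1,k+2)$-consistency of $\inst I^+$: equivalently, that for every $(k+2)$-subset $W'$ of variables and every $(k+1)$-subset $W \subset W'$, every solution $\tau$ of $\inst I|_W$ extends to a solution of $\inst I|_{W'}$. My plan is to combine Theorem~\ref{thm:BPold} --- under $(k+2)$-ary NU the solution subalgebra $R_{W'} \leq \prod_{v \in W'} A_v$ of $\inst I|_{W'}$ is uniquely determined by its $(k+1)$-fold projections --- with a patching argument that, for each variable $z_i \in W$, uses $(k,k+1)$-consistency on $W' \setminus \{z_i\}$ to produce a local extension of $\tau|_{W \setminus \{z_i\}}$, and then combines the $k+1$ resulting candidate tuples (together with $\tau$ suitably padded at the single variable of $W' \setminus W$) coordinate-wise via the $(k+2)$-ary NU term to produce a solution of $\inst I|_{W'}$ extending $\tau$.

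For the reverse direction, assume every $(k,k+1)$-instance over $\var V$ is sensitive. To produce a $(k+2)$-ary NU term, it suffices by a standard generator argument to show that for every algebra $\alg A \in \var V$ and elements $a, b \in A$ the constant tuple $\bar a = (a, \ldots, a) \in A^{k+2}$ lies in the subalgebra $\alg B = \langle u_0, u_1, \ldots, u_{k+1} \rangle \leq \alg A^{k+2}$, where $u_i$ has $b$ in coordinate $i$ and $a$ elsewhere. The plan is to engineer a $(k,k+1)$-instance $\inst I$ over $\var V$, with variables indexed by $\{0, 1, \ldots, k+1\}$ (possibly supplemented by auxiliary variables) and $k$-ary constraints given by carefully chosen subalgebras of powers of $\alg A$ built from the $u_i$, so that a specific tuple in a constraint of $\inst I$ extends to a global solution only when $\bar a \in \alg B$. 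The main obstacle is that naive choices --- e.g.\ $C_D = \proj_D \alg B$ for each $k$-subset $D$ --- automatically satisfy sensitivity, since tuples in $\proj_D \alg B$ always lift to tuples of $\alg B$ itself; one must therefore design the instance so that $(k,k+1)$-consistency holds by the structural shape of the constraints while the sensitivity of a distinguished tuple is genuinely equivalent to the missing containment $\bar a \in \alg B$. I would mirror the converse direction of Feder--Vardi's Theorem~\ref{thm:fv}, but with the variable counts and arities shifted by one --- working, for instance, over a product algebra such as $\alg A^{k+2}$ in the spirit of Theorems~\ref{thm:swshort} and~\ref{thm:sensshort} --- so that the resulting NU term has arity $k+2$ rather than $k+1$.
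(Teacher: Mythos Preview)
Your assessment of the two directions is inverted. In the paper, the implication \emph{near unanimity $\Rightarrow$ sensitivity} is the hard one and occupies all of Appendix~\ref{sec:k+2}, while \emph{sensitivity $\Rightarrow$ near unanimity} is dispatched in a page via Proposition~\ref{prop:sens} applied to the $\var V$-free algebra on two generators.

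Your plan for the forward direction has a genuine gap. The step ``show $\inst I^+$ is $(k+1,k+2)$-consistent'' is precisely where the difficulty lies, and the patching argument you sketch does not go through. Concretely: with $W'=W\cup\{w\}$, $W=\{z_1,\dots,z_{k+1}\}$, you produce for each $i$ a partial solution $\sigma_i$ on $W'\setminus\{z_i\}$ extending $\tau|_{W\setminus\{z_i\}}$, pad each $\sigma_i$ at $z_i$ and pad $\tau$ at $w$, and apply the $(k+2)$-ary NU term coordinate-wise. The result $\rho$ does agree with $\tau$ on $W$ (only one argument disagrees at each $z_j$). But to place $\rho$ in a constraint $C_D$ for a $k$-face $D=W'\setminus\{z_i,z_j\}$ containing $w$, you would need \emph{all} $k+2$ restricted tuples $\hat\sigma_l|_D$ to lie in $C_D$, since $C_D$ is merely a subalgebra, not an absorbing one. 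Only $\hat\sigma_i|_D$ and $\hat\sigma_j|_D$ are guaranteed to lie in $C_D$; for the remaining $l\notin\{i,j\}$ the tuple $\hat\sigma_l|_D$ involves the value $c_l=\sigma_l(w)$, which was never matched against the variable $z_l\in D$, together with a padded coordinate. The Baker--Pixley reformulation does not rescue this either: to place an extension of $\tau$ into $R_{W'}$ via its $(k+1)$-fold projections you would need $\tau\in\proj_W R_{W'}$, and that is exactly the statement under proof. The paper's argument proceeds instead by an induction on $k$-tree patterns (Lemmata~\ref{lem:sr1}--\ref{lem:sr3} and Corollary~\ref{cor:trees_strong_real}) and relies essentially on the loop lemma of Section~\ref{sect:newll}; the passage through a loop lemma rather than a direct NU patching is what makes this direction work over arbitrary, possibly infinite, domains.

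For the reverse direction your outline is headed the right way but stops short of the construction that does the work. You correctly observe that the na\"ive instance with $C_D=\proj_D\alg B$ is automatically sensitive and so proves nothing. The paper's Proposition~\ref{prop:sens} resolves this by building a $(k,k+1)$-instance over $\alg A^2$ on $k+2$ variables in which two pairs of coordinates are packaged into $A^2$-valued variables and one distinguished constraint uses $R^*$ (the tuples whose $(k+1)$-fold projections lie in $R$) while all other constraints use $R$ itself; sensitivity at that distinguished constraint then forces $R^*\subseteq R$. Applied with $\alg A=\alg F_{\var V}(\mathbf x,\mathbf y)$ and $R$ the subalgebra generated by the near-unanimity tuples, this yields the $(k+2)$-ary term.
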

The ``if'' direction of this theorem is proved in Section~\ref{sect:gettingnu}, while a sketch of a proof of the ``only if'' direction can be found in Section~\ref{sect:sensitivity}~%
(the complete reasoning can be found in Appendix~\ref{sec:k+2}).
We note that a novel and significant feature of this result is that it does not assume any finiteness or idempotency of the algebras involved.

\subsection{Structure of the paper}
The paper is structured as follows.
In the next section we introduce  local near unanimity operations and state Theorem~\ref{thm:swshort} and Theorem~\ref{thm:sensshort} in their full power.
In Section~\ref{sect:gettingnu} we collect the proofs that establish the existence of (local) near unanimity operations.
Section~\ref{sect:newll} contains a proof of a new loop lemma, which can be of independent interest, and is necessary in the proof in Appendix~\ref{sec:k+2}.
In Section~\ref{sect:sensitivity} we provide a sketch of the proof showing that, in the presence of a near unanimity operation of arity $k+2$, the $(k,k+1)$-instances are sensitive.
A complete proof of this fact, which is our main contribution, can be found in Appendix~\ref{sec:k+2}.
Finally, Section~\ref{sec:conclusions} contains conclusions.

Appendix~\ref{sec:AlgebraProofs} and Appendix~\ref{app:ll} are provided for the convenience of the reader.
They prove facts required for the classification, but known before,
and facts which can be proved by minor adaptations of known reasoning.
Finally, Appendix~\ref{sec:k+2} contains, as already mentioned, the main technical contribution of the paper. 

\section{Details of the CSP viewpoint}
In order to state our results in their full strength, we need to define local near unanimity operations.
This special concept of local near unanimity operations is required,
when considering infinite algebras.

\begin{definition}
  Let $k > 1$.
  An algebra $\alg A$ has \emph{local near unanimity term operations of arity $k+1$} if for every finite subset $S$ of $A$ there is some $(k+1)$-ary term operation $n_S$ of $\alg A$ such that
  \[
  n_S(b,a, \dots, a, a) = n_S(a,b,a, \dots, a) = \dots = n_S(a,a, \dots, b,a) = n_S(a,a, \dots, a, b) = a.
  \]
 for all $a$, $b \in S$.
\end{definition}
It should be clear that, for finite algebras, having local near unanimity term  operations of arity $k+1$ and having a near unanimity term operation of arity $k+1$ are equivalent, but for arbitrary algebras they are not.   The following provides a  characterization of when an idempotent algebra has local near unanimity term operations of some given arity; it will be used in the proofs of Theorems~\ref{thm:swfull} and~\ref{thm:sensfull}. It is similar to Theorem~\ref{thm:BPold} and is proved in
Appendix~\ref{sec:AlgebraProofs}.

\begin{theorem}\label{thm:localBP}
Let $\alg A$ be an idempotent algebra and $k > 1$.  The following are equivalent:
\begin{enumerate}
\item $\alg A$ has local near unanimity term operations of arity $k+1$;
\item for every $r > k$, every  subalgebra of $\alg A^r$ is uniquely determined by its projections onto all $k$-element subsets of coordinates;
\item   every subalgebra of $\alg A^{k+1}$ is uniquely determined by its projections onto all $k$-element subsets of coordinates.
\end{enumerate}
\end{theorem}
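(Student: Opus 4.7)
The theorem is a local analog of the Baker--Pixley theorem (Theorem~\ref{thm:BPold}), and the plan is to prove the cycle $(1) \Rightarrow (2) \Rightarrow (3) \Rightarrow (1)$, mirroring the classical proof while keeping track of the finite subsets of $A$ on which each local near unanimity term is required to behave. The step $(2) \Rightarrow (3)$ is immediate by specialization.

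For $(1) \Rightarrow (2)$ I would adapt the Baker--Pixley argument by induction on $r$. Let $\alg R, \alg R' \le \alg A^r$ have identical $k$-projections, and fix $\tuple a \in R$. In the base case $r = k+1$, for each coordinate $i$ pick some $\tuple b^i \in R'$ agreeing with $\tuple a$ off position $i$; in the inductive step the inductive hypothesis supplies matching $(r-1)$-projections, yielding analogous witnesses $\tuple b^i \in R'$. Let $S \subseteq A$ collect all entries appearing in $\tuple a$ and in any $k+1$ chosen witnesses $\tuple b^{i_1}, \dotsc, \tuple b^{i_{k+1}}$, and apply the local near unanimity term $n_S$ componentwise. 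At every coordinate at most one of the arguments differs from the corresponding entry of $\tuple a$, so the near unanimity identities on $S$ together with the idempotency of $\alg A$ force $n_S(\tuple b^{i_1}, \dotsc, \tuple b^{i_{k+1}}) = \tuple a$, placing $\tuple a$ in $R'$.

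For $(3) \Rightarrow (1)$, fix a finite $S \subseteq A$; the goal is a $(k+1)$-ary term of $\alg A$ satisfying the near unanimity identities on every pair $(a,b) \in S^2$. A warm-up handles each pair separately: for each $(a,b) \in S^2$ the subalgebra $R_{(a,b)} \le \alg A^{k+1}$ generated by the $k+1$ tuples $(a, \dotsc, a, b, a, \dotsc, a)$ (with $b$ in each position) already contains $(a, \dotsc, a)$, because comparing $R_{(a,b)}$ with $\langle R_{(a,b)} \cup \{(a, \dotsc, a)\}\rangle$ yields matching $k$-projections---the generator whose distinguished position is the omitted coordinate already projects to the all-$a$ tuple---so hypothesis $(3)$ forces equality.

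The main obstacle is combining these per-pair witnesses into a single $(k+1)$-ary term that works uniformly on all of $S^2$. The plan is to encode all pairs at once via $F = \alg A^{S^2}$ and the two elements $\xi, \eta \in F$ defined by $\xi(a,b) = a$ and $\eta(a,b) = b$: inside the flat presentation $F^{k+1} \cong \alg A^{(k+1)|S|^2}$, take the tuples $g_i$ having $\eta$ at position $i$ and $\xi$ elsewhere. The same counting argument as above shows that $\langle g_1, \dotsc, g_{k+1}\rangle$ and $\langle g_1, \dotsc, g_{k+1}, (\xi, \dotsc, \xi)\rangle$ share every $k$-projection in the flat $\alg A$-coordinates (any such projection omits some row $i$, and the corresponding $g_i$ supplies the missing tuple). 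The technical heart is that hypothesis $(3)$ is stated only for $\alg A^{k+1}$, while the construction lives in $\alg A^{(k+1)|S|^2}$; the plan is to carry out an internal induction on $r$ that propagates $(3)$ to all powers $\alg A^r$, keeping careful track at each step of which $k$-projection comparisons---applied to suitable small projected subalgebras---are needed to force the new identifications. Once this propagation is in hand, it yields $(\xi, \dotsc, \xi) \in \langle g_1, \dotsc, g_{k+1}\rangle$, and extracting the witnessing $(k+1)$-ary term produces the required local near unanimity term $n_S$.
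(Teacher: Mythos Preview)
Your implications $(1)\Rightarrow(2)$ and $(2)\Rightarrow(3)$ match the paper's argument. The divergence is in $(3)\Rightarrow(1)$. The paper does \emph{not} attempt to propagate $(3)$ to higher powers; instead it invokes a result of Horowitz (Corollary~2.7 of \cite{horowitz-ijac}) which says that an idempotent algebra has local $(k+1)$-ary near unanimity terms if and only if for every choice $a_1,b_1,\dots,a_{k+1},b_{k+1}\in A$ there is a single term $t$ satisfying the $k+1$ equations $t(a_i,\dots,b_i,\dots,a_i)=a_i$. This condition translates exactly into $(a_1,\dots,a_{k+1})$ lying in the subalgebra of $\alg A^{k+1}$ generated by the $k+1$ ``one-$b$-in-$a$'' tuples, and hypothesis $(3)$ gives this immediately since each $k$-projection of the generating set already contains the relevant projection of $(a_1,\dots,a_{k+1})$. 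So the paper offloads the term-composition work to Horowitz and applies $(3)$ only once, directly in $\alg A^{k+1}$.

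Your route---first establishing $(3)\Rightarrow(2)$ by an internal induction, then reading off $n_S$ from the subalgebra of $\alg A^{(k+1)|S|^2}$---is correct and more self-contained, but the step you flag as ``the technical heart'' deserves to be spelled out, since it is where idempotency is essential. The clean inductive step is: given $R\le\alg A^{r}$ and $\vc a$ with matching $k$-projections, use idempotency to see that $R_0=\{(b_1,\dots,b_{k+1}):(b_1,\dots,b_{k+1},a_{k+2},\dots,a_r)\in R\}$ is a subalgebra of $\alg A^{k+1}$; the inductive hypothesis applied to the $(r-1)$-fold projections of $R$ supplies, for each $i\le k+1$, a tuple in $R$ agreeing with $\vc a$ off coordinate $i$, so every $k$-projection of $(a_1,\dots,a_{k+1})$ lies in the corresponding projection of $R_0$, and $(3)$ finishes. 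What you buy with this approach is avoiding the external citation; what the paper buys is never having to leave $\alg A^{k+1}$.
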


We are ready to state Theorem~\ref{thm:swshort} in its full strength:
\begin{theorem}\label{thm:swfull}
    Let $\alg A$ be an idempotent algebra and $k>1$. The following are equivalent:
    \begin{enumerate}
        \item $\alg A$ (or equivalently $\alg A^2$) has local near unanimity term operations of arity $k+1$;
        \item in every $(k,k+1)$-instance over $\alg A^2$, every partial solution extends to a solution;
        \item in every $(k,k+1)$-instance over $\alg A^2$ \textbf{on $k+2$ variables}, every partial solution extends to a solution.
    \end{enumerate}
\end{theorem}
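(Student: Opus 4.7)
The plan is to establish the cycle $(1) \Rightarrow (2) \Rightarrow (3) \Rightarrow (1)$; the implication $(2) \Rightarrow (3)$ is immediate. For $(1) \Rightarrow (2)$, I would adapt Bergman's proof of Theorem~\ref{thm:BP} to the local setting. Given a $(k,k+1)$-instance $\inst I$ over $\alg A^2$ and a partial solution $f$ on a subset $U$ of the (finite) variable set, let $S \subseteq A$ be a finite set containing all coordinate entries of the values in the image of $f$ together with entries of fixed witness tuples in the finitely many relevant constraints; by $(1)$, there is a $(k+1)$-ary term $n_S$ that is a near unanimity operation on $S$. To extend $f$ to one new variable $v$, I would use $(k,k+1)$-consistency to produce candidate values at $v$ extending $f$ along various $k$-subsets of $U \cup \{v\}$ containing $v$, and then combine $k+1$ such candidates coordinatewise via $n_S$; the near unanimity identities on $S$ guarantee that the resulting value is compatible both with $f$ and with every constraint touching $v$. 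Iterating over the finitely many variables of $V \setminus U$, enlarging $S$ when necessary, would produce a full solution.

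For the hard direction $(3) \Rightarrow (1)$, I would invoke Theorem~\ref{thm:localBP} applied to $\alg A^2$, first noting that having local near unanimity terms of arity $k+1$ is equivalent for $\alg A$ and $\alg A^2$: one direction is trivial, and for the other one projects a finite subset of $A^2$ to its two coordinates in $A$, picks a term that is NU on the union of the projections, and checks that the same term acts as a NU coordinatewise on the original subset. It therefore suffices to show that every subalgebra $\alg R$ of $(\alg A^2)^{k+1}$ is uniquely determined by its projections onto $k$-element subsets of coordinates. Assume, for contradiction, that there is a tuple $t = (t_1, \ldots, t_{k+1}) \in (A^2)^{k+1} \setminus R$ with $\proj_I(t) \in \proj_I(R)$ for every $k$-subset $I \subseteq \{1, \ldots, k+1\}$. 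I would construct a $k$-uniform instance $\inst J$ on $k+2$ variables $x_0, x_1, \ldots, x_{k+1}$ over $\alg A^2$ whose constraint on each $k$-subset of $\{x_1, \ldots, x_{k+1}\}$ is the corresponding projection of $R$, while the constraints involving $x_0$ are engineered so that pinning $x_0$ to an auxiliary value encoded in $\alg A^2$ forces $x_1, \ldots, x_{k+1}$ to realize $t$. The hypothesis that $\proj_I(t) \in \proj_I(R)$ for every $I$ would yield $(k, k+1)$-consistency of $\inst J$, and then applying hypothesis $(3)$ to the partial solution that fixes $x_0$ would extend it to a full solution, placing $t$ inside $R$, a contradiction.

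The main obstacle is the design of the auxiliary variable $x_0$ and the $k$-constraints involving it, which must simultaneously achieve three things: (a) genuine $(k, k+1)$-consistency of $\inst J$ rather than mere $k$-uniformity; (b) essential use of the doubled algebra $\alg A^2$, since the trivial-algebra example from the introduction shows that the corresponding statement over $\alg A$ alone can fail; and (c) the guarantee that every full extension of the chosen partial solution realizes $t$ in $R$. The role of $\alg A^2$ here appears to be to let one coordinate of each variable carry the ``legitimate'' value projected from $R$ while the other encodes information about $t$, with the constraints synchronizing the two perspectives and preventing a solution that evades $t$. I expect this construction to parallel, with arity $k+1$ in place of $k+2$, the sensitivity argument outlined in Section~\ref{sect:sensitivity} for Theorem~\ref{thm:mainresult}, the difference being that here we extend an actual partial solution rather than a single tuple of a constraint.
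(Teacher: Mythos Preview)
Your cycle $(1)\Rightarrow(2)\Rightarrow(3)\Rightarrow(1)$ matches the paper, and your $(1)\Rightarrow(2)$ is essentially the Bergman argument the paper invokes (though note that the induction must be on $|U|$, with the inductive hypothesis supplying extensions of $f$ on sets of size $|U|$ containing $v$, not just raw $(k,k+1)$-consistency; your phrasing ``candidate values at $v$ extending $f$ along various $k$-subsets'' suggests you are only using consistency, which is not enough once $|U|>k$).

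The genuine gap is in $(3)\Rightarrow(1)$. You correctly reduce to Theorem~\ref{thm:localBP}, but you then aim at subalgebras of $(\alg A^2)^{k+1}$ and propose an auxiliary variable $x_0$ that, once ``pinned'', forces $(x_1,\dots,x_{k+1})=t$. This misidentifies the role of $\alg A^2$, and the forcing scheme does not survive $(k,k+1)$-consistency: if $k$-ary subalgebra constraints involving $x_0$ made $x_0=a_0$ determine every $x_i=t_i$, then already the consistency check on $\{x_0,x_1,\dots,x_k\}$ (and its rotations) would force $t\in R$ before hypothesis~(3) is ever used, so either the instance is not a $(k,k+1)$-instance or there is nothing to prove. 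Your speculative description of $\alg A^2$ as ``one coordinate legitimate, the other encoding information about $t$'' does not lead to a construction, and the pointer to Section~\ref{sect:sensitivity} is to the wrong part of the paper.

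The paper's construction (Proposition~\ref{prop:sw}) is quite different and simpler. It works with $R\le\alg A^{k+1}$, not $(\alg A^2)^{k+1}$. The $k+2$ variables are $x_1,\dots,x_{k+1}$ with domain $A$ (via the diagonal of $A^2$) and one genuine $A^2$-variable $y_{12}$, constrained so that $y_{12}=(x_1,x_2)$ and each $k$-set is governed by the appropriate projection of $R$. The entire instance is \emph{sensitive} (each constraint tuple comes from some $r\in R$, which yields a global solution), hence it is a $(k,k+1)$-instance for free. The partial solution is $\vc b$ on $\{x_1,\dots,x_{k+1}\}$; the only consistent extension to $y_{12}$ is $(b_1,b_2)$, and then the $k$-ary constraint on $\{x_3,\dots,x_{k+1},y_{12}\}$ is literally a copy of $R$, placing $\vc b$ in $R$. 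So the purpose of $\alg A^2$ is not to carry side information but to \emph{pack two $A$-coordinates into a single variable}, letting one $k$-ary constraint see all $k+1$ coordinates of $R$. That packing trick is the idea you are missing.
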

\begin{proof}
  Obviously condition 2 implies condition 3.
  A proof of condition 3 implying condition 1 can be found in Section~\ref{sect:gettingnu}.
  The implication from 1 to 2 is covered by Theorem~\ref{thm:BP}.
\end{proof}
Analogously, the main result of the paper, for idempotent algebras, and the full version of Theorem~\ref{thm:sensshort} states:
\begin{theorem}\label{thm:sensfull}
    Let $\alg A$ be an idempotent algebra
    and $k>1$. The following are equivalent:
    \begin{enumerate}
        \item $\alg A$~(or equivalently $\alg A^2$) has local near unanimity term operations of arity $k+2$;
        \item every $(k,k+1)$-instance over $\alg A^2$ is sensitive;
        \item every $(k,k+1)$-instance over $\alg A^2$  \textbf{on $k+2$ variables} is sensitive.
    \end{enumerate}
\end{theorem}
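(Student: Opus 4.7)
My plan is to prove the three-way equivalence as follows: establish $(2)\Rightarrow(3)$ trivially, derive $(1)\Rightarrow(2)$ from Theorem~\ref{thm:mainresult} by a localization argument, and prove $(3)\Rightarrow(1)$ by extracting a local near unanimity operation from the sensitivity hypothesis via Theorem~\ref{thm:localBP}. I expect the third implication to be the main obstacle.

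The direction $(2)\Rightarrow(3)$ is immediate. For $(1)\Rightarrow(2)$, given a finite $(k,k+1)$-instance $\inst I$ over $\alg A^2$, I would let $S\subseteq A^2$ be the finite set of elements appearing in tuples of the constraints of $\inst I$ and, using the hypothesis, pick a $(k+2)$-ary term operation $n_S$ of $\alg A^2$ that is near unanimity on $S$. I would then adapt the proof of the ``if'' direction of Theorem~\ref{thm:mainresult} (given in Appendix~\ref{sec:k+2}) by replacing the global NU term with $n_S$; since that argument only applies the NU identities to entries of the constraint tuples of $\inst I$, a local NU term on $S$ suffices to conclude that every tuple in every constraint of $\inst I$ extends to a solution.

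For $(3)\Rightarrow(1)$ I would argue the contrapositive. Assume $\alg A$ lacks local NU term operations of arity $k+2$; applying Theorem~\ref{thm:localBP} to $\alg A$ with ``$k$'' replaced by $k+1$, this failure provides a subalgebra $\rel R$ of $\alg A^{k+2}$ and a tuple $\tuple t\in A^{k+2}\setminus R$ all of whose $(k+1)$-subtuples lie in the corresponding projections of $\rel R$. From this data I would construct a $(k,k+1)$-instance on $k+2$ variables over $\alg A^2$ containing a tuple in some constraint that cannot be extended to any solution, thereby contradicting $(3)$. The naive attempt---taking the $k$-ary constraint at each $k$-subset $I$ to be $\proj_I\rel R$ (or a diagonal copy thereof in $\alg A^2$)---produces a sensitive instance whose solution set is the closure $\rel R^{**}$ consisting of those $(k+2)$-tuples whose $k$-subtuples all lie in $\proj\rel R$; since $\rel R^{**}\ni\tuple t$, no contradiction is reached. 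The essential move is to exploit the richer expressive power available over $\alg A^2$, where constraints may be arbitrary subalgebras of $(\alg A^2)^k\cong\alg A^{2k}$, to pair a ``value'' coordinate with a ``witness'' coordinate at each variable and to couple them by $(2k)$-ary relations over $\alg A$ that are invisible from $\alg A$ alone. The construction should force the solution set to collapse onto $\rel R$ itself while preserving $(k+1)$-consistency---which only sees the $(k+1)$-projections of $\rel R$ that are shared with $\rel R^{**}$.

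The main obstacle is precisely this balancing act: the instance has to be loose enough at the $(k+1)$-level to admit the subtuples of $\tuple t$, yet tight enough at the $(k+2)$-level to force solutions into $\rel R$. Resolving this tension is the technical heart of Section~\ref{sect:gettingnu}, and will, I expect, make essential use of the new loop lemma proved in Section~\ref{sect:newll}.
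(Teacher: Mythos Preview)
Your overall decomposition matches the paper's, but there are two issues worth flagging.

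\textbf{The implication $(1)\Rightarrow(2)$ has a genuine gap.} When $\alg A$ is infinite, the constraint relations of a $(k,k+1)$-instance over $\alg A^2$ are subalgebras of powers of $\alg A^2$ and are typically infinite; the ``finite set $S$ of elements appearing in tuples of the constraints'' is therefore not finite, and you cannot choose a single local NU term $n_S$ in advance. The argument in Appendix~\ref{sec:k+2} does not proceed this way. It runs the variety case and the local-NU case in parallel, and at every point where a near unanimity application is needed (the Baker--Pixley step in Lemma~\ref{lem:sr1}, the $2k$-ary application in Lemma~\ref{lem:sr2}, the local-absorption/loop-lemma step in Lemma~\ref{lem:sr3}) it first produces a specific finite collection of tuples and only then selects a term operation that is near unanimity on that finite set. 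The elements appearing in those tuples are generated during the proof (as components of realizations of auxiliary patterns) and are not contained in any set you could have fixed at the outset. (Minor point: you mean the ``only if'' direction of Theorem~\ref{thm:mainresult}, not the ``if'' direction.)

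\textbf{For $(3)\Rightarrow(1)$ you have the difficulty backwards.} Your diagnosis of the obstacle---the naive $\proj_I R$ instance collapses to $R^{**}$ rather than $R$, and one must use the extra expressive power of $\alg A^2$ to pair coordinates---is exactly right and matches Proposition~\ref{prop:sens}. But the resolution is short and entirely elementary: the paper introduces variables $y_{12},y_{34},y_{13},y_{24}$ over $A^2$ alongside $x_5,\dots,x_{k+2}$ over $A$, places a single ``special'' constraint encoding $R^*$ on $\{y_{12},y_{34},x_5,\dots,x_{k+2}\}$, and defines all remaining constraints from $R$; checking $(k,k+1)$-consistency is a direct case analysis, and sensitivity then forces $R^*\subseteq R$. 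No loop lemma is involved. The loop lemma (Theorem~\ref{thm:loop_inv}) is used in the \emph{other} nontrivial direction, $(1)\Rightarrow(2)$, inside Lemma~\ref{lem:sr3}, where one must merge two vertices of a pattern carrying the same label.
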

\begin{proof}
   Obviously condition 2 implies condition 3.
   For a proof that condition 3 implies condition 1 see Section~\ref{sect:gettingnu}.
   A sketch of the proof of the remaining implication can be found in Section~\ref{sect:sensitivity} (see
Appendix~\ref{sec:k+2} for a complete proof).
\end{proof}
\noindent The following examples show that in Theorems \ref{thm:localBP}, \ref{thm:swfull}, and \ref{thm:sensfull} the assumption of idempotency is necessary.

\begin{example}\label{Slupecki}
For $n > 2$, let $\alg S_n$ be the algebra with domain $[n] = \{1,2, \dots, n\}$ and with basic operations consisting of all unary operations  on $[n]$ and all non-surjective operations on $[n]$ of arbitrary arity.  The collection of such operations forms a finitely generated clone, called the S{\l}upecki clone.  Relevant details of these algebras can be found in \cite[Example 4.6]{kiss-prohle} and \cite{Szendrei2012}.  It can be shown that for $m <n$, the subuniverses of $\alg S_n^m$ consist of all $m$-ary relations $R_\theta$ over $[n]$ determined by a partition $\theta$ of $[m]$ by
\[
R_\theta = \{(a_1,  \dots, a_m)\mid \text{$a_i = a_j$ whenever $(i,j) \in \theta$}\}.
\]
These rather simple relations are preserved by any operation on $[n]$, in particular by any majority operation or more generally, by any near unanimity operation.

It follows from Theorem~\ref{thm:BP} that if $k > 1$ and $\inst I$ is a $(k,k+1)$-instance of $\csp(\alg S_{2k+1}^2)$ then any partial solution of $\inst I$ extends to a solution.  This also implies that $\inst I$ is sensitive.  Furthermore any subalgebra of $\alg S_{k+2}^{k+1}$ is determined by it projections onto all $k$-element sets of coordinates.
As noted in \cite[Example 4.6]{kiss-prohle}, for $n > 2$, $\alg S_n$ does not have a near unanimity term operation of any arity, since the algebra $\alg S_n^n$ has a quotient that is a 2-element essentially unary algebra.
\end{example}

\section{Constructing near unanimity operations}\label{sect:gettingnu}
In this section we collect the proofs providing,
under various assumptions, 
near unanimity or local near unanimity operations.
That is: the proofs of ``3 implies 1'' in Theorems~\ref{thm:swfull} and Theorem~\ref{thm:sensfull} 
as well as a proof of the ``if'' direction from Theorem~\ref{thm:mainresult}.

In the following proposition we construct instances over $\alg A^2$~%
(for some algebra $\alg A$). 
By a minor abuse of notation, we allow in such instances two kinds of variables: variables $x$ evaluated in $A$ and variables $y$ evaluated in $A^2$. 
The former kind should be formally considered as variables evaluated in $A^2$ where each constraint enforces that $x$
is sent to $\{(b,b)\mid b\in A\}$.

Moreover, dealing with $k$-uniform instances, we understand the condition ``every set of $k$ variables is constrained by a single constraint'' flexibly: in some cases we allow for more constraints with the same set of variables, as long as the relations are proper permutations so that every constraint imposes the same restriction.
\begin{proposition}\label{prop:sw}
   Let $k > 1$ and let $\alg A$  be an  algebra  such that, for every $(k,k+1)$-instance $\inst I$ over $\alg A^2$ on $k+2$ variables every partial solution of $\inst I$ extends to a solution.
   Then each subalgebra of $\alg A^{k+1}$ is determined by its $k$-ary projections.
\end{proposition}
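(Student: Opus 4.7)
The goal is to show that every subalgebra $\alg R \leq \alg A^{k+1}$ is determined by its $k$-ary projections, which is equivalent to the statement that any tuple $\bar a = (a_1, \dotsc, a_{k+1}) \in A^{k+1}$ satisfying $(a_i)_{i \in I} \in \proj_I R$ for every $k$-subset $I \subseteq \{1, \dotsc, k+1\}$ must itself belong to $R$. The plan is to construct a $(k, k+1)$-instance $\inst I$ over $\alg A^2$ on exactly $k+2$ variables, together with a partial solution built from $\bar a$, in such a way that the extension of that partial solution guaranteed by the hypothesis forces $\bar a \in R$.

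The instance will have variables $v_1, \dotsc, v_{k+1}$ evaluated in $A$ and one auxiliary variable $y$ evaluated in $A^2$. For each $k$-subset $I \subseteq \{1, \dotsc, k+1\}$ I impose the $k$-ary constraint on $(v_i)_{i \in I}$ with relation $\proj_I R$, and for each $(k-1)$-subset $J \subseteq \{1, \dotsc, k+1\}$ I impose the $k$-ary constraint on $((v_j)_{j \in J}, y)$ with relation
\[
C_J \;=\; \bigl\{\bigl((r_j)_{j \in J},\ (r_1, r_2)\bigr) \;:\; (r_1, \dotsc, r_{k+1}) \in R\bigr\}.
\]
Each $C_J$ is a subalgebra of $(\alg A^2)^k$, being the image of $\alg R$ under a coordinate-wise homomorphism, and every $k$-subset of the $k+2$ variables is covered, so $\inst I$ is $k$-uniform. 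Verifying $(k, k+1)$-consistency on a $(k+1)$-subset $W$ of variables is routine: in each case, the tuple being extended yields a witness $\bar r \in R$ (either explicitly from some $C_J$, or by completing a tuple of some projection of $R$), and the projections of this single $\bar r$ supply a value for the remaining variable of $W$ that simultaneously satisfies all other constraints of $\inst I|_W$.

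The partial assignment $v_i \mapsto a_i$ is a valid partial solution since $\bar a$ lies in every $k$-projection of $R$ by assumption, so by hypothesis it extends to a full solution, specifying some pair $(y_1, y_2)$ for $y$. The constraints $C_J$ with $1 \in J$ force $y_1 = a_1$, and those with $2 \in J$ force $y_2 = a_2$, hence $y = (a_1, a_2)$. The constraint $C_{\{3, \dotsc, k+1\}}$ then produces $\bar r \in R$ with $r_j = a_j$ for $3 \le j \le k+1$ and $(r_1, r_2) = (a_1, a_2)$, forcing $\bar r = \bar a$ and therefore $\bar a \in R$. I expect the main difficulty to be the case analysis for $(k, k+1)$-consistency on the $(k+1)$-subsets containing $y$; however, this reduces to the observation that the $y$-constraints $C_J$ all arise from rearrangements of projections of the single relation $R$, so a common witness is always available.
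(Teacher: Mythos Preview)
Your proposal is correct and follows essentially the same approach as the paper: the instance on $\{v_1,\dots,v_{k+1},y\}$ with the auxiliary variable $y$ encoding the pair $(r_1,r_2)$ is exactly the paper's construction, and the concluding argument (forcing $y=(a_1,a_2)$ and then reading off $\bar a\in R$ from $C_{\{3,\dots,k+1\}}$) is the same. The only minor difference is that the paper verifies $(k,k+1)$-consistency by the slightly slicker observation that the instance is in fact \emph{sensitive}---every tuple in every constraint comes from some $\bar r\in R$, and $v_i\mapsto r_i$, $y\mapsto(r_1,r_2)$ is a global solution---whereas you argue the local extension property directly; both are fine.
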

\begin{proof}
Let $\alg R\leq \alg A^{k+1}$ and we will show that
it is determined by the system of projections $\proj_I(R)$ as $I$ ranges over all $k$ elements subsets of coordinates.
Using $\alg R$  we define the following  instance $\inst I$ of $\csp(\alg A^2)$. The variables of $\inst I$ will be the set $\{x_1, x_2, \dots, x_{k+1}, y_{12}\}$ and the domain  of each $x_i$ is $A$, while the domain of $y_{12}$ is $A^2$.

For $U \subseteq \{x_1, \dots, x_{k+1}\}$ of size $k$, let $C_U$ be the constraint with scope $U$ and constraint relation $R_U = \proj_U(R)$.  
For $U$ a $(k-1)$-element subset of $\{x_1, \dots, x_{k+1}\}$, let $C_{U\cup \{y_{12}\}}$ be the constraint  with scope $U \cup \{y_{12}\}$ and constraint relation $R_{U\cup\{y_{12}\}}$ that consists of all tuples $(b_v \mid v \in U\cup \{y_{12}\})$ such that there is some $(a_1, \dots, a_{k+1}) \in R$ with $b_v = a_i$ if $v = x_i$ and with $b_{y_{12}} = (a_1, a_2)$.

The instance $\inst I$ is $k$-uniform and we will show that it is sensitive.
Indeed every tuple in every constraining relation originates in some tuple $\vc{b}\in R$.
Setting $x_i\mapsto b_i$ and $y_{12}\mapsto (b_1,b_2)$ defines a solution that extends such a tuple.

In particular $\inst I$ is a $(k, k+1)$-instance over $\alg A^2$ with $k+2$ variables and so any partial solution of it can be extended to a solution.  Let $\vc{b} \in A^{k+1}$ such that $\proj_I(\vc{b}) \in \proj_I(R)$ for all $k$ element subsets $I$ of $[k+1]$.  Then $\vc{b}$ is a partial solution of $\inst I$ over the variables $\{x_1, \dots, x_{k+1}\}$ and thus there is some extension of it to the variable $y_{12}$ that produces a solution of $\inst I$.  But there is only one consistent way to extend $\vc{b}$ to $y_{12}$ namely by setting $y_{12}$ to the value $(b_1, b_2)$.  By considering the constraint with scope $\{x_3, \dots, x_{k+1}, y_{12}\}$ it follows that $\vc{b} \in R$, as required.
\end{proof}
Now we are ready to prove the first implication tackled in this section: 3 implies 1 in Theorem~\ref{thm:swfull}.

\begin{proof}
    [Proof of ``3 implies 1'' in Theorem~\ref{thm:swfull}]
    By Theorem~\ref{thm:localBP} it suffices to show that
    each subalgebra of $\alg A^{k+1}$ is determined by its $k$-ary projections. 
    Fortunately, Proposition~\ref{prop:sw} provides just that.
\end{proof}

We move on to  proofs of ``3 implies 1'' in  Theorem~\ref{thm:sensfull} and the ``if''  direction of Theorem~\ref{thm:mainresult}.
Similarly, as in the theorem just proved, we start with a proposition.
\begin{proposition}\label{prop:sens}
   Let $k > 1$ and let $\alg A$  be an  algebra  such that every $(k,k+1)$-instance $\inst I$ over $\alg A^2$ on $k+2$ variables is sensitive.
   Then each subalgebra of $\alg A^{k+2}$ is determined by its $(k+1)$-ary projections.
\end{proposition}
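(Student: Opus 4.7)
The plan is to follow the template of Proposition~\ref{prop:sw}, adjusting the construction to reflect that $\alg R$ now lives in $\alg A^{k+2}$ while the variable budget stays at $k+2$. Since there is no longer room for a full copy of the coordinates as individual base variables together with a separate pair-variable, the natural analogue uses two pair-variables: I would let the variables of $\inst I$ be $x_1,\dots,x_k$ with domain $A$ together with $y_1, y_2$ with domain $A^2$, so that $y_1$ encodes the pair of coordinates $(1, k+1)$ of $R$ while $y_2$ encodes $(2, k+2)$. Each of $y_1, y_2$ thus carries one coordinate redundant with an $x_i$ and one new coordinate, and the distinguished $k$-subset $\{x_3, \ldots, x_k, y_1, y_2\}$ simultaneously touches all $k+2$ coordinates of $R$. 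For each $k$-subset of variables I would take as the constraint relation the natural pullback of the corresponding projection of $R$ through the encoding, exactly as done in Proposition~\ref{prop:sw}.

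The first step is to verify that $\inst I$ is $(k, k+1)$-consistent: every tuple of every constraint originates from some $\vc a \in R$, and this $\vc a$, transported through the encodings, supplies a local solution on any $(k+1)$-element subset of variables. The hypothesis then promotes this to sensitivity: every tuple of every constraint extends to a global solution of $\inst I$. Given $\vc b \in A^{k+2}$ whose $(k+1)$-ary projections all belong to the corresponding projections of $R$, the plan is to exhibit a tuple realizing $\vc b$ in an appropriate constraint involving $y_1$ and $y_2$, apply sensitivity to get an extension, use the redundancy constraints coupling each $y_j$ to the corresponding $x_i$ to pin $x_1$ and $x_2$ to $b_1$ and $b_2$, and read off from the distinguished magic constraint---whose relation is $R$ under the encoding---the conclusion $\vc b \in R$.

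The main obstacle, and the point where the argument must go genuinely beyond Proposition~\ref{prop:sw}, is that sensitivity yields only the existence of \emph{some} extension rather than a forced unique one, so a priori the element of $R$ witnessed by the magic constraint need only agree with $\vc b$ on $k+1$ coordinates. Whereas Proposition~\ref{prop:sw} exploits the stronger hypothesis of partial-solution extendability to force a unique value on its single pair-variable, here one must jointly use the interaction of $y_1$ and $y_2$ with the $(k-1)$-scopes they share with the $x_i$'s in order to simultaneously pin both pair-variable components. Making this joint forcing tight, so that the magic-scope projection of the extension coincides with the full $\vc b$-encoded tuple on all $k+2$ coordinates rather than merely on $k+1$ of them, is where I expect the technical heart of the argument to reside.
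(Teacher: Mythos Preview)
Your proposal has a genuine gap, and the obstacle you flag at the end is not just ``the technical heart'' but in fact cannot be resolved with the construction you describe.

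With only two pair-variables $y_1,y_2$ encoding coordinates $(1,k+1)$ and $(2,k+2)$, there is exactly \emph{one} $k$-subset of variables that touches all $k+2$ coordinates of $R$, namely $\{x_3,\dots,x_k,y_1,y_2\}$. If you define all constraints as pullbacks of projections of $R$, then the relation on this unique ``full'' scope is (the encoding of) $R$ itself, and your target tuple $\vc b$ is a member only if $\vc b\in R$ --- precisely what you want to prove. If instead you start from a constraint that touches only $k+1$ coordinates (so that the hypothesis on $\vc b$ applies), say $\{x_2,\dots,x_k,y_1\}$, then sensitivity gives a solution $f$ with $f(y_1)=(b_1,b_{k+1})$ and, by redundancy, $f(x_1)=b_1$; but nothing forces the second component of $f(y_2)$ to equal $b_{k+2}$, since no constraint in the instance carries information about $b_{k+2}$ beyond what is already encoded in projections of $R$. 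The ``joint forcing'' you hope for cannot occur: there simply is no second full-scope constraint to read $\vc b\in R$ from.

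The paper resolves this with two ideas you are missing. First, it uses \emph{four} pair-variables $y_{12},y_{34},y_{13},y_{24}$ (and base variables $x_5,\dots,x_{k+2}$), arranged so that there are two distinct $k$-subsets --- $\{y_{12},y_{34},x_5,\dots,x_{k+2}\}$ and $\{y_{13},y_{24},x_5,\dots,x_{k+2}\}$ --- each touching all $k+2$ coordinates. Second, and crucially, it defines the constraint on the first of these scopes using $R^*$ (the closure of $R$ under $(k+1)$-ary projections) rather than $R$, while all other constraints use $R$. Now $\vc b\in R^*$ gives a tuple in the special constraint directly; sensitivity extends it; the overlap structure of the four pair-variables forces $y_{13}=(a_1,a_3)$ and $y_{24}=(a_2,a_4)$; and the \emph{other} full-scope constraint, which is $R$-based, yields $\vc b\in R$. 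The price is that the $(k,k+1)$-consistency check is no longer automatic: one must show that any tuple in the $R^*$-based special constraint extends over each additional variable, and this is exactly where the defining property of $R^*$ via $(k+1)$-ary projections is used.
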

\begin{proof}
  We will show that if   $\alg R$ is a subalgebra of $\alg A^{k+2}$ then $R =  R^*$ where
  \[
   R^* = \{a\in A^{k+2} \mid \proj_I(a) \in \proj_I(R) \text{ whenever } |I|=k+1\}.
  \]
  In other words, we will show that the subalgebra $\alg R$ is determined by its  projections into all $(k+1)$-element sets of coordinates.

    We will use $ R$ and $R^*$ from the previous paragraph to construct a $(k,k+2)$-instance 
    $\inst I = (V, \constr)$
    with $V = \{x_5, \dots, x_{k+2}, y_{12}, y_{34}, y_{13}, y_{24}\}$
    where each $x_i$ is evaluated in $A$ while all the $y$'s are evaluated in $A^2$.
   
    The set of constraints is more complicated.
    There is a {\em special constraint} on a {\em special variable set}
    $((y_{12}, y_{34}, x_5, \dots, x_{k+2}), C)$ where
    \[
    C= \{((a_1, a_2), (a_3, a_4), a_5, \dots, a_{k+2}) \mid (a_1, \dots, a_{k+2}) \in R^*\}.
    \]
    The remaining constraints are defined using the relation $R$.
    For each set of variables $S = \{v_1, \dots, v_k\} \subseteq V$~%
    (which is different than the set for the special constraint)
    we define a constraint $((v_1,\dotsc,v_k), D_S)$
    with $(b_1, \dots, b_k) \in D_S$ 
    if and only if there exists a tuple $(a_1,\dots,a_{k+2})\in R$ such that:
    \begin{itemize}
        \item if $v_i$ is $x_j$ then $b_i = a_j$, and
        \item if $v_i$ is $y_{lm}$ then $b_i=(a_l,a_m)$.
    \end{itemize}
    Note that the instance $\inst I$ is $k$-uniform.

   \begin{claim}
     $\inst I$ is a $(k, k+1)$-instance.
   \end{claim}

   Let $S \subseteq V$ be a set of size $k$.
   If $S$ is not the special variable set,
   then every tuple in the relation constraining $S$ originates in some $(b_1,\dotsc,b_{k+2})\in R$ and, as in Proposition~\ref{prop:sw}, 
   sending $x_i\mapsto b_i$ and $y_{lm}\mapsto (b_l,b_m)$ defines a solution that extends such a tuple.
   We immediately conclude, that the potential failure of the $(k,k+1)$ condition must involve the special constraint.

   Thus $S = \{y_{12},y_{34},x_5, \dots, x_{k+2}\}$ and if $\vc{b}$ is a tuple from the special constraint $C$ then there is some $(a_1, \dots, a_{k+2}) \in R^*$ with
   \[
   \vc{b} = ((a_1, a_2), (a_3, a_4), a_5, \dots, a_{k+2}).
   \]
   The extra variable that we want to extend the tuple $\vc{b}$ to is either $y_{13}$ or $y_{24}$.
   Both cases are similar and we will only work through the details when it is $y_{13}$. 
   In this case, assigning the value $(a_1, a_3)$ to the variable $y_{13}$ will produce an extension $\vc{b}'$ of $\vc{b}$ to a tuple over $S \cup\{y_{13}\}$ that is consistent with all constraints of $\inst I$ whose scopes are subsets of $\{y_{12},y_{34},x_5, \dots, x_{k+2}, y_{13}\}$.

   To see this, consider a $k$ element subset $S'$ of 
   $\{y_{12},y_{34},x_5, \dots, x_{k+2}, y_{13}\}$
   that excludes some variable $x_j$.  Then, by the definition of $R^*$ there exists some tuple of the form $(a_1, a_2, \dots, a_{j-1}, a_{j}', a_{j+1}, \dots, a_{k+2}) \in R$.  This tuple from $R$ can be used to witness that the restriction of $\vc{b}'$ to $S'$ satisfies the constraint $D_{S'}$ since the scope of this constraint does not include the variable $x_j$.

   Suppose that $S'$ is a $k$ element subset of 
   $\{y_{12},y_{34},x_5, \dots, x_{k+2}, y_{13}\}$
   that excludes $y_{12}$. 
   By the definition of $R^*$ there is some tuple of the form $(a_1, a_2', a_3,\dots, a_{k+2}) \in R$.  Using this tuple it follows that the restriction of $\vc{b}'$ to $S'$ satisfies the constraint $D_{S'}$.  This is because neither of the variables $y_{12}$ and $y_{24}$ are in $S'$ and so the value $a_2' \in A_2$ does not matter.  A similar argument works when $S'$ is assumed to exclude $y_{34}$ and the claim is proved.

   \medskip

   Since  $\inst I$ is a $(k, k+1)$-instance over $\alg A^2$ and it has $k+2$ variables 
   then by assumption,  $\inst I$ is sensitive.  We can use this to show that $R^* \subseteq R$ to complete the proof of this proposition.  Let $(a_1, \dots, a_{k+2}) \in R^*$ and consider the associated tuple $\vc{b} = ((a_1, a_2), (a_3, a_4), a_5, \dots, a_{k+2}) \in C$.
   Since $\inst I$ is sensitive then this $k$-tuple can be extended to a solution $\vc{b}'$ of $\inst I$.  Using any constraints of $\inst I$ whose scopes include combinations of $y_{12}$ or $y_{34}$ with $y_{13}$ or $y_{24}$ it follows that the value of $\vc{b}'$ on the variables $y_{13}$ and $y_{24}$ are $(a_1, a_3)$ and $(a_2, a_4)$ respectively.  Then considering the restriction of $\vc{b}'$ to $S = \{x_5, \dots, x_{k+2}, y_{13}, y_{24}\}$ it follows that $(a_1, \dots, a_{k+2}) \in R$ since this restriction lies in the constraint relation $D_S$.
\end{proof}

\noindent We are in a position to provide the two final proofs in this section.
\begin{proof}
    [Proof of ``3 implies 1'' in Theorem~\ref{thm:sensfull}]
    By Theorem~\ref{thm:localBP} it suffices to show that
    each subalgebra of $\alg A^{k+2}$ is determined by its $(k+1)$-ary projections. 
    Fortunately Propositions~\ref{prop:sens} provides just that.
\end{proof}
\begin{proof}
    [Proof of the ``if'' direction in Theorem~\ref{thm:mainresult}]
    For this direction we apply Proposition~\ref{prop:sens} to a special member of $\var V$, namely the $\var V$-free algebra freely generated by $\mathbf x$ and $\mathbf y$, which we will denote by $\alg F$.  Up to isomorphism, this algebra is unique and its defining property is that $\alg F \in \var V$ and for any algebra $\alg A \in \var V$, any map $f: \{\mathbf x,\mathbf y\} \to A$ extends uniquely to a homomorphism from $\alg F$ to $\alg A$.  Consequently, for any two terms $s(x,y)$ and $t(x,y)$ in the signature of $\var V$ if $s^{\alg F}({\mathbf x},{\mathbf y}) = t^{\alg F}({\mathbf x},{\mathbf y})$ then the equation $s(x,y) \approx t(x,y)$ holds in $\var V$.
   
    Let $\alg R$ be the subalgebra of $\alg F^{k+2}$ generated by the tuples $({\mathbf y},{\mathbf x},{\mathbf x},\dotsc,{\mathbf x})$,
    $({\mathbf x},{\mathbf y},{\mathbf x},\dotsc,{\mathbf x})$, \dots,
    $({\mathbf x},\dotsc,{\mathbf x},{\mathbf y})$.
    By Proposition~\ref{prop:sens}, the algebra $\alg R$ is determined by its $(k+1)$-ary projections and so the constant tuple $({\mathbf x},\dotsc,{\mathbf x})$
    belongs to $R$. 
    The term generating this tuple from the given generators of $\alg R$ defines the required $(k+2)$-ary near unanimity operation.
\end{proof}

\section{New loop lemmata}\label{sect:newll}
A \emph{loop lemma} is a theorem stating that a binary relation satisfying certain structural and algebraic requirements necessarily contains a \emph{loop} -- a pair $(a,a)$.
In this section we provide two new loop lemmata, Theorem~\ref{thm:loop_cycle} and Theorem~\ref{thm:loop_inv}, which generalize an ``infinite loop lemma'' of Ol\v s\'ak~\cite{O17} and may be of independent interest. Theorem~\ref{thm:loop_inv} is a crucial tool for the proof sketched  in Section~\ref{sect:sensitivity} and presented in Appendix~\ref{sec:k+2}.

The algebraic assumptions in the new loop lemmata concern absorption, a concept that has proven to be useful in the algebraic theory of $\csp$s and in universal algebra~\cite{AbsorptionSurvey}. We adjust the standard definition to our specific purposes.
We begin with a very elementary definition.
\begin{definition}
    Let $R$ and $S$ be sets.
    We call a tuple $(a_1,\dotsc,a_n)$ a
    {\em one-$S$-in-$R$ tuple}
    if for exactly one $i$ we have $a_i\in S$ and all the other $a_i$'s are in $R$.
\end{definition}
Next we proceed to define a relaxation of the standard absorbing notion.
We follow a standard notation, silently extending operations of an algebra to powers~%
(by computing them coordinate-wise).
\begin{definition}
    Let $\alg A$ be an algebra, $\alg R\leq\alg A^k$ and $S \subseteq A^k$.
    We say that $R$ locally $n$-absorbs $S$ if,
    for  every finite set $\mathcal C$ of one-$S$-in-$R$ tuples of length $n$, there is a term operation $t$ of $\alg A$
    such that $t(\vc{a^1},\dotsc,\vc{a^n})\in R$ whenever $(\vc{a^1},\dotsc,\vc{a^n})\in\mathcal C$.
    We will say that $R$ {\em locally absorbs} $S$, if $R$ locally $n$-absorbs $S$ for some $n$.
\end{definition}

Absorption, even in this  form, is stable under various constructions. The following lemma lists some of them  and we leave it without a proof~%
(the reasoning is identical to the one in e.g. Proposition 2 in~\cite{AbsorptionSurvey}).

\begin{lemma}~\label{lem:abs_pp}
Let $\alg A$ be an algebra and $\alg R\leq \alg A^2$ such that $R$ locally $n$-absorbs $S$.
Then $R^{-1}$ locally $n$-absorbs $S^{-1}$;
and $R\circ R$ locally $n$-absorbs $S\circ S$,
and $R\circ R\circ R$ locally $n$-absorbs $S\circ S\circ S$ etc.
\end{lemma}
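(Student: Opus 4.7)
The plan is to prove each of the three assertions by a uniform reduction to the given hypothesis on $R$. For the claim that $R^{-1}$ locally $n$-absorbs $S^{-1}$, I would use the fact that the coordinate swap $(x,y)\mapsto(y,x)$ is a bijection between tuples in $R^{-1}$ (resp.\ $S^{-1}$) and tuples in $R$ (resp.\ $S$), and that every term operation of $\alg A$ commutes with this swap when applied pairwise. So given a finite collection $\mathcal C'$ of one-$S^{-1}$-in-$R^{-1}$ tuples of length $n$, swapping each pair produces a finite collection $\mathcal C$ of one-$S$-in-$R$ tuples of length $n$; the hypothesis furnishes a term $t$ sending $\mathcal C$ into $R$, and the same $t$ then sends $\mathcal C'$ into $R^{-1}$.

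For $R\circ R$ locally $n$-absorbing $S\circ S$, the idea is to insert midpoints and split each length-$n$ input tuple into two halves. Given a finite set $\mathcal C'$ of one-$(S\circ S)$-in-$(R\circ R)$ tuples, for each tuple $((a^1_1,a^1_2),\dotsc,(a^n_1,a^n_2))\in\mathcal C'$ with exceptional index $j$, I would choose witnesses $b^1,\dotsc,b^n\in A$ such that $(a^i_1,b^i),(b^i,a^i_2)\in R$ for every $i\neq j$ and $(a^j_1,b^j),(b^j,a^j_2)\in S$. The ``left halves'' $((a^1_1,b^1),\dotsc,(a^n_1,b^n))$ and the ``right halves'' $((b^1,a^1_2),\dotsc,(b^n,a^n_2))$ are then two one-$S$-in-$R$ tuples of length $n$. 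Aggregating these halves over all tuples in $\mathcal C'$ yields a finite collection $\mathcal C$ of one-$S$-in-$R$ tuples, and a single term $t$ supplied by the hypothesis for $\mathcal C$ produces, for every original tuple, a value $t(b^1,\dotsc,b^n)$ that witnesses membership of $t$ applied coordinatewise in $R\circ R$.

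The longer compositions $R\circ R\circ R$, $R\circ R\circ R\circ R$, and so on, follow by the same construction with a longer chain of internal witnesses, or equivalently by iterating the two-step case (since $R\circ R$ locally $n$-absorbs $S\circ S$ can then be fed back into the argument together with one more factor of $R$). The main subtlety, which is also the only place where any care is required, is that the hypothesis must be invoked exactly once, on the combined collection $\mathcal C$ aggregated from all tuples in $\mathcal C'$, so that a single term $t$ serves uniformly for the entire input collection; this is precisely what the definition of local absorption demands, and it is what forces the choice of witnesses to be packaged into a single tuple rather than handled one input at a time. Beyond this bookkeeping the argument is purely formal manipulation of tuples and relation composition, mirroring the standard proof referenced in~\cite{AbsorptionSurvey}.
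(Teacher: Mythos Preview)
Your argument is correct and is precisely the standard one the paper alludes to (the paper omits the proof entirely, pointing to~\cite{AbsorptionSurvey}); in particular your emphasis on aggregating all split halves into a single finite collection before invoking the hypothesis once is exactly the point that makes the \emph{local} version go through. One small caveat: the parenthetical about ``iterating the two-step case'' by feeding the conclusion for $R\circ R$ back in as a black box does not quite work, because two separate invocations of local absorption (one for the $R$-factor, one for the $R\circ R$-factor) may return two different terms, whereas you need a single term handling both halves simultaneously. The direct route you state first---inserting a longer chain of internal witnesses and reducing everything to a single finite set of one-$S$-in-$R$ tuples in one shot---is the right one.
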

Let us prove a first basic property of local absorption.
\begin{lemma} \label{lem:aux_loops}
Let $\alg A$ be an idempotent algebra and $\alg R\leq \alg A^2$ such that $R$ locally $n$-absorbs $S$.
Let $(a_1, \ldots, a_n)$ and $(b_1, \ldots, b_n)$ be directed walks in $R$, and let $(a_i,b_i) \in S$ for each $i$ (see Figure~\ref{fig:Lemma20}).
Then there exists a directed walk from $a_1$ to $b_n$ of length $n$ in $R$.
\end{lemma}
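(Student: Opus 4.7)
The plan is to construct $n$ auxiliary sequences of length $n+1$, each serving as a ``near-walk'' from $a_1$ to $b_n$ in $R\cup S$ that uses exactly one $S$-edge, and then to collapse these into a genuine walk in $R$ using a single term operation supplied by local $n$-absorption. Specifically, for each $j\in\{1,\dots,n\}$ I would set
\[
w^j \;=\; (a_1,a_2,\dots,a_j,\,b_j,b_{j+1},\dots,b_n),
\]
a sequence of $n+1$ entries. The consecutive pairs $(a_i,a_{i+1})$ and $(b_i,b_{i+1})$ lie in $R$ since $(a_1,\dots,a_n)$ and $(b_1,\dots,b_n)$ are walks in $R$, while the single crossing pair $(a_j,b_j)$ lies in $S$ by hypothesis. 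Hence $w^j$ is a length-$n$ path (i.e.\ $n$ edges) whose unique $S$-edge sits at position $j$.

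Next I would bundle these into one-$S$-in-$R$ tuples position by position. For each edge index $i\in\{1,\dots,n\}$, form the tuple of $i$-th edges across the $n$ paths:
\[
\tau_i \;=\; \bigl((w^1_i,w^1_{i+1}),\,(w^2_i,w^2_{i+1}),\,\dots,\,(w^n_i,w^n_{i+1})\bigr).
\]
By construction the $j$-th entry of $\tau_i$ is the $i$-th edge of $w^j$, which lies in $S$ exactly when $j=i$ and in $R$ otherwise. So each $\tau_i$ is a one-$S$-in-$R$ tuple of length $n$, and the set $\mathcal C=\{\tau_1,\dots,\tau_n\}$ is finite. The local $n$-absorption hypothesis then yields a single $n$-ary term operation $t$ of $\alg A$ with $t(\tau_i)\in R$ for all $i$.

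Finally I would define $c_i := t(w^1_i,\dots,w^n_i)\in A$ for $i=1,\dots,n+1$ and verify the three required properties. By idempotency of $t$, the endpoints are fixed: $c_1=t(a_1,\dots,a_1)=a_1$ and $c_{n+1}=t(b_n,\dots,b_n)=b_n$, since every $w^j$ starts with $a_1$ and ends with $b_n$. Applying $t$ coordinate-wise in $\alg A^2$ to $\tau_i$ gives $(c_i,c_{i+1})=t(\tau_i)\in R$. Hence $c_1,c_2,\dots,c_{n+1}$ is a directed walk in $R$ of length $n$ from $a_1$ to $b_n$.

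I do not anticipate a serious obstacle: this is a straightforward diagonal use of absorption. The only points that need care are (a) producing a \emph{finite} collection of one-$S$-in-$R$ tuples (so that local absorption supplies a single term $t$ valid uniformly across all edge positions), and (b) invoking idempotency to pin $c_1$ and $c_{n+1}$ to the prescribed endpoints $a_1$ and $b_n$; both are built into the choice of $w^j$.
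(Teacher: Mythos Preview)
Your proof is correct and is essentially the same argument as the paper's: applying your term $t$ to the columns of the matrix whose rows are the $w^j$ yields precisely the sequence $t(a_1,\dots,a_1),\, t(b_1,a_2,\dots,a_2),\, t(b_2,b_2,a_3,\dots,a_3),\,\dots,\, t(b_n,\dots,b_n)$ that the paper writes down directly. The only difference is packaging---you introduce the intermediate ``near-walks'' $w^j$ explicitly, whereas the paper simply exhibits the resulting walk and points to the $n$ one-$S$-in-$R$ tuples witnessing its consecutive steps.
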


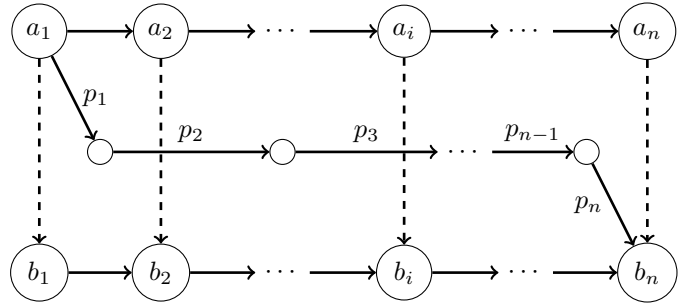
\begin{figure}
\centering
\begin{tikzpicture}[scale=0.8]
    \node[shape=circle,draw=black] (a_1) at (0,0) {$a_1$};
    \node[shape=circle,draw=black] (a_2) at (2,0) {$a_2$};
    \node[draw=none] (blank1) at (4,0) {$\cdots$};
    \node[shape=circle,draw=black] (a_i) at (6,0) {$a_i$};
    \node[draw=none] (blank2) at (8,0) {$\cdots$};
    \node[shape=circle,draw=black] (a_r) at (10,0) {$a_n$};

    \node[shape=circle,draw=black] (b_1) at (0,-4) {$b_1$};
    \node[shape=circle,draw=black] (b_2) at (2,-4) {$b_2$};
    \node[draw=none] (blank3) at (4,-4) {$\cdots$};
    \node[shape=circle,draw=black] (b_i) at (6,-4) {$b_i$};
    \node[draw=none] (blank4) at (8,-4) {$\cdots$};
    \node[shape=circle,draw=black] (b_r) at (10,-4) {$b_n$};

    \node[shape=circle,draw=black] (blankpath1) at (1,-2) {};
    \node[shape=circle,draw=black] (blankpath2) at (4,-2) {};
    \node[draw=none] (blankpath3) at (7,-2) {$\cdots$};
    \node[shape=circle,draw=black] (blankpath4) at (9,-2) {};

    \path [->,line width=1pt] (a_1) edge node[left] {} (a_2);
    \path [->,line width=1pt] (a_2) edge node[left] {} (blank1);
    \path [->,line width=1pt] (blank1) edge node[left] {} (a_i);
    \path [->,line width=1pt] (a_i) edge node[left] {} (blank2);
    \path [->,line width=1pt] (blank2) edge node[left] {} (a_r);

    \path [->,line width=1pt] (b_1) edge node[left] {} (b_2);
    \path [->,line width=1pt] (b_2) edge node[left] {} (blank3);
    \path [->,line width=1pt] (blank3) edge node[left] {} (b_i);
    \path [->,line width=1pt] (b_i) edge node[left] {} (blank4);
    \path [->,line width=1pt] (blank4) edge node[left] {} (b_r);

    \path [->,dashed,line width=1pt] (a_1) edge node[left] {} (b_1);
    \path [->,dashed,line width=1pt] (a_2) edge node[left] {} (b_2);
    \path [->,dashed,line width=1pt] (a_i) edge node[left] {} (b_i);
    \path [->,dashed,line width=1pt] (a_r) edge node[left] {} (b_r);

    \path [->,line width=1pt] (a_1) edge node[right] {$p_1$} (blankpath1);
    \path [->,line width=1pt] (blankpath1) edge node[left,above] {$p_2$} (blankpath2);
    \path [->,line width=1pt] (blankpath2) edge node[left,above] {$p_3$} (blankpath3);
    \path [->,line width=1pt] (blankpath3) edge node[left,above] {$p_{n-1}$} (blankpath4);
    \path [->,line width=1pt] (blankpath4) edge node[left] {$p_n$} (b_r);

\end{tikzpicture}

\caption{Solid arrows represent tuples from R and dashed arrows represent tuples from S.} \label{fig:Lemma20}

\end{figure}

\begin{proof}
We will show that there is a term operation $t$ of the algebra $\alg A$ such that the following $(n+1)$-tuple of elements of $A$ is a walk of length $n$ in $R$ from $a_1$ to $b_n$.
\begin{align*}
    ( a_1 =  &t(a_1, a_1, a_1, \ldots, a_1), \\
        &t(b_1, a_2, a_2, \ldots, a_2), \\
        &t(b_2, b_2, a_3, \ldots, a_3), \\
    &\vdots \\
        &t(b_{n-1}, b_{n-1}, \ldots, b_{n-1}, a_{n}), \\
    b_n =  &t(b_n, b_n,b_n, \ldots, b_n) ).
\end{align*}
In order to choose a proper $t$ we apply the definition of local absorption to the set of $(n+1)$
one-$S$-in-$R$ tuples corresponding to the steps in the path.
\end{proof}

The loop lemma of Ol\v s\'ak concerns symmetric relations absorbing the equality relation $\{(a,a) \mid a\in A\}$, which is denoted $=_A$.
The original result, stated in a slightly different language, does not cover the case of local absorption.
However, a typographical modification of a proof mentioned in~\cite{O17} shows that the theorem holds.
For completeness sake, we present this proof in
Appendix~\ref{app:ll}.
\begin{theorem}[\cite{O17}] \label{thm:olsak_loop}
    Let $\alg A$ be an \textbf{idempotent} algebra and
    $\alg R\leq \alg A^2$ be nonempty and symmetric.
    If $R$ locally absorbs $=_A$, then $R$ contains a loop.
\end{theorem}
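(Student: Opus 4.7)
My plan is to carry out a minor modification of Olšák's original argument from~\cite{O17}, replacing the single absorbing term available in his setting with the term supplied by local absorption restricted to a finite configuration. Suppose toward contradiction that $R$ is symmetric, locally $n$-absorbs $=_A$, is nonempty, and yet contains no loop; in particular $R$ is irreflexive. Fix an edge $(a_0, b_0)\in R$; by symmetry $(b_0,a_0)\in R$ as well.

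The crucial observation enabling the adaptation is that Olšák's argument makes only finitely many calls to the absorbing term, each on a tuple whose entries are drawn from a fixed finite set of $R$-pairs and diagonal pairs. I would enumerate this finite collection $\mathcal C$ of one-$=_A$-in-$R$ tuples of length $n$ up front, and then apply the local $n$-absorption hypothesis directly to $\mathcal C$ to obtain a single term operation $t$ of $\alg A$ whose coordinate-wise action sends every tuple of $\mathcal C$ into $R$. From this point on, $t$ plays the role of the absorbing term in Olšák's argument, and all subsequent manipulations — splicing parallel $R$-walks linked by diagonal pairs via Lemma~\ref{lem:aux_loops}, and passing to the relational powers $R\circ R$ and $R\circ R\circ R$ (which still locally $n$-absorb the diagonal) via Lemma~\ref{lem:abs_pp} — go through essentially verbatim.

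The combinatorial core of the argument, which I expect to be the main obstacle, is the precise choice of $\mathcal C$. The tuples must be assembled so that idempotency of $t$, through $t(a_0,\ldots,a_0)=a_0$ and $t(b_0,\ldots,b_0)=b_0$, forces the two coordinates of the image of some tuple under $t$ to coincide, producing a pair $(c,c)\in R$ and contradicting irreflexivity. In Olšák's setting the arity of the absorbing term can effectively be taken equal to a specific small value by a well-chosen symmetric pattern; in our setting $n$ is prescribed by the local absorption hypothesis and is not under our control, so the construction of $\mathcal C$ must be parameterised by $n$. Concretely, one uses $n$ shifted placements of a single diagonal entry along a base alternating walk $a_0\to b_0\to a_0\to\cdots$ in $R$, exploiting symmetry of $R$ together with iterated compositions $R\circ R$, $R\circ R\circ R$ to reduce the dependence on arity. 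With such a $\mathcal C$ in hand, the term returned by local absorption delivers a pair in $R$ whose two coordinates coincide, yielding the desired loop and completing the contradiction.
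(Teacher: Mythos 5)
There is a genuine gap: your proposal defers the entire combinatorial core to ``Ol\v{s}\'ak's original argument'' and the one concrete mechanism you do sketch for it is not how that argument works and would not succeed. From a single symmetric edge $(a_0,b_0)$ and an alternating walk, absorption applications in the style of Lemma~\ref{lem:aux_loops} only yield new walks (for instance a closed walk of odd length); no single, cleverly chosen finite family $\mathcal{C}$ of one-$=_A$-in-$R$ tuples built from that edge can force, via idempotency, two coordinates of one term image to coincide and hence produce a loop in one shot. The actual proof is a minimal-counterexample argument on the arity $n$ of the absorbing operations: one first shows every loopless counterexample candidate has an odd closed walk, then shortens odd cycles by passing to $R\circ R\circ R$ (Lemma~\ref{lem:abs_pp}) to obtain a candidate with a $3$-clique, then observes that no candidate can contain an $n$-clique (this is the only place where one term application plus idempotency directly yields a loop), and finally, from a maximal clique $a_1,\dots,a_m$ with $3\le m<n$, forms the subuniverse $B$ of common neighbours of $a_1,\dots,a_{m-2}$ and $S=B^2\cap R$, and shows that $T=S\circ S\circ S$ locally $(n-1)$-absorbs $=_B$, contradicting minimality of $n$. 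None of this descent structure -- in particular the clique-versus-arity interplay and the arity-reduction step -- appears in your proposal, and it cannot be compressed into a single application of one term to a preassembled configuration.

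A secondary, but still real, problem is your plan to ``enumerate $\mathcal{C}$ up front'' and fix one term $t$ for the whole proof. The tuples fed to local absorption in the genuine argument depend on elements whose existence is only established in the course of the proof (the shortened odd cycles, the maximal clique, the auxiliary elements $c_j',c_j''$ in the arity-reduction step), some of which are themselves outputs of earlier term applications; moreover the reduction step works with a different relation ($T\le B^2$) and a smaller arity. So the applications of local absorption must be interleaved with the construction, using possibly different terms at different stages -- which is exactly why the local version requires rerunning the argument (a ``typographical modification,'' but a rerun nonetheless) rather than a one-shot choice of $t$. Your meta-observation that each stage needs only finitely many tuples, so local absorption suffices, is the correct reason the adaptation works; but as written the proposal neither reproduces the argument being adapted nor gives a workable substitute for its core.
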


In order to apply this theorem in the case of  sensitive instances, we need to generalize it.
In the following two theorems we will gradually relax the requirement that $R$ is symmetric.
In the first step, we substitute it with a condition requiring a closed, directed walk in the graph~%
(i.e., a sequence of possibly repeating vertices, with consecutive vertices connected by forward edges and the first and last vertex identical).
Recall that $R^{-1}$ is the inverse relation to $R$ and let us denote by $\gp{R}{l}$ the $l$-fold relational composition of $R$ with itself.

\begin{theorem} \label{thm:loop_cycle}
    Let $\alg A$ be an \textbf{idempotent} algebra and $\alg R \leq \alg A^2$ contain a directed closed walk.
    If $R$ locally absorbs $=_A$, then $R$ contains a loop.
\end{theorem}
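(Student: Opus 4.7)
My approach is to reduce the statement to Ol\v{s}\'ak's loop lemma (Theorem~\ref{thm:olsak_loop}) by extracting from $R$ a symmetric subalgebra of $\alg A^2$ that still locally absorbs $=_A$. Let $l$ denote the length of the given directed closed walk $a_0 \to a_1 \to \cdots \to a_l = a_0$ in $R$, so that $(a_0, a_0) \in \gp{R}{l}$. My first candidate for the symmetric relation is $T := \gp{R}{l} \cap (\gp{R}{l})^{-1}$.

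The routine checks are that $T$ is a symmetric subalgebra of $\alg A^2$ (it is the intersection of the subalgebras $\gp{R}{l}$ and $(\gp{R}{l})^{-1}$, which are interchanged by coordinate swap), that $T$ is non-empty since $(a_0, a_0) \in T$, and that $T$ locally absorbs $=_A$. For the last point, Lemma~\ref{lem:abs_pp} gives that $\gp{R}{l}$ locally absorbs $\gp{=_A}{l} = {=_A}$; then given any finite family $\mathcal{C}$ of one-$=_A$-in-$T$ tuples, the coordinate-swapped copies are, by symmetry of $T$ and of $=_A$, also one-$=_A$-in-$T$ tuples and in particular one-$=_A$-in-$\gp{R}{l}$ tuples. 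Applying $\gp{R}{l}$-absorption to the augmented family yields a single term $t$ whose values on the original tuples land in $\gp{R}{l}$, and whose values on the swapped copies likewise land in $\gp{R}{l}$, forcing the originals to land in $(\gp{R}{l})^{-1}$ as well, hence in $T$. Ol\v{s}\'ak's theorem applied to $T$ now produces a loop $(b, b) \in T$, i.e.\ a directed closed walk of length $l$ through $b$ in $R$. In the base case $l = 2$ one has $T \subseteq R \cap R^{-1} \subseteq R$ and the loop is already in $R$, so the proof is complete; the substantive issue arises only for $l \geq 3$.

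\textbf{The main obstacle} is the transition from a loop in the iterate $\gp{R}{l}$ to an honest loop in $R$. A natural next move is iteration: the set $B := \{b \in A : (b, b) \in T\}$ is a subalgebra of $\alg A$ (being the projection of the subalgebra $T \cap {=_A}$ of $\alg A^2$), the restriction $R_B := R \cap (B \times B)$ is a subalgebra of $\alg A^2$ that still locally absorbs $=_B$, and the closed walk stays inside $B$ (every vertex on a length-$l$ closed walk of $R$ is itself on one), so $R_B$ still meets the hypotheses of the theorem but on a smaller, more structured carrier. One might then combine this with Lemma~\ref{lem:aux_loops}, which supplies closed walks of all sufficiently large lengths through $a_0$ and hence through any vertex of $B$, to either shrink $l$ or to set up an Ol\v{s}\'ak-style one-shot term construction directly inside $R$, using the closed walk as a substitute for symmetry. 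Engineering the final step so that the resulting term lands in $R$ rather than in a composition $\gp{R}{l}$ is where I expect the proof to need the most care.
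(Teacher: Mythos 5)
Your first step gains nothing: since the given closed walk already yields $(a_0,a_0)\in\gp{R}{l}$, the relation $T=\gp{R}{l}\cap(\gp{R}{l})^{-1}$ contains the loop $(a_0,a_0)$ from the outset, so applying Theorem~\ref{thm:olsak_loop} to $T$ only returns what you started with, namely a closed walk of length $l$ in $R$. (Your remark about $l=2$ is also off as written: $\gp{R}{2}\cap(\gp{R}{2})^{-1}$ need not be contained in $R$; the relation you want in that case is $R\cap R^{-1}$ itself, which is nonempty precisely when $R$ has a closed walk of length $2$.) You then correctly identify the real difficulty --- passing from a loop of an iterate to a loop of $R$ --- but you leave it unresolved, and the ideas you sketch (restricting to $B=\{b\mid (b,b)\in T\}$, or an Ol\v{s}\'ak-style one-shot term construction inside $R$) do not obviously shrink $l$ or force a term to land in $R$ rather than in a composition. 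So there is a genuine gap: the reduction to $\gp{R}{l}$ in a single shot cannot work, because a loop of $\gp{R}{l}$ is exactly the hypothesis you were given.

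The missing idea, and the way the paper argues, is to make the reduction \emph{halve} the walk length instead of collapsing it at once. First, Lemma~\ref{lem:aux_loops} (applied to $R$, $=_A$, and two copies of an initial segment of the walk) lets one pad the closed walk to any sufficiently large length, in particular to length $2^m$ for some $m$. One then inducts on $m$: setting $R'=\gp{R}{2}$, the relation $R'$ locally absorbs $=_A$ by Lemma~\ref{lem:abs_pp} and contains a closed walk of length $2^{m-1}$, so by the inductive hypothesis $R'$ has a loop --- but a loop of $\gp{R}{2}$ is a closed walk of length $2$ in $R$, that is, a pair belonging to $R\cap R^{-1}$. This nonempty symmetric relation locally absorbs $=_A$, and Theorem~\ref{thm:olsak_loop} applied to it produces a loop lying in $R\cap R^{-1}\subseteq R$. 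The issue you flagged --- engineering the final term so that it lands in $R$ rather than in an iterate --- is resolved precisely because the only application of Ol\v{s}\'ak's lemma that produces the final loop is to $R\cap R^{-1}$, a subset of $R$; compositions are used only inside the induction, to shorten the walk, never to certify the loop itself.
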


\begin{proof}
    Let $n$ denote the arity of the absorbing operations.
    The proof is by induction on $l \geq 0$, where $l$ is a number such that there exists a directed closed walk from $a_1$ to $a_1$ of length $2^l$.

    We start by verifying that such an $l$ exists.
    Take a directed walk $(a_1, \ldots, a_{k-1}, a_k = a_1)$ in $R$. We may assume that its length $k$ is at least $n$, since we can, if necessary, traverse the walk multiple times.
    An application of Lemma~\ref{lem:aux_loops} to the relations $R, =_A$ and tuples $(a_1, \dots, a_n)$, $(a_1, \dots, a_n)$ gives us
    a directed walk from $a_1$ to $a_n$ of length $n$. Appending this walk with the walk $(a_n, a_{n+1}, \ldots, a_k=a_1)$ yields a directed walk from $a_1$ to $a_1$ of length $k+1$. In this way, we can get a directed walk from $a_1$ to $a_1$ of any length greater than $k$.

    Now we return to the inductive proof and start with the base of induction for $l=0$ or $l=1$.
    If $l=0$, then we have found a loop. If $l=1$ we have a closed walk of length $2$, that is, a pair $(a,b)$ which belongs to both $R$ and $R^{-1}$. We set $R' = R \cap R^{-1}$ and observe that $R'$ is nonempty and symmetric, and it is not hard to verify that $R'$ locally absorbs $=_A$.
    Ol\v s\'ak's loop lemma, in the form of Theorem~\ref{thm:olsak_loop}, gives us a loop in $R$.

    Finally, we make the induction step from $l-1$ to $l$.
    Take a closed walk $(a_1, a_2, \ldots)$ of length $2^l$ and consider
    $R' = \gp{R}{2}$. Observe that $R'$ contains a directed closed walk of length $2^{l-1}$ (namely $(a_1, a_3, \ldots)$), and that $R'$ locally absorbs $=_A$ (by Lemma~\ref{lem:abs_pp}), so, by the inductive hypothesis, $R'$ has a loop. In other words, $R$ has a directed closed walk of length 2 and we are done by the case $l=1$.
\end{proof}

Note that we cannot further relax the assumption on the graph by requiring that, for example, it has an infinite directed walk.
Indeed the natural order of the rationals~(taken for $R$) locally $2$-absorbs the equality relation by the binary arithmetic mean
operation $(a+b)/2$~%
(i.e., all the absorbing evaluations are realized by a single operation).
The same relation locally $4$-absorbs equality with the near unanimity operation  $n(x,y,z,w)$ which, when applied to $a\leq b\leq c\leq d$, in any order, returns $(b+c)/2$.

Nevertheless, we can strengthen the algebraic assumption and still provide a loop; the following theorem is one of the key components in the proof sketch provided in Section~\ref{sect:sensitivity} and the full proof found in Appendix~\ref{sec:k+2}~%
(albeit applied there with $l=1$).

\begin{theorem} \label{thm:loop_inv}
    Let $\alg A$ be an \textbf{idempotent} algebra and $\alg R \leq \alg A^2$ contain a directed walk of length $n-1$.  If $R$ locally $n$-absorbs $=_A$ and $\gp{R}{l}$ locally $n$-absorbs $R^{-1}$ for some $l \in \en$ then $R$ contains a loop.
\end{theorem}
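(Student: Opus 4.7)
My plan is to reduce the problem to Theorem~\ref{thm:loop_cycle} by producing a directed closed walk in a suitable power of $R$. Set $R' := \gp{R}{L}$, where $L$ is a multiple of $l$ with $L \geq n-1$ and $L - l$ itself a positive multiple of $l$. A mild extension of Lemma~\ref{lem:abs_pp}---namely, if $R_1$ locally $n$-absorbs $S_1$ and $R_2$ locally $n$-absorbs $S_2$ then $R_1 \circ R_2$ locally $n$-absorbs $S_1 \circ S_2$, which follows by the same coordinate-wise argument---shows that $R'$ locally $n$-absorbs both $=_A$ (by iterating the $=_A$-absorption of $R$) and $R^{-1}$ (using $=_A \circ R^{-1} = R^{-1}$, combining a power's $=_A$-absorption with the given $\gp{R}{l}$-absorption of $R^{-1}$). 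Iterated applications of Lemma~\ref{lem:aux_loops} with $R$ absorbing $=_A$, starting from the given walk, produce $R$-walks from $a_1$ to $a_n$ of every length $\geq n-1$; in particular $(a_1, a_n) \in R'$.

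Next, a loop $(a,a) \in R' = \gp{R}{L}$ corresponds to a directed closed walk of length $L$ in $R$, so a final application of Theorem~\ref{thm:loop_cycle} to $R$ itself (which locally absorbs $=_A$ by hypothesis) would yield the desired loop in $R$. It therefore suffices to find a loop in $R'$; and since $R'$ locally absorbs $=_A$, Theorem~\ref{thm:loop_cycle} applied to $R'$ further reduces this to exhibiting a directed closed walk in $R'$. To construct such a walk I would use Lemma~\ref{lem:aux_loops} with $R'$ absorbing $R^{-1}$: take an $R$-walk of length $nL$ from $a_1$ to $a_n$ (again by iterating Lemma~\ref{lem:aux_loops}), chop it into an $R'$-walk $a_1 = b_0, b_1, \ldots, b_n = a_n$ of length $n$, and exploit the internal $R$-edges sitting inside each $R'$-segment to supply $R^{-1}$-pairings. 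Feeding two suitably shifted copies of this $R'$-walk into Lemma~\ref{lem:aux_loops}, with the pairings $(c_i, d_i) \in R^{-1}$ coming from the single $R$-edges that lie one step apart inside the segments, produces a fresh $R'$-walk from $c_1$ to $d_n$ of length $n$; by a careful choice of the shifts and of the extension of the underlying $R$-walk, these endpoints can be forced to coincide, yielding the required closed walk.

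The main obstacle is precisely this last arrangement: absorption does not prescribe endpoints of the walks it produces, so the pairings and shifts must be chosen so that the starting vertex $c_1$ and the ending vertex $d_n$ of the Lemma~\ref{lem:aux_loops} output coincide. This exploits both the freedom to extend $R$-walks to arbitrary length (via the $=_A$-absorption of $R$) and the structure of the $\gp{R}{l}$-absorption of $R^{-1}$; in spirit it resembles the length-halving inductive step in the proof of Theorem~\ref{thm:loop_cycle}, except that rather than shortening an existing closed walk we use the $R^{-1}$-absorption to ``fold'' a long open $R'$-walk back onto itself. A secondary technical point is the generalization of Lemma~\ref{lem:abs_pp} to compositions of distinct relations used in the setup; its proof is routine.
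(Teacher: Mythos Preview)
Your overall strategy---reduce to finding a closed walk in a power $R'$ of $R$ via Lemma~\ref{lem:aux_loops}, then apply Theorem~\ref{thm:loop_cycle}---matches the paper's. But the execution has a real gap at exactly the step you flag as the main obstacle, and the proposed fix cannot work.

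If the absorbed relation is $R^{-1}$ itself, the coincidence $c_1 = d_n$ is arithmetically impossible for $n \geq 3$. With all vertices taken on a long $R$-walk $p_0, p_1, \ldots$, the pairing $(c_i, d_i) \in R^{-1}$ forces, say, $d_i = p_{g(i)}$ and $c_i = p_{g(i)+1}$; for $\vc{d}$ to be an $R'$-walk you need $g(i{+}1) - g(i) \geq n-1$, whence $g(n) - g(1) \geq (n-1)^2$, while $c_1 = d_n$ would force $g(n) = g(1)+1$. No ``careful choice of shifts'' repairs this: a single $R^{-1}$-step gives vertical displacement $1$, whereas closing the walk needs displacement of order $(n{-}1)L$. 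The paper's remedy is to absorb a \emph{power} of $R^{-1}$: with $R' = \gp{R}{ln^2}$ and $S = \gp{(R^{-1})}{n^2}$, the choices $c_i = a_{(n+i-1)n}$, $d_i = a_{in}$ give $c_1 = d_n = a_{n^2}$ by construction, and $(c_i,d_i) \in S$ because an $R$-walk of length $n^2$ from $a_{in}$ to $a_{(n+i-1)n}$ exists.

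There is also a second gap: your ``mild extension'' of Lemma~\ref{lem:abs_pp} to $R_1 \circ R_2$ locally $n$-absorbing $S_1 \circ S_2$ is \emph{not} routine. Local absorption only supplies a witnessing term per finite family, so the term for the decomposed one-$S_1$-in-$R_1$ tuples and the term for the one-$S_2$-in-$R_2$ tuples may differ, and nothing in the hypotheses lets you merge them into a single term. (The honest Lemma~\ref{lem:abs_pp} avoids this because both factors use the \emph{same} absorption, so one applies the hypothesis once to the union of all decomposed tuples.) The paper sidesteps the issue entirely: since $R' = (\gp{R}{l})^{\circ n^2}$ and $S = (R^{-1})^{\circ n^2}$, the needed absorption is a straight iterate of the given one. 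Passing to a power of $R^{-1}$ thus fixes both problems at once.
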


\begin{proof}
By applying Lemma~\ref{lem:aux_loops} similarly as in the proof of Theorem~\ref{thm:loop_cycle},  we can get, from a directed walk of length $n-1$, a directed walk $(a_1, a_2, \ldots)$ of an arbitrary length. Moreover, by the same reasoning, for each $i$ and $j$ with $j \geq i + n - 1$, there is a directed walk from $a_i$ to $a_j$ of any length greater than or equal to $j-i$.

Consider the relations $R' = \gp{R}{ln^2}$ and $S = \gp{(R^{-1})}{n^2}$, and tuples
\begin{align*}
\vc{c} &= (c_1, \ldots, c_n) := (a_{n^2}, a_{(n+1)n}, \ldots a_{(2n-1)n}), \mbox{ and } \\
\vc{d} &= (d_1, \ldots, d_n) := (a_n, a_{2n} \ldots, a_{n^2})
\end{align*}
By the previous paragraph and the definitions, both $\vc{c}$ and $\vc{d}$ are directed walks in $R'$, and $(c_i,d_i) \in S$ for each $i$. Moreover, since $\gp{R}{l}$ locally $n$-absorbs $R^{-1}$, Lemma~\ref{lem:abs_pp} implies that $R'$ locally absorbs $S$.
We can thus apply Lemma~\ref{lem:aux_loops} to the relations $R'$, $S$ and the tuples $\vc{c},\vc{d}$ and obtain a directed walk from $c_1=a_{n^2}$ to $d_{n-1}=a_{n^2}$ in $R'$. This closed walk in turn gives a closed directed walk in $R$ and we are in a position to finish the proof by applying Theorem~\ref{thm:loop_cycle}.
\end{proof} 

\section{Consistent instances are sensitive (sketch of a proof)}\label{sect:sensitivity}

In this section we present the main ideas that are used to prove the ``only if'' direction in Theorem~\ref{thm:mainresult}
and ``1 implies 2'' in Theorem~\ref{thm:sensfull}. These ideas are shown in a very simplified situation, in particular, only the case that $k=2$ and $\alg A$ is finite is considered.
In the end of this section we briefly discuss the necessary adjustments in the general situation.
A complete proof is given in Appendix~\ref{sec:k+2}.

Consider a finite idempotent algebra $\alg A$  with a 4-ary near unanimity term operation and a $(2,3)$-instance $\inst I = (V, \constr)$ over $\alg A$. Each pair $\{x,y\}$ of variables is constrained by a unique constraint $((x,y), R_{xy})$ or $((y,x),R_{yx})$. For convenience we also define $R_{yx} = R^{-1}_{yx}$ (or $R_{xy} = R^{-1}_{yx}$ in the latter case) and $R_{xx}$ to be the equality relation on $A$.
Our aim is to show that every pair in every constraint relation extends to a solution. The overall structure of the proof is by induction on the number of variables of $\inst I$.

We fix a pair of variables $\{x_1,x_2\}$ and a pair $(a_1,a_2) \in R_{x_1x_2}$ that we want to extend.
The strategy is to consider the instance $\inst J$ obtained by removing $x_1$ and $x_2$ from the set of variables and shrinking the constraint relations $R_{uv}$ to $R'_{uv}$ so that only the pairs consistent with the fixed choice remain, that is,
\[
R'_{uv} = \{(b,c) \in R_{uv} \mid (a_1,b) \in R_{x_1u}, (a_2,b) \in R_{x_2u},
(a_1,c) \in R_{x_1v},
(a_2,c) \in R_{x_2v}\}.
\]
We will show that $\inst J$ contains a nonempty $(2,3)$-subinstance, that is, an instance whose constraint relations are nonempty subsets of the original ones. The induction hypothesis then gives us a solution to $\inst J$ which, in turn, yields a solution to $\inst I$ that extends the fixed choice.

Having a nonempty $(2,3)$-subinstance can be characterized by the solvability of certain relaxed instances. The following concepts will be useful for working with relaxations of $\inst I$ and $\inst J$.

\begin{definition}
A \emph{pattern} is a triple $\pat{P} = (W; \asc{F},l)$,
  where $(W; \asc{F})$ is an undirected graph, and $l$ is a mapping $l: W \to V$.
  The variable $l(i)$ is referred to as the \emph{label} of $i$.

A \emph{realization} (\emph{strong realization}, respectively) of $\pat{P}$ is a mapping $\alpha: W \to A$,
  which \emph{satisfies} every edge $\{w_1,w_2\} \in \asc{F}$, that is,
$(\alpha(w_1), \alpha(w_2)) \in R_{l(w_1), l(w_2)}$ ($(\alpha(w_1), \alpha(w_2)) \in R'_{l(w_1), l(w_2)}$, respectively).
  (Strong realization only makes sense if $l(W) \subseteq V \setminus \{x_1,x_2\}$.)

A pattern is (\emph{strongly}) \emph{realizable} if it has a (strong) realization.
\end{definition}

The most important patterns for our purposes are \emph{2-trees},
these are patterns  obtained from the empty pattern by gradually adding triangles (patterns whose underlying graph is the complete graph on 3 vertices)  and merging them along a vertex or an edge to the already constructed pattern. Their significance stems from the following well known fact.

\begin{lemma} \label{lem:trees_and_two_three}
An instance (over a finite domain) contains a nonempty (2,3)-subinstance if and only if every 2-tree is realizable in it.
\end{lemma}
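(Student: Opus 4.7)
The plan is to prove the two directions separately; the reverse direction is the substantial one and rests on the standard $(2,3)$-consistency algorithm.

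For the forward direction, suppose $\inst I$ contains a nonempty $(2,3)$-subinstance with constraint relations $R''_{uv}\subseteq R_{uv}$. I would prove by induction on the number of triangles in a 2-tree $\pat P$ that $\pat P$ has a realization $\alpha$ whose edge pairs lie in the $R''_{uv}$'s. The single-triangle base uses nonemptiness of one edge constraint together with $(2,3)$-consistency of the subinstance to assign the third vertex. At a step gluing a new triangle along a shared edge, the IH and $(2,3)$-consistency extend the realization to the new vertex. When the new triangle is glued along a single vertex $v$, one first observes that $(2,3)$-consistency forces $\alpha(v)$ to be a coordinate of some pair in each relevant $R''_{l(v),w}$, then picks a value for one new vertex and applies $(2,3)$-consistency for the third.

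For the reverse direction, I would argue the contrapositive. Assuming $\inst I$ has no nonempty $(2,3)$-subinstance, consider the iterative refinement
\[
R^{(0)}_{xy} = R_{xy},\qquad R^{(n+1)}_{xy} = \{(a,b)\in R^{(n)}_{xy} : \forall z\in V\ \exists c\ (a,c)\in R^{(n)}_{xz}\wedge (b,c)\in R^{(n)}_{yz}\},
\]
which stabilizes to the largest $(2,3)$-subinstance of $\inst I$ by finiteness of the domain; by hypothesis some $R^{(n_0)}_{xy}$ is empty. I would then show by induction on $n$ that for each $(a,b)\in A^2\setminus R^{(n)}_{xy}$ there exists a \emph{blocking 2-tree} $\pat B_{a,b}$ with a distinguished edge labeled $(x,y)$ admitting no realization that sends this edge to $(a,b)$. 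The base case is immediate when $(a,b)\notin R_{xy}$: any triangle on $x,y,z$ works. For the step, if $(a,b)\in R^{(n)}_{xy}\setminus R^{(n+1)}_{xy}$ with witness $z$, then for every $c$ at least one of $(a,c)\notin R^{(n)}_{xz}$ or $(b,c)\notin R^{(n)}_{yz}$ holds, so the IH supplies a blocking 2-tree for each such $c$; I would then form $\pat B_{a,b}$ by starting with the triangle on $x,y,z$ and gluing each of these blocking 2-trees along the appropriate edge $(x,z)$ or $(y,z)$.

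Finally, fixing $x,y$ with $R^{(n_0)}_{xy}=\emptyset$ gives a blocking 2-tree $\pat B_{a,b}$ for every $(a,b)\in R_{xy}$; gluing all of them along their common distinguished edge yields a 2-tree whose every realization would have to assign some pair from $R_{xy}$ to that edge, yet every such pair is blocked by a subpattern, so no realization exists, contradicting the hypothesis. The main obstacle is verifying that each gluing operation preserves the 2-tree property: whenever two 2-trees share a common edge $e$, the glued pattern can be constructed by laying down the first 2-tree in its own triangle-by-triangle order and then adding the triangles of the second 2-tree in an order beginning with a triangle containing $e$ (which exists because $e$ lies in that 2-tree), each subsequent triangle sharing an edge or vertex with what has already been built.
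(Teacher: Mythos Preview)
The paper does not prove this lemma at all; it is stated in Section~5 as a ``well known fact'' and left without argument. So there is no paper proof to compare against, only the question of whether your argument is sound.

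Your proof is correct and follows the standard line. The forward direction is essentially the content of Lemma~\ref{lem:trees_real} in the paper (specialized to $l=2$), and your inductive argument handling both edge-gluing and vertex-gluing is fine; the observation that $(2,3)$-consistency forces all projections $\proj_u R''_{uv}$ to coincide (so that a value assigned at a cut vertex can be extended into a fresh triangle) is exactly what is needed. For the reverse direction, the iterative refinement you describe is the $(2,3)$-consistency algorithm, and finiteness of the domain is used in the three places it must be: termination of the refinement, the finitely many witnesses $c$ in the inductive step, and the finitely many pairs $(a,b)\in R_{xy}$ in the final gluing.

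The one point that deserves slightly more care than you give it is the claim that a 2-tree's triangles can be reordered to start from any prescribed triangle. Your sketch (``an order beginning with a triangle containing $e$\dots each subsequent triangle sharing an edge or vertex with what has already been built'') is right in spirit; the clean justification is that the auxiliary graph whose vertices are the triangles of the 2-tree and whose edges record a shared vertex is connected by construction, so a DFS from any triangle produces a valid build order. With that said, your argument goes through.
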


The ``only if'' direction of the lemma applied to the instance $\inst I$ implies that every 2-tree is realizable. The ``if'' direction applied to the instance $\inst J$ tells us that our aim boils down to proving that every 2-tree is strongly realizable.
This is achieved by an induction on a suitable measure of complexity of the tree using several constructions. We will not go into full technical details here, we rather present several lemmata whose proofs contain essentially all the ideas that are necessary for the complete proof.

\begin{lemma} \label{lem:babysr1}
Every edge (i.e., a pattern whose underlying graph is a single edge) is strongly realizable.
\end{lemma}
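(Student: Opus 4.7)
The lemma asks us to strongly realize a two-vertex pattern with labels $u := l(w_1)$ and $v := l(w_2)$ in $V \setminus \{x_1,x_2\}$; that is, to produce a pair $(b,c) \in R'_{uv}$. Unwinding the definition of $R'_{uv}$, this is equivalent to extending the partial assignment $(x_1,x_2)\mapsto(a_1,a_2)$ to a full solution of the sub-instance of $\inst I$ on the four variables $\{x_1,x_2,u,v\}$. The degenerate case $u=v$ is immediate: since $R_{uu}$ is the equality relation, we only have to extend $(a_1,a_2)$ through the triangle $\{x_1,x_2,u\}$, which $(2,3)$-consistency provides directly.

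Assume $u\neq v$. The plan is first to use $(2,3)$-consistency to gather explicit witnesses, and then to patch them together with the $4$-ary near unanimity term operation $n$ of $\alg A$. Extending $(a_1,a_2)$ through $\{x_1,x_2,u\}$ yields $b^*\in A$ with $(a_i,b^*)\in R_{x_iu}$ for $i=1,2$. Extending $(a_1,b^*)$ through $\{x_1,u,v\}$, $(a_2,b^*)$ through $\{x_2,u,v\}$, and $(a_1,a_2)$ through $\{x_1,x_2,v\}$ yields three candidate values $c_1,c_2,c_3\in A$ for the variable $v$; each satisfies exactly two of the three conditions (C1) $(a_1,c)\in R_{x_1v}$, (C2) $(a_2,c)\in R_{x_2v}$, (C3) $(b^*,c)\in R_{uv}$. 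The goal is to produce a single $c$ satisfying all three.

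To combine the three candidates I would apply $n$ to a $4$-tuple $(c_1,c_2,c_3,c_4)$, where $c_4$ is a fourth candidate produced either by running $(2,3)$-consistency from a different element of the subuniverse $U:=\{b:(a_1,b)\in R_{x_1u},\,(a_2,b)\in R_{x_2u}\}$, or by applying $n$ inside the pairwise intersections established above. The key point is that, for each of (C1), (C2), (C3), at least three of the four inputs must lie in the corresponding subuniverse of $\alg A$, so that preservation of the relation by $n$, together with the near unanimity identities, forces $n(c_1,c_2,c_3,c_4)$ into that subuniverse as well.

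The main obstacle is precisely this combining step. A $4$-ary near unanimity does not give a ternary majority, so three pairwise-intersecting subuniverses of $\alg A$ need not have a common point. What should save us is the global structure of $\inst I$: the entire instance, not just its four-variable sub-instance, is $(2,3)$-consistent, so additional triangles containing $u$ or $v$ deliver extra witnesses that can be fed into $n$ to obtain the required fourth candidate. As a backup I would invoke Theorem~\ref{thm:loop_inv}, applied with $l=1$ to a binary relation built by composing $R_{uv}$ with the wing relations, reformulating the nonemptiness of $R'_{uv}$ as the existence of a loop; the required local absorption of $=_A$ and of the inverse relation would be witnessed precisely by the $4$-ary near unanimity.
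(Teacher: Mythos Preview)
Your reformulation is correct up to the point where you fix $b^*$ and then seek $c$ satisfying the three conditions (C1), (C2), (C3). But the combining step is not merely an obstacle to be patched---it is where your decomposition goes wrong. As you yourself observe, among $c_1,c_2,c_3$ each condition is violated by exactly one candidate, so for the near unanimity argument to work, the fourth input $c_4$ would have to satisfy \emph{all three} conditions, which is precisely what you are trying to construct. Neither of your suggested fixes helps: the four-variable sub-instance already contains all the relevant binary constraints, so invoking further triangles of $\inst I$ produces nothing new; and the loop lemma route, while not absurd, is not developed enough here to count as a proof (and in the paper that tool is reserved for gluing, not for the base case).

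The missing idea is to avoid committing to $b^*$ in advance. The paper treats both inner values $b$ and $c$ as free and instead records the \emph{fixing data} as a $4$-tuple: attach to each of $w^1,w^2$ its own pair of fresh vertices labeled $x_1,x_2$, and let $T\leq \alg A^4$ be the set of $4$-tuples $(\beta_{11},\beta_{12},\beta_{21},\beta_{22})$ that arise as restrictions of realizations of this enlarged pattern. Strong realizability of the edge is exactly the statement that $(a_1,a_2,a_1,a_2)\in T$. Now the $4$-ary near unanimity enters via Baker--Pixley: it suffices that every $3$-coordinate projection of $T$ contain the corresponding projection of $(a_1,a_2,a_1,a_2)$, and each such projection is witnessed by a straightforward three-step walk through $(2,3)$-consistency (with the fourth coordinate filled in arbitrarily at the end). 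The point is that the arity-$4$ relation is tailored to match the arity of the near unanimity term; your approach collapsed the problem to an arity-$3$ question about $c$ alone, which a $4$-ary near unanimity cannot resolve.
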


\begin{figure}
\centering
\begin{minipage}{.45\textwidth}
  \centering
\begin{tikzpicture}
    \node[draw=none] (1) at (-0.5,1) {$W'$};
    \node[shape=circle, draw=black, fill=white, label=left:{$x_1$}] (2) at (1,4) {$w_{11}$};
    \node[shape=circle, draw=black, fill=white, label=left:{$x_2$}] (3) at (1,3) {$w_{12}$};
    \node[shape=circle, draw=black, fill=white, label=left:{$x_1$}] (4) at (1,2) {$w_{21}$};
    \node[shape=circle, draw=black, fill=white, label=left:{$x_2$}] (5) at (1,1) {$w_{22}$};

    \node[fit=(1)(2), draw, dashed, inner sep=5mm] {};

    \node[shape=circle, draw=black, fill=white, label=right:{$z_1$}] (8) at (3,3.5) {$w^1$};
    \node[shape=circle, draw=black, fill=white, label=right:{$z_2$}] (11) at (3,1.5) {$w^2$};

    \draw (8)--(2); \draw (8)--(3); \draw (11)--(4); \draw (11)--(5); \draw (8)--(11);
\end{tikzpicture}

  \captionof{figure}{Pattern $\pat{P}$ in Lemma~\ref{lem:babysr1}.}
  \label{fig:Lemma_baby_sr1}
\end{minipage}%
\begin{minipage}{.55\textwidth}
  \centering
\begin{tikzpicture}
    \node[shape=circle, draw=black, fill=white] (a1) at (1,3) {};
    \node[shape=circle, draw=black, fill=white] (a2) at (3,3) {};
    \node[shape=circle, draw=black, fill=white] (a3) at (5,3) {};
    \node[shape=circle, draw=black, fill=white] (a4) at (7,3) {};

    \node[shape=circle, draw=black, fill=white] (b1) at (2,2) {};
    \node[shape=circle, draw=black, fill=white] (b2) at (4,2) {};
    \node[shape=circle, draw=black, fill=white] (b3) at (6,2) {};

    \node[shape=circle, draw=black, fill=white] (c1) at (1.5,1) {};
    \node[shape=circle, draw=black, fill=white] (c2) at (3.5,1) {};
    \node[shape=circle, draw=black, fill=white] (c3) at (5.5,1) {};

    \node[shape=circle, draw=black, fill=white] (d1) at (2.5,1) {};
    \node[shape=circle, draw=black, fill=white] (d2) at (4.5,1) {};
    \node[shape=circle, draw=black, fill=white] (d3) at (6.5,1) {};

    \draw (a1)--(b1); \draw (a2)--(b2); \draw (a3)--(b3);
    \draw (a2)--(b1); \draw (a3)--(b2); \draw (a4)--(b3);
    \draw (a1)--(c1); \draw (a2)--(c2); \draw (a3)--(c3);
    \draw (a2)--(d1); \draw (a3)--(d2); \draw (a4)--(d3);
    \draw (b1)--(c1); \draw (b2)--(c2); \draw (b3)--(c3);
    \draw (b1)--(d1); \draw (b2)--(d2); \draw (b3)--(d3);
    %\draw (8)--(3); \draw (11)--(4); \draw (11)--(5); \draw (8)--(11);
\end{tikzpicture}

  \captionof{figure}{Path of three bow ties.}
  \label{fig:bowties}
\end{minipage}
\end{figure}

\begin{proof}[Proof sketch]
Let $\pat{Q}$ be the pattern formed by an undirected edge with vertices $w^1$ and $w^2$ labeled $z_1$ and $z_2$, respectively.
Let $\pat{P}$ be the pattern obtained from $\pat{Q}$ by adding a set of four fresh vertices
  $W' = \{w_{11}, w_{12}, w_{21}, w_{22}\}$ labeled $x_1, x_2, x_1, x_2$, respectively,
  and adding the edges $\{w^i,w_{i1}\}$ and  $\{w^i,w_{i2}\}$ for $i=1,2$, see Figure~\ref{fig:Lemma_baby_sr1}.
Observe that the restriction of a realization $\beta$ of $\pat{P}$,
  such that $\beta(w_{ij}) = a_j$ for each $i,j \in \{1,2\}$, to the set $\{w^1,w^2\}$ is a strong realization of $\pat{Q}$.

We consider the set $T$ of restrictions of realizations of $\pat{P}$ to the set $W'$.
  Since constraint relations are subuniverses of $\alg A^2$, it follows that $T$ is a subuniverse of $\alg A^4$.
\[
T = \{ (\beta(w_{11}),\beta(w_{12}),\beta(w_{21}),\beta(w_{22})) \mid
     \beta \mbox{ realizes $\pat{P}$}\} \leq \alg A^4
\]
We need to prove that the tuple $\vc{a}=(a_1,a_2,a_1,a_2)$ is in $T$.
By the Baker-Pixley theorem, Theorem~\ref{thm:BPold}, it is enough to show that for any $3$-element set of coordinates, the relation $T$ contains a tuple  that agrees with $\vc{a}$ on this set. This is now our aim.

For simplicity, consider the set of the first three coordinates.
We will build a realization $\beta$ of $\pat P$ in three steps. After each step, $\beta$ will satisfy all the edges where it is defined.
First, since $(a_1,a_2) \in R_{x_1x_2}$ and $\inst I$ is a (2,3)-instance, we can find $b_1 \in A$ such that $(a_1,b_1) \in R_{x_1z_1}$ and $(a_2,b_1) \in R_{x_2z_1}$, and we
set $\beta(w_{11}) = a_1$, $\beta(w_{12}) = a_2$, and $\beta(w^1) = b_1$.
Second, we find $b_2 \in A$ such that $(a_1,b_2) \in R_{x_1z_2}$ and $(b_1,b_2) \in R_{z_1z_2}$ (here we use $(a_1,b_1) \in R_{x_1z_1}$ and that $\inst I$ is a (2,3)-instance), and
set $\beta(w_{21}) = a_1$, $\beta(w^2) = b_2$.
Third, using $(a_1,b_2) \in R_{x_1z_2}$ we find $a_2'$ such that $(b_2,a_2') \in R_{z_2x_2}$ and set $\beta(w_{22})= a_2'$.
By construction, $\beta$ is a realization of $\pat{P}$ and
$(\beta(w_{11}),\beta(w_{12}),\beta(w_{21})) = (a_1,a_2,a_1)$, so our aim has been achieved.
\end{proof}

\noindent Using Lemma~\ref{lem:babysr1}, one can go a step further and prove that every pattern built on a graph which is a triangle is strongly realizable.
We are not going to prove this fact here.

%\begin{lemma} \label{lem:babysr11}
%Every triangle is strongly realizable.
%\end{lemma}

\begin{lemma} \label{lem:babysr2}
Every bow tie (a pattern whose underlying graph is formed by two triangles with a single common vertex) is strongly realizable.
\end{lemma}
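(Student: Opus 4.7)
The plan is to mirror the approach of Lemma~\ref{lem:babysr1}. Let $\pat{Q}$ denote the bow tie with triangles $T_1 = \{w^1, w^2, w^3\}$ and $T_2 = \{w^3, w^4, w^5\}$ sharing the vertex $w^3$, with labels $z_1, \ldots, z_5 \in V \setminus \{x_1, x_2\}$. Augment $\pat{Q}$ to a pattern $\pat{P}$ by adjoining, for each $i \in \{1,\ldots,5\}$ and $j \in \{1,2\}$, a fresh vertex $w_{ij}$ labeled $x_j$ together with an edge $\{w^i, w_{ij}\}$, and set
\[
T = \{(\beta(w_{ij}))_{i \in \{1,\dots,5\}, j \in \{1,2\}} : \beta \text{ realizes } \pat{P}\} \leq \alg A^{10}.
\]
Any realization of $\pat{P}$ sending each $w_{ij}$ to $a_j$ restricts on $\{w^1,\dots,w^5\}$ to a strong realization of $\pat{Q}$, so it is enough to show that the tuple $\vc{a}^* := (a_1, a_2, \ldots, a_1, a_2) \in T$.

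Since the $4$-ary near unanimity term operation of $\alg A$ preserves $T$, by the Baker--Pixley Theorem (Theorem~\ref{thm:BPold}) it suffices to verify that for every $3$-element subset $J$ of the $10$ auxiliary coordinates, some tuple in $T$ agrees with $\vc{a}^*$ on $J$. Letting $U \subseteq \{w^1, \ldots, w^5\}$ denote the (at most $3$) original vertices to which the auxiliary vertices indexed by $J$ are attached, the task reduces to exhibiting a realization $\alpha$ of $\pat{Q}$ that satisfies the partial strongness conditions $(a_j, \alpha(w^i)) \in R_{x_j z_i}$ indicated by $J$. I would then proceed by case analysis on the position of $U$: if $U$ lies inside a single triangle $T_\ell$, strong-realize $T_\ell$ using the triangle strong realizability established immediately after Lemma~\ref{lem:babysr1}, and extend to the other triangle by two applications of the $(2,3)$-consistency of $\inst I$; if $U$ crosses both triangles and contains $w^3$, strong-realize the triangle pattern on the three labels indexed by $U$ (which need not be an actual triangle of $\pat{Q}$) and fill in the remaining two vertices by $(2,3)$-consistency.

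The main obstacle is the remaining case in which $U$ spans both triangles while avoiding $w^3$, typified by $U = \{w^1, w^2, w^4\}$. Strong-realizing the triangle pattern on $\{z_1, z_2, z_4\}$ yields strong values $(b_1, b_2, d)$ at these vertices, but extending to a value $c$ at $w^3$ requires producing $c$ satisfying the three binary constraints $(b_1, c) \in R_{z_1 z_3}$, $(b_2, c) \in R_{z_2 z_3}$, and $(c, d) \in R_{z_3 z_4}$ simultaneously, which does not follow from $(2,3)$-consistency alone. I expect this to be the hardest part, and would tackle it by another application of the Baker--Pixley Theorem to a suitable auxiliary pattern in the spirit of Lemma~\ref{lem:babysr1}, using the $4$-ary near unanimity operation to combine several candidate values for $c$ arising from pairwise applications of $(2,3)$-consistency (each satisfying a subset of the three constraints) into one that satisfies all three.
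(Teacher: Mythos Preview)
Your Baker--Pixley setup and the case analysis for $U$ contained in one triangle, or $U$ containing the central vertex $w^3$, are fine. The gap is exactly where you flagged it: in the case $U=\{w^1,w^2,w^4\}$ (say), your plan to ``combine several candidate values for $c$'' does not go through. Each pairwise application of $(2,3)$-consistency gives you a $c$ satisfying two of the three constraints $(b_1,c)\in R_{z_1z_3}$, $(b_2,c)\in R_{z_2z_3}$, $(c,d)\in R_{z_3z_4}$, but feeding such $c$'s into the $4$-ary near unanimity operation does not force the output to satisfy all three: the set of admissible $c$'s for any single constraint is a subalgebra, but you have three such subalgebras and only three candidates, each outside one of them. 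There is no auxiliary pattern in sight whose Baker--Pixley closure produces the needed $c$; this really is a three-constraint extension problem that $(2,3)$-consistency alone does not solve.

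The paper sidesteps this entirely by a different construction. Rather than doing a case split on $J$, it builds four explicit realizations of $\pat{P}$, each wrong on only two of the ten fixing coordinates, with the four bad pairs disjoint; applying the $4$-ary near unanimity operation to these four then gives a realization correct everywhere. The key idea you are missing is a \emph{perspective shift}: first strong-realize the triangle $T_1$ (giving correct values at $w^1,w^2,w^3$ and all their fixing vertices), and then re-invoke Lemma~\ref{lem:babysr1} for the edge $\{w^4,w^5\}$ but with the pair of fixing variables $(x_1,x_2)$ replaced by $(x_1,z_3)$ and the fixed tuple $(a_1,a_2)$ replaced by $(a_1,\alpha(w^3))$. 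This yields values at $w^4,w^5$ that are simultaneously compatible with $a_1$ and with $\alpha(w^3)$ --- i.e., the bow-tie edges through $w^3$ are satisfied --- and only the two $x_2$-fixing vertices attached to $w^4,w^5$ may be wrong. Swapping the roles of $T_1/T_2$ and of $x_1/x_2$ gives the other three realizations. Note that these four realizations also directly answer your Baker--Pixley question: since their bad pairs are disjoint, any $3$-element $J$ misses at least one of them.
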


\begin{proof}[Proof sketch]
  Let $\rel W'_1$ and $\rel W'_2$ be two triangles~(viewed as undirected graphs)
  with a single common vertex $w$.
  Let $\pat{Q'}$ be any pattern over $W'_1\cup W'_2$ with labelling $l'$ sending $W'_1\cup W'_2$ to $V\setminus\{x_1,x_2\}$.
  Similarly as in the proof of Lemma~\ref{lem:babysr1} we form a pattern $\rel Q$ by adding to $\rel Q'$ ten additional vertices
  (five of them labeled $x_1$, the other five $x_2$)
  and edges so that the restriction of a realization $\alpha$ of $\pat Q$ to the set $W_1'\cup W_2'$ is a strong realization of $\pat Q'$
  whenever the additional vertices have proper values (that is, value $a_i$ for vertices labeled $x_i$).

  We will gradually construct a realization $\alpha$ of $\pat{Q}$,
  which sends all the vertices labeled by $x_1$ to $a_1$,
  and all the vertices labeled by $x_2$ and adjacent to a vertex in $W'_1$ to $a_2$.
  First use the discussion after Lemma~\ref{lem:babysr1} to
  find a strong realization of $\pat Q'$ restricted to $W_1'$.
  This defines $\alpha$ on $W_1'$ and its adjacent vertices labeled by $x_1$ and $x_2$.

  Next, we want to use Lemma~\ref{lem:babysr1} for assigning values to the two remaining vertices of $W'_2$.
  However, in order to accomplish that, we need to shift the perspective:
  the role of $x_1$ is played by $x_1$,
  but the role of $x_2$ is played by $l'(w)$;
  and the role of $(a_1,a_2)$ is played by $(a_1,\alpha(w))$.
  In this new context, we use Lemma~\ref{lem:babysr1} to find a strong realization of  the edge-pattern formed by the two remaining vertices of $W_2'$~
  (with a proper restriction of $l'$).
  This defines $\alpha$ on all the vertices of $\pat Q$, except for the two vertices adjacent to $W_2'\setminus\{w\}$ and labeled by $x_2$.
  Finally, similarly as in the third step in the proof of Lemma~\ref{lem:babysr1}, we define $\alpha$ on the remaining two vertices (labeled $x_2$)
  to get a sought after realization of $\rel Q$.

  Now $\alpha$ assigns proper values ($a_1$ or $a_2$) to all additional vertices,
  except those two coming from the non-central vertices of $W'_2$ and labeled by $x_2$.
  We apply the 4-ary  near unanimity term operation
  to the realization $\alpha$ and its 3 variants obtained by exchanging the roles of $W'_1$ and $W_2'$ and $x_1$ and $x_2$.
  The result of this application is a realization of $\pat Q$ which defines a strong realization of $\pat Q'$.
\end{proof}

\noindent In the same way it is possible to prove strong realizability of further patterns, such as those in the following corollary.

\begin{corollary} \label{cor:babypppp}
Every ``path of 3 bow ties'' (i.e., a pattern whose underlying graph is as in Figure~\ref{fig:bowties}) is strongly realizable.
\end{corollary}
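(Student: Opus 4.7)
The plan is to mirror the approach of Lemma~\ref{lem:babysr2} at a larger scale: iterate the bow tie strong-realization argument three times along the path, and then repair the added auxiliary vertices using the 4-ary near unanimity operation. Write the vertices of a path of 3 bow ties $\pat{Q'}$ as $a_1,\dots,a_4$ (top), $b_1,b_2,b_3$ (middle), and $c_i,d_i$ for $i=1,2,3$ (bottom), so that the $i$-th bow tie consists of the triangles $\{a_i,b_i,c_i\}$ and $\{a_{i+1},b_i,d_i\}$ joined at $b_i$, and consecutive bow ties share the vertex $a_{i+1}$. Fix a labeling $l' \colon W' \to V \setminus \{x_1,x_2\}$. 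Following Lemma~\ref{lem:babysr2}, I would attach to each vertex of $W'$ a pair of fresh vertices labeled $x_1$ and $x_2$, connected by the appropriate auxiliary edges, to form a pattern $\pat{Q}$ with the property that any realization of $\pat{Q}$ sending every added $x_i$-labeled vertex to $a_i$ restricts to a strong realization of $\pat{Q'}$ on $W'$.

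I would then build a realization $\alpha$ of $\pat{Q}$ in three stages. First, apply Lemma~\ref{lem:babysr2} to the sub-pattern formed by the first bow tie together with its attached $x_i$-vertices; this defines $\alpha$ on that sub-pattern and pins, in particular, a value $\alpha(a_2)$. Second, extend $\alpha$ to the second bow tie by a shifted version of Lemma~\ref{lem:babysr2} in which the already-pinned vertex $a_2$ with value $\alpha(a_2)$ plays the role previously played by an $x_i$-labeled vertex with fixed value $a_i$; the $(2,3)$-consistency of $\inst I$ supplies the necessary auxiliary elements step by step, exactly as in the proof of Lemma~\ref{lem:babysr1}. Third, repeat this procedure for the third bow tie, using the freshly pinned value $\alpha(a_3)$. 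At the end, $\alpha$ is a realization of $\pat{Q}$ that assigns the correct value $a_1$ or $a_2$ to most of the attached $x_i$-vertices, but to those created inside the shifted invocations the values may be wrong.

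To correct these defects, I would produce three sibling realizations $\alpha_2,\alpha_3,\alpha_4$ of $\pat{Q}$ by running analogous constructions starting from a different bow tie or interchanging the roles of $x_1$ and $x_2$, arranged so that for every added $x_i$-vertex at least three of the four realizations assign it the correct value $a_i$. Applying the 4-ary near unanimity term operation coordinate-wise to $\alpha,\alpha_2,\alpha_3,\alpha_4$ then yields a realization of $\pat{Q}$ (since every constraint relation is a subuniverse of $\alg A^2$) in which every added $x_i$-vertex now carries $a_i$; its restriction to $W'$ is the desired strong realization of $\pat{Q'}$. The main obstacle is the bookkeeping: one must choose the four variants so that the sets of ``error vertices'' never overlap on the same added vertex, which requires exploiting the symmetry of the path of three bow ties in the same spirit as, but with more care than, the 4-variant argument in Lemma~\ref{lem:babysr2}.
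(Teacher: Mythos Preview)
Your overall strategy---attach auxiliary $x_i$-vertices, build a realization stage by stage along the path, then apply the $4$-ary near unanimity operation to four variants---is the right shape, but the one-shot assembly with four variants has a genuine gap at the middle bow tie.

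Here is the obstruction. In any variant built your way, exactly one bow tie (the starting one) is fully correct, and on each of the other two bow ties one of the two $x_i$-types is wrong. So each variant contaminates two of the six ``cells'' (bow tie, $x_i$). Four variants therefore produce eight contaminations among six cells, and by pigeonhole some cell is hit twice; in fact, whatever symmetric scheme you choose (left/right start, swap $x_1/x_2$, start from the middle and extend both ways, etc.), the middle bow tie always ends up with some $x_i$-type wrong in at least two of the four variants. A $4$-ary near unanimity operation tolerates only one wrong entry per coordinate, so the final NU-step does not repair those auxiliary vertices. Your remark that this is ``bookkeeping'' understates the problem: no choice of four such variants can avoid the overlap.

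The fix, and the argument the paper has in mind, is to apply the near unanimity operation \emph{inductively} rather than once at the end. First use exactly the argument of Lemma~\ref{lem:babysr2} (with the first bow tie playing the role of $W'_1$ and the second that of $W'_2$) to show that a path of two bow ties is strongly realizable; this already uses one application of the $4$-ary NU. Now treat this two-bow-tie pattern as a single strongly realizable block $\pat{P}_1$, take the third bow tie as $\pat{P}_2$, and repeat the argument: strongly realize $\pat{P}_1$ in full, shift at the shared vertex $a_3$ to extend to $\pat{P}_2$, and NU together the four variants coming from (which block is realized first) $\times$ (which $x_j$ is dropped in the shift). Because $\pat{P}_1$ is realized \emph{strongly} in the first two variants, the errors are confined to the single new bow tie, and the four error sets are disjoint. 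This is precisely what the paper's full proof does via Lemma~\ref{lem:sr2} and Corollary~\ref{cor:pppp}.
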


The application of the loop lemma is illustrated by the final lemma in this section.

\begin{lemma}
Every diamond (i.e., a pattern whose underlying graph is formed by two triangles with a single common edge) is strongly realizable.
\end{lemma}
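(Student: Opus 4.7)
The plan is to reduce the claim to finding a loop in a suitable binary relation on $A$, and then to invoke Theorem~\ref{thm:loop_inv}. Label the diamond by $w^1, w^2, w^3, w^4$ with labels $z_1, z_2, z_3, z_4$, so that the triangles on $\{w^1, w^2, w^3\}$ and $\{w^2, w^3, w^4\}$ share the edge $\{w^2, w^3\}$. Define
\[
F_L = \{(b_2, b_3) \in A^2 : \exists\, b_1 \text{ such that } (b_1, b_2, b_3) \text{ is a strong realization of the left triangle}\}
\]
and $F_R$ analogously for the right triangle. Both $F_L$ and $F_R$ are nonempty subuniverses of $\alg A^2$ by the strong realizability of triangles (mentioned immediately after Lemma~\ref{lem:babysr1}), and a strong realization of the diamond exists precisely when $F_L \cap F_R \neq \emptyset$. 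Setting $H := F_L \circ F_R^{-1} \subseteq A \times A$, a loop $(b, b) \in H$ witnesses some $b'$ with $(b, b') \in F_L \cap F_R$, which yields the desired strong realization.

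The remainder of the argument verifies the hypotheses of Theorem~\ref{thm:loop_inv} for $H$, with the $4$-ary near unanimity term $n$ playing the role of the absorbing operation. First, $H$ contains arbitrarily long directed walks: unpacking the definition, a walk $b_2^0, b_2^1, \ldots, b_2^k$ in $H$ is witnessed by values forming, at each step, a bow tie (a pair of triangles meeting at a $z_3$-labeled vertex), with consecutive bow ties glued along their $z_2$-labeled outer vertices. Corollary~\ref{cor:babypppp}, together with iterations of the bow-tie argument from Lemma~\ref{lem:babysr2}, supplies strong realizations of such chains of any length. Second, $H$ locally $4$-absorbs $=_A$: given three $H$-tuples with their $F_L/F_R$-witnesses and an equality tuple $(c, c)$, we supply a witness for $c$ from strong realizability of a slightly enlarged pattern containing $c$ on its boundary, and apply $n$ coordinatewise; the identities $n(a,a,a,b) = a$ and its permutations force the combined $z_3$-value to witness that the resulting tuple lies in $H$. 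Third, $H^{\circ l}$ locally $4$-absorbs $H^{-1}$ for a suitable $l$: intuitively, enough forward traversals of the bow-tie chain simulate a reverse step, and formally this again reduces to strong realizability of an extended chain pattern constructed from the lemmas already established.

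With these three properties in hand, Theorem~\ref{thm:loop_inv} yields a loop in $H$, and by the reduction above this produces the desired strong realization of the diamond. The principal obstacle will be condition (iii): all previously established strong-realizability results are inherently \emph{forward}, whereas absorbing $H^{-1}$ requires emulating reverse walks through a power of $H$. This is exactly where the symmetric loop lemmas (Theorem~\ref{thm:olsak_loop} and Theorem~\ref{thm:loop_cycle}) are insufficient and the asymmetric hypothesis of Theorem~\ref{thm:loop_inv} becomes essential; designing a chained pattern whose strong realizability encodes the reversal is the combinatorial heart of the full proof.
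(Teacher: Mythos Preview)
Your high-level plan---reduce to a loop in a binary relation and apply Theorem~\ref{thm:loop_inv}---is the same as the paper's, but the mechanism you propose for verifying the absorption hypotheses does not go through as written. The paper does \emph{not} argue absorption for the projected relation directly. Instead it works with the bow tie $\pat{Q}'$ (two triangles sharing one vertex, with two non-adjacent vertices $w_1,w_2$ carrying the \emph{same} label) enlarged by the fixing vertices to a pattern $\pat{Q}$, and defines two sets of realizations: $T$ (all realizations of $\pat{Q}$) and $S\subseteq T$ (the \emph{proper} ones, i.e.\ those sending each fixing vertex to the prescribed $a_1$ or $a_2$). The point is that $S$ $4$-absorbs $T$ for free: on every fixing vertex three of the four inputs already equal $a_i$, so the near unanimity operation outputs $a_i$. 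Absorption of $=_A$ and of $(\proj_{w_1,w_2}S)^{-1}$ by $\proj_{w_1,w_2}S$ is then obtained by showing that both sets lie inside $\proj_{w_1,w_2}T$; the first is immediate from realizability of $k$-trees in the original $(2,3)$-instance, and the second uses that $w_1$ and $w_2$ have the same label, so one can swap their values and re-extend to a (not necessarily proper) realization.

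Your proposal lacks this two-level $S\subseteq T$ structure. When you try to show that $H$ absorbs $=_A$ by ``supplying a witness for $c$'' and applying $n$ coordinatewise, you are implicitly asking for some $c'$ with $(c,c')\in F_L$ and $(c,c')\in F_R$ simultaneously---but $F_L\cap F_R\neq\emptyset$ is precisely the statement you are trying to prove. The same circularity blocks your sketch for condition~(iii), which you yourself flag as the principal obstacle; the paper resolves it not by ``chained patterns encoding reversal'' but by the simple swap argument, which is available only because the two projected coordinates carry the same label. The fix is to introduce the pattern with explicit fixing vertices, form $S$ and $T$ as above, and reduce both absorption conditions to the inclusion of the relevant sets in $\proj_{w_1,w_2}T$.
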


\begin{proof}[Proof sketch]
  The idea is to merge two vertices in a bow tie using the loop lemma.
  Let $\pat{Q'}$ be a pattern over a graph which is a bow tie on two triangles $W_1'$ and $W_2'$~%
  (just like in the proof of Lemma~\ref{lem:babysr2}).
  Let $w_1\in W_1'\setminus W_2'$ and $w_2\in W_2'\setminus W_1'$ be such that $l(w_1)=l(w_2)$.

  Let $\pat{Q}$ be obtained from $\pat Q'$ exactly as in the proof of Lemma~\ref{lem:babysr2}
  and notice that a proper realization $\alpha$ of $\pat{Q}$ with $\alpha(w_1)=\alpha(w_2)$ gives us a strong realization of a diamond.
    Let  $\pat{Q}^{3}$ be the pattern obtained by taking
the disjoint union of $3$ copies of $\pat{Q}$ and identifying the vertex $w_2$ in the $i$-th copy with the vertex $w_1$ in the $(i+1)$-first copy,
  for each $i \in \{1,2\}$ (Figure~\ref{fig:bowties} shows $\rel Q^3$ without the additional vertices).

Denote by $T$ the set of all the  realizations $\beta$  of  $\pat{Q}$
and denote by $S \subseteq T$  the set of those $\beta \in T$ that are proper.
By a straightforward argument, both $T$ and $S$ are subuniverses of $\prod_{w \in Q} \alg A$.
Using the near unanimity term operation of arity $4$,
 $S$ clearly $4$-absorbs $T$.

The plan is to apply Theorem~\ref{thm:loop_inv} to the binary relation $\proj_{w_1,w_2} S \subseteq A \times A$.
  As noted above, a loop in this relation gives us the desired strong realization of a diamond,
  so it only remains to verify the assumptions of Theorem~\ref{thm:loop_inv}.
By Corollary~\ref{cor:babypppp}, the pattern $\pat{Q}^3$ has a proper realization.
  The images of copies of vertices $w_1$ and $w_2$ in such a realization yield a directed walk in $\proj_{w_1,w_2}(S)$ of length $3$.
 Next, since $S$ $4$-absorbs $T$, then $\proj_{w_1,w_2}(S)$  $4$-absorbs $\proj_{w_1,w_2}(T)$,
  so it is enough to verify that the latter relation contains $=_A$ and $\proj_{w_1,w_2}(S)^{-1}$.
 We only look at the latter property.
  Consider any $(b_1,b_2) \in \proj_{w_1,w_2}(S)^{-1}$.
  By the definition of $S$, the pattern $\pat{Q}$ has a realization $\alpha$ such that $\alpha(w_1)=b_2$ and $\alpha(w_2)=b_1$.
  We flip the values $\alpha(w_1)$ and $\alpha(w_2)$,
  restrict $\alpha$ to $\{w_1,w_2\}$ together with the middle vertex of the bow tie,
  and then extend this assignment  to a  realization of $\pat{Q}$, giving us $(b_1,b_2) \in \proj_{w_1,w_2}(T)$.
\end{proof}

There are two major adjustments needed for the general case.
First, the ``if'' direction of Lemma~\ref{lem:trees_and_two_three} (and its analogue for a general $k$) is no longer true over infinite domains. This is resolved by working directly with the realizability of $k$-trees and proving a more general claim by induction: instead of ``a $(k,k+1)$-instance is sensitive'' we prove, roughly, that any evaluation, which extends to a sufficiently deep $k$-tree, extends to a solution.
Second, for higher values of $k$ than 2 we do not prove strong realizability in one step as in, e.g., Lemma~\ref{lem:babysr1}, but rather go through a sequence of intermediate steps between realizability and strong realizability.

\section{Conclusion}
\label{sec:conclusions}

\newtheorem{conjecture}[theorem]{Conjecture}
\newtheorem{problem}[theorem]{Problem}

We have characterized varieties that have sensitive $(k,k+1)$-instances of the $\csp$ as those that possess a near unanimity term of arity $k+2$.
From the computational perspective, the following corollary is perhaps the most interesting consequence of our results.

\begin{corollary} \label{cor:one}
Let $\rel{A}$ be a finite $\csp$ template whose  relations all have arity at most $k$ and which has a near unanimity polymorphism of arity $k+2$. Then every instance of the $\csp$ over $\rel{A}$, after enforcing  $(k,k+1)$-consistency, is sensitive.
\end{corollary}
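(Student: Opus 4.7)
The plan is to deduce this corollary directly from Theorem~\ref{thm:mainresult} by reformulating the hypothesis as a statement about a variety. I would first form the algebra $\alg A$ on universe $A$ whose basic operations are exactly the polymorphisms of $\rel A$, so that the given $(k+2)$-ary near unanimity polymorphism $n$ becomes a basic operation of $\alg A$. Because the near unanimity equations are universally quantified equations that hold in $\alg A$, and such equations are preserved under homomorphic images, subalgebras, and direct products, the interpretation of $n$ in every algebra of the variety $\V A$ is again a near unanimity operation. Thus $\V A$ has a $(k+2)$-ary near unanimity term.

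Next I would consider an arbitrary instance $\inst I$ of $\csp(\rel A)$ and let $\inst I'$ denote its output under the $(k,k+1)$-consistency algorithm. By definition of the algorithm, $\inst I'$ is a $(k,k+1)$-instance. The point I would need to verify is that $\inst I'$ is also an instance over the variety $\V A$, i.e., that each of its constraint relations is an algebra in $\V A$. This reduces to the observation that the algorithm only constructs new $k$-ary constraints from the original relations of $\rel A$ by means of permutations of coordinates, intersections, and projections from scopes of size at most $k+1$, each of which preserves invariance under the polymorphisms of $\rel A$. Consequently, every constraint of $\inst I'$ is a subuniverse of a finite power of $\alg A$, and so, viewed as an algebra, belongs to $\V A$.

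With these two ingredients in place, the corollary follows at once from the ``if'' direction of Theorem~\ref{thm:mainresult} applied to $\V A$: since $\V A$ has a $(k+2)$-ary near unanimity term, every $(k,k+1)$-instance over $\V A$ is sensitive, and in particular so is $\inst I'$. I do not anticipate any serious obstacle; the only step requiring a little care is the bookkeeping that allows $\inst I'$ to be viewed as an instance over $\V A$, and this amounts to the routine observation that the $(k,k+1)$-consistency algorithm produces constraint relations that remain invariant under every polymorphism of the original template.
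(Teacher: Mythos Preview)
Your approach is correct and is precisely the derivation the paper has in mind: the corollary is stated without proof in Section~\ref{sec:conclusions} as a direct consequence of Theorem~\ref{thm:mainresult}, and your two steps (passing to the polymorphism algebra and its variety, then observing that the output of the $(k,k+1)$-consistency algorithm is a $(k,k+1)$-instance over that variety) are exactly the intended bookkeeping.

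One small slip: you invoke the ``if'' direction of Theorem~\ref{thm:mainresult}, but the implication you need is \emph{from} the existence of a $(k+2)$-ary near unanimity term \emph{to} sensitivity of $(k,k+1)$-instances, which is the ``only if'' direction as the theorem is phrased (and as the paper explicitly labels it). The substance of your argument is unaffected; just swap the label.
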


Therefore not only is the $(k,k+1)$-consistency algorithm  sufficient to detect global inconsistency, we also additionally get the sensitivity property. 
Let us compare this result to some previous results as follows. Consider a template $\rel{A}$ that, for simplicity, has only unary and binary relations and that has a near unanimity polymorphism of arity $k+2 \geq 4$. Then any instance of the $\csp$ over $\rel{A}$ satisfies the following. 

\begin{enumerate}
    \item After enforcing $(2,3)$-consistency, if no contradiction is detected, then the instance has a solution~\cite{cdbw} (this is the bounded width property).
    \item After enforcing $(k,k+1)$-consistency, every partial solution on $k$ variables extends to a solution (this is the sensitivity property). 
    \item After enforcing $(k+1,k+2)$-consistency, every partial solution extends to a solution~\cite{feder-vardi} (this is the bounded strict width property).
\end{enumerate}

For $k+2 > 4$ there is a gap between the first and the second item. Are there natural conditions that can be placed there?

The properties of a template $\rel{A}$ from the first and the third item (holding for every instance) can be characterized by the existence of certain polymorphisms: a near unanimity polymorphism of arity $k+2$ for the third item~\cite{feder-vardi} and weak near unanimity polymorphisms of all arities greater than 2 for the first item~\cite{barto-kozik-bw,bulatov-boundedwidth,kkvw}. This paper does not give such a direct characterization for the second item (essentially, since Theorem~\ref{thm:sensfull} involves a square). Is there any? Moreover, there are characterizations for natural extensions of the first and the third to relational structures with higher arity relations~\cite{feder-vardi,barto-hierarchy}. This remains open for the second item as well.

In parallel with the flurry of activity around the $\csp$ over finite templates, there has been much work done on the $\csp$ over infinite $\omega$-categorical templates~\cite{bodirsky-phd,Pin15}. These templates cover a much larger class of computational problems but, on the other hand, share some pleasant properties with the finite ones. In particular, the $(k,k+1)$-consistency of an instance can still be enforced in polynomial time. Corollary~\ref{cor:one} can be extended to this setting as follows.

\begin{corollary}
Let $\rel{A}$ be an $\omega$-categorical $\csp$ template whose  relations all have arity at most $k$ and which has local idempotent near unanimity polymorphisms of arity $k+2$. Then every instance of the $\csp$ over $\rel{A}$, after enforcing the $(k,k+1)$-consistency, is sensitive.
\end{corollary}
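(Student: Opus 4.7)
The plan is to lift this corollary to a statement about the polymorphism algebra and then apply Theorem~\ref{thm:sensfull}. First I would let $\alg A$ denote the idempotent algebra with universe $A$ whose basic operations are all idempotent polymorphisms of $\rel A$. By the assumption that $\rel A$ has local idempotent near unanimity polymorphisms of arity $k+2$, for every finite $S\subseteq A$ some basic operation of $\alg A$ restricts to a $(k+2)$-ary near unanimity operation on $S$ (note that any near unanimity operation is automatically idempotent by setting $b=a$ in the defining equations, so the assumed local polymorphisms indeed lie in $\alg A$). Consequently $\alg A$ is an idempotent algebra with local near unanimity term operations of arity $k+2$, so Theorem~\ref{thm:sensfull} is applicable.

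Let $\inst I$ be an instance of $\csp(\rel A)$. Each constraint relation of $\inst I$ is a relation of $\rel A$ and is therefore preserved by every polymorphism of $\rel A$; in particular it is a subuniverse of the appropriate power of $\alg A$, so $\inst I$ may be viewed as an instance of $\csp(\alg A)$. I would next verify that enforcing $(k,k+1)$-consistency produces an instance $\inst I'$ whose constraint relations remain subuniverses of powers of $\alg A$. The key observation is that each constraint relation of $\inst I'$ is primitive positive definable from the original constraint relations of $\inst I$ (as a projection of a conjunction of constraints over a bounded number of variables), and subuniverses of powers of $\alg A$ are closed under primitive positive definitions. The $\omega$-categoricity of $\rel A$ ensures that the enforcement procedure terminates in finitely many steps, since for each arity $n$ there are only finitely many $n$-ary relations over $A$ invariant under $\mathrm{Aut}(\rel A)$ (as such relations are unions of the finitely many orbits of the componentwise action of $\mathrm{Aut}(\rel A)$ on $A^n$), and pp-definable relations over $\rel A$ are automorphism-invariant.

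By construction, $\inst I'$ is a $(k,k+1)$-instance over $\alg A$. Using the encoding mentioned just before Theorem~\ref{thm:swshort}, it can be viewed as a $(k,k+1)$-instance over $\alg A^2$. Applying Theorem~\ref{thm:sensfull} to the idempotent algebra $\alg A$, the instance $\inst I'$ is sensitive, which is exactly the conclusion of the corollary.

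The main obstacle I expect is the careful translation of the $\omega$-categorical data into the algebraic setting. Specifically one has to check that (a) the enforcement of $(k,k+1)$-consistency is well-defined, terminates in polynomial time for $\omega$-categorical $\rel A$, and yields constraint relations still invariant under all polymorphisms of $\rel A$; and (b) the local idempotent near unanimity polymorphisms of the relational structure $\rel A$ genuinely provide local near unanimity term operations of the algebra $\alg A$ used in Theorem~\ref{thm:sensfull}. Both of these points are standard in the theory of $\omega$-categorical templates and the algebraic approach to the $\csp$, so once they are invoked the corollary follows with no extra work, since Theorem~\ref{thm:sensfull} is stated for arbitrary (possibly infinite) idempotent algebras.
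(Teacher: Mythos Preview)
Your proposal is correct and follows the approach implicit in the paper, which states this corollary without proof as a direct consequence of Theorem~\ref{thm:sensfull}. The only minor detour is your passage through $\alg A^2$: the implication ``1 $\Rightarrow$ 2'' of Theorem~\ref{thm:sensfull} is actually established in Appendix~\ref{sec:k+2} for arbitrary $(k,k+1)$-instances over $\csp(\alg A)$ (not just $\alg A^2$), so you could invoke that final corollary directly rather than encoding into $\alg A^2$; either way the argument goes through.
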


Bounded strict width $k$ of an $\omega$-categorical template was characterized in \cite{infinite-strict-width} by the existence of  
a \emph{quasi-near unanimity} polymorphism $n$ of arity $k+1$, i.e.,  
\[
n(y,x,\dots, x) \approx n(x,y,\dots, x) \approx \dots \approx n(x,x, \dots, y) \approx n(x,x, \dots, x),
\]
which is, additionally, \emph{oligopotent}, i.e., the unary operation $x \mapsto n(x,x, \dots, x)$ is equal to an automorphism on every finite set. This result extends the characterization of Feder and Vardi since an oligopotent quasi-near unanimity polymorphism generates a near unanimity polymorphism as soon as the domain is finite. On an infinite domain, however, oligopotent quasi-near unanimity polymorphisms generate local near unanimity polymorphisms which, unfortunately, do not need to be idempotent on the whole domain. Our results thus fall short of proving the following natural generalization of Corollary~\ref{cor:one} to the infinite.

\begin{conjecture}
  Let $\rel A$ be an $\omega$-categorical $\csp$ template
  whose  relations all have arity at most $k$ and 
  which has an oligopotent quasi-near unanimity polymorphism of arity $k+2$. 
  Then every instance of the $\csp$ over $\rel{A}$, after enforcing $(k,k+1)$-consistency, is sensitive.
\end{conjecture}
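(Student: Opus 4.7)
The plan is to reduce the conjecture to the finite, idempotent case already covered by Corollary~\ref{cor:one}, using $\omega$-categorical compactness together with the oligopotent property to manufacture a finite substructure on which a genuine idempotent $(k{+}2)$-ary near unanimity polymorphism is available. Let $\rel A$ satisfy the hypotheses, let $\inst I$ be an instance after enforcing $(k,k+1)$-consistency, and let $\vc t$ be a tuple in some constraint of $\inst I$ that we wish to extend to a solution. A standard $\omega$-categorical compactness argument reduces this to extending $\vc t$ on every finite sub-instance $\inst J$ of $\inst I$, so fix such a $\inst J$ on variable set $W$.

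The key step is to build a finite ``effective domain'' $S\subseteq A$ with three properties: $(i)$ $S$ contains every entry of $\vc t$; $(ii)$ for every scope $Y \subseteq W$ of size at most $k$ and every $S$-tuple consistent with the constraints of $\inst J$ on $Y$, each witness on a $(k{+}1)$-st variable from $W$ required by $(k,k+1)$-consistency of $\inst I$ has at least one representative in $S$; and $(iii)$ $S$ is closed under the quasi-near unanimity polymorphism $n$ and under the unary operation $\sigma(x) = n(x, \dots, x)$. By $\omega$-categoricity, this closure process can be carried out inside a finite union of orbits of $\mathrm{Aut}(\rel A)$, so $S$ can be chosen finite. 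Since $\sigma|_S$ is an automorphism of $S$ by oligopotency, choose $m$ with $\sigma^m|_S = \mathrm{id}_S$ and define
\[
n'(x_1, \dots, x_{k+2}) \;=\; \sigma^{m-1}\bigl(n(x_1, \dots, x_{k+2})\bigr).
\]
The polymorphism $n'$ preserves $S$, satisfies the near unanimity identities on $S$, and is idempotent on $S$, since $n'(x, \dots, x) = \sigma^{m}(x) = x$ for $x \in S$.

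Let $\alg S$ be the algebra with universe $S$ whose basic operations are all polymorphisms of $\rel A$ that preserve $S$. By construction $\alg S$ is finite, idempotent, and has a genuine $(k{+}2)$-ary near unanimity term. Property $(ii)$ ensures that the restricted sub-instance $\inst J|_S$ is $(k,k+1)$-consistent over $\alg S$ and still contains $\vc t$ in the relevant constraint. Applying Corollary~\ref{cor:one} (the ``$1 \Rightarrow 2$'' direction of Theorem~\ref{thm:sensfull}) to $\inst J|_S$ yields an extension of $\vc t$ to a solution of $\inst J$, completing the compactness argument.

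The main obstacle I foresee is step two: realizing all three closure conditions on $S$ simultaneously without escaping outside finitely many orbits. There is an apparent circularity, since adding an element to $S$ as a consistency witness, or as a value of $n$ or $\sigma^{m-1}$, creates new $S$-tuples whose own witnesses must also lie in $S$. My expectation is that each closure step can be performed orbit-invariantly under a sufficiently large but finite pointwise stabilizer, so that $\omega$-categoricity forces only finitely many new orbits to be added at each stage and the process stabilizes. Verifying this termination carefully, and checking that the consistency witnesses chosen for the original instance genuinely serve as consistency witnesses for $\inst J|_S$ over $\alg S$, is where I expect the technical heart of the argument to lie.
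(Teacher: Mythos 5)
First, note that the statement you are proving is stated in the paper as an open \emph{conjecture}: the authors explicitly say their results fall short of it and that a new approach, avoiding the loop lemma, would be needed (the counterpart of Theorem~\ref{thm:olsak_loop} fails, e.g., for the countably infinite clique). So there is no proof in the paper to compare against; the question is only whether your reduction works, and it does not.

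The fatal gap is exactly the step you flag as a worry, and it is not a technicality that can be repaired: the claim that the closure set $S$ ``can be carried out inside a finite union of orbits of $\mathrm{Aut}(\rel A)$, so $S$ can be chosen finite'' is a non sequitur, since in an $\omega$-categorical structure there are finitely many orbits of each arity but each orbit is typically infinite, so lying in finitely many orbits gives no finiteness whatsoever. Worse, the paper's own motivating example shows that no finite $S$ with your properties can exist. Take $\rel A$ to be the countably infinite clique (domain $\mathbb{N}$, relation $\neq$), $k=2$; it has an oligopotent quasi-near unanimity polymorphism $n$ of arity $4$ (e.g.\ $n$ returns $\sigma(a)=2a$ on tuples with a value $a$ repeated at least three times and an injective value among the odd numbers otherwise), and a triangle of $\neq$-constraints is a $(2,3)$-consistent instance. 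Any $S$ closed under $n$ is closed under $\sigma(x)=n(x,\dots,x)$, and if such an $S$ were finite with $|S|\geq 3$ (which is forced if the restricted triangle instance is to remain consistent, let alone $(2,3)$-consistent), then your operation $n'=\sigma^{m-1}\circ n$ would be an idempotent $4$-ary near unanimity polymorphism of the finite clique $K_{|S|}$ with $|S|\geq 3$. No such polymorphism exists: by Theorem~\ref{thm:fv} it would give bounded strict width (in particular tractability) for $\geq 3$-colouring, and unconditionally the idempotent polymorphisms of finite cliques on at least three vertices are only projections. Hence every subset of size at least $3$ closed under $n$ is infinite, the finite ``effective domain'' cannot be constructed, and the reduction to Corollary~\ref{cor:one}/Theorem~\ref{thm:sensfull} collapses. (The preliminary compactness step is also unnecessary, since instances in this setting have finitely many variables, but that is harmless; the closure step is where the argument genuinely fails, and this obstruction is precisely why the statement remains open.)
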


To confirm the conjecture, a new approach, that does not use a loop lemma, will be needed since there are examples of $\omega$-categorical structures having oligopotent quasi-near unanimity polymorphisms for which the counterpart to Theorem~\ref{thm:olsak_loop} does not hold. Indeed, one such an example is the infinite clique.

%%
%% Bibliography
%%

%% Please use bibtex,

\bibliography{biblio}

%\newpage

\appendix

\section{Proofs of Theorems \ref{thm:BP} and \ref{thm:localBP}}\label{sec:AlgebraProofs}

\noindent The first result is due to Bergman~\cite{Bergman1977}, we provide a short proof for the convenience of the reader.
\begin{customthm}{\ref{thm:BP}}
  Let $k > 1$ and   $\var V$ be a variety.  The following are equivalent:
  \begin{enumerate}
  \item $\var V$ has a $(k+1)$-ary near unanimity term;
  \item  any partial solution of a $(k,k+1)$-instance over $\var V$  extends to a solution.
  \end{enumerate}
\end{customthm}
\begin{proof}[Proof of Theorem~\ref{thm:BP}]
A straightforward modification of the ``if'' direction of the proof of Theorem~\ref{thm:mainresult}, using Proposition~\ref{prop:sw} in place of Proposition~\ref{prop:sens} shows that the second condition implies the existence of a $(k+1)$-ary near unanimity term (also see~\cite[Lemma 11]{Bergman1977}).  For the converse,
suppose that $\var V$ has a $(k+1)$-ary near unanimity term $n(x_1, \dots, x_{k+1})$ and let $\inst I = (V, \constr)$ be a $(k, k+1)$-instance of $\csp(\var V)$.

Let $n =|V|$.  We will show by induction on $r < n$ that if $W \subseteq V$ with $|W| = r$ then any solution of $\inst I_{|W}$ can be extended to a solution of $\inst I_{|W \cup \{v\}}$ for any $v \in V\setminus W$.  From this, the implication will follow. By the assumption that $\inst I$ is a $(k, k+1)$-instance it follows that this property holds for $r = k$.  So, assume that $k < r < n$ and suppose that $W \subseteq V$ with $|W| = r$.  Let $v \in V \setminus W$ and let $f$ be a solution of $\inst I_{|W}$.

Fix some listing of the elements of $W$, say $W = \{v_1, v_2, \dots, v_r\}$ and for $1 \le i \le r$ let  $W_i = (W\setminus \{v_i\})\cup \{v\}$.  By induction, there is a solution $f_i$ of $\inst I_{|W_i}$ that extends the restriction of $f$ to $W\setminus \{v_i\}$, for $1 \le i \le k+1$.   We claim that the extension of $f$ to $W\cup\{v\}$ by setting $f(v) = n(f_1(v), f_2(v), \dots, f_{k+1}(v))$ produces a solution of $\inst I_{|W \cup\{v\}}$.

We need to show that if $U \subseteq W \cup \{v\}$ with $|U| = k$ then $(f(u)\mid u \in U)$ satisfies the unique constraint $(U,R)$ of $\inst I$ with scope $U$.  When  $U\subseteq W$, this is immediate, so assume that $v\in U$.  For $1 \le i \le k+1$, let $g_i$ be the restriction of $f_i$ to $U$, if $v_i \notin U$ and otherwise let $g_i$ be some partial solution of $\inst I_{|U}$ that extends the restriction of $f_i$ to $U \setminus \{v_i\}$.  Since each $g_i$ satisfies the constraint  $(U,R)$ then so does $n(g_1, g_2, \dots, g_{k+1})$. Using that $n$ is a near unanimity term it can be shown that this element is equal to $f_{|U}$, as required.
\end{proof}

The next theorem is a variation of the Baker-Pixley~\cite{BakerPixley} result for idempotent, not necessarily finite, algebras.

\begin{customthm}{\ref{thm:localBP}}
Let $\alg A$ be an idempotent algebra and $k > 1$.  The following are equivalent:
\begin{enumerate}
\item $\alg A$ has local near unanimity term operations of arity $k+1$;
\item for every $r > k$, every  subalgebra of $\alg A^r$ is uniquely determined by its projections onto all $k$-element subsets of coordinates;
\item   every subalgebra of $\alg A^{k+1}$ is uniquely determined by its projections onto all $k$-element subsets of coordinates.
\end{enumerate}
\end{customthm}

\begin{proof}[Proof of Theorem~\ref{thm:localBP}]
To show that Condition 1 implies Condition 2, suppose that $\alg A$ has local near unanimity term operations of arity $k+1$ and let $\alg R$ be a subalgebra of $\alg A^r$ for some $r >k$.
 Let $\vc{a} = (a_1, \dots, a_{r}) \in  A^{r}$ be a tuple such that for every subset $I$ of $[r]$ of size $k$, there is some element $\vc{b} \in R$ with $\proj_I(\vc{a}) = \proj_I(\vc{b})$.  We will show by induction on $n\ge k$ that if $n \le r$ then for every subset $J$ of $[r]$ of size $n$ there is some $\vc{b} \in R$ with $\proj_J(\vc{a}) = \proj_J(\vc{b})$.  With $n = r$ we conclude that $\vc{a} \in R$, as required.

 By assumption, this property holds when $n = k$.  Suppose that it has been established for some $n$ with $k \le n < r$ and let $J$ be a subset of $[r]$ of size $n+1$.  By symmetry it suffices to consider the case when $J = \{1, 2, \dots, n+1\}$.  For each $i$, with $1 \le i \le k+1$, let $\vc{b}_i \in R$ be such that $\vc{a}$ and $\vc{b}_i$ agree on the set $J \setminus \{i\}$.
 Let $n(x_1, \dots, x_{k+1})$ be a $(k+1)$-ary local near unanimity term operation of $\alg A$ for the subset of $A$ consisting of all of the components of the tuples $\vc{b}_i$, for $1 \le i \le k+1$.  A straightforward calculation shows that $\vc{b} = n(\vc{b}_1, \dots, \vc{b}_{k+1}) \in R$ has the desired property.

Clearly Condtion 2 implies Condition 3. For the remaining implication, we use Corollary 2.7 from \cite{horowitz-ijac} that shows that if $\alg A$ is finite (and idempotent) then it will have a $(k+1)$-ary near unanimity term operation if and only if for every $a_i$, $b_i \in A$, for $1 \le i \le k+1$, there is some term operation $t$ of $\alg A$ such that
\begin{align*}
t(b_1,a_1,a_1, \dots, a_1) &= a_1\\
t(a_2,b_2,a_2, \dots, a_2) &= a_2\\
&\vdots\\
t(a_{k+1},a_{k+1},a_{k+1}, \dots, b_{k+1}) &= a_{k+1}.
\end{align*}
It can be seen from the proof of this result that  if $\alg A$ is not assumed to be finite, then one can conclude that it has local near unanimity term operations of arity $k+1$ if and only if this condition holds for all $a_i$ and $b_i$.

This local term condition can be translated into a statement about subalgebras of $\alg A^{k+1}$, namely that for every $a_i$, $b_i \in A$,
for $1 \le i \le k+1$, the  $(k+1)$-tuple $\vc{a} = (a_1, \dots, a_{k+1})$ belongs to the subalgebra $\alg R$ of $\alg A^{k+1}$ generated by the set of $k+1$ tuples
\[
\{(b_1,a_2,a_3, \dots, a_{k+1}), (a_1,b_2,a_3, \dots, a_{k+1}), \dots, (a_1,a_2,a_3, \dots, b_{k+1})\}.
\]
Our assumption on $\alg A$ guarantees that $\vc{a}$ belongs to $R$ since any projection of $R$ onto $k$ coordinates will contain the corresponding projection of $\vc{a}$.  Thus $\alg A$ will have local near unanimity term operations of arity $k+1$.
\end{proof}

\section{Proof of Theorem~\ref{thm:olsak_loop}}
\label{app:ll}
In this section we present a proof of Theorem~\ref{thm:olsak_loop}.
The proof is a trivial adaptation of reasoning attributed to Ralph McKenzie in~\cite{O17}.
\begin{customthm}{\ref{thm:olsak_loop}}
    Let $\alg A$ be an idempotent algebra and 
    $\alg R\leq \alg A^2$ be nonempty and symmetric.
    If $R$ locally absorbs $=_A$, then $R$ contains a loop.
\end{customthm}
The remaining part of this section is devoted to a proof of Theorem~\ref{thm:olsak_loop} by the way of contradiction.

Let $n$ denote the arity of the absorbing operations.
We choose a counterexample to the theorem minimal with respect to $n$.
Then, we fix an algebra $\alg A$ and will call an $\alg R\leq \alg A^2$
a {\em counterexample candidate} if it is non-empty, symmetric, locally $n$-absorbs $=_A$ and has no loop.
\begin{claim}
    Every counterexample candidate has a closed walk of odd length.
\end{claim}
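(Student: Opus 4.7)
The plan is to argue by contradiction: assume that $R$ is a counterexample candidate with no closed walk of odd length. Since $R$ is symmetric and nonempty, this assumption is equivalent to saying that the graph $(A,R)$ is bipartite on every connected component that contains an edge. Fix an edge $(a_0,b_0)\in R$, and let $B\ni a_0$, $C\ni b_0$ be the two sides of the bipartition of the component of $(A,R)$ containing this edge, so that $R$ has no edge with both endpoints in $B$ or both in $C$.

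The core construction is to apply local $n$-absorption of $=_A$ by $R$ to a carefully chosen finite family of one-$=_A$-in-$R$ tuples built from $(a_0,b_0)\in R$, its symmetric $(b_0,a_0)\in R$, and the diagonals $(a_0,a_0),(b_0,b_0)\in {=_A}$, with the diagonal entry placed in each of the $n$ coordinates in turn. This produces a single $n$-ary term $t$ of $\alg A$ such that, using idempotence, the resulting pairs
\[
(a_0,c_i)\in R, \quad c_i:=t(b_0,\dots,b_0,a_0,b_0,\dots,b_0)\ (\text{with }a_0\text{ in position }i),
\]
\[
(d_i,b_0)\in R, \quad d_i:=t(a_0,\dots,a_0,b_0,a_0,\dots,a_0)\ (\text{with }b_0\text{ in position }i),
\]
all lie in $R$ for $i=1,\dots,n$, along with their symmetric counterparts. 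By the bipartition, each $c_i\in C$ and each $d_i\in B$.

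The base case $n=2$ closes the argument at once: there $c_1=t(b_0,a_0)=d_2$ and $c_2=t(a_0,b_0)=d_1$, so the walk $d_1\to b_0\to a_0\to c_2=d_1$ is a closed walk of length $3$, which is odd, contradicting our assumption. For $n>2$, I would iterate the construction, repeatedly enlarging the finite family of absorbed tuples to include edges just produced (such as $(a_0,c_i)$ and $(d_j,b_0)$) and using Lemma~\ref{lem:abs_pp} to transfer the local absorption to iterated compositions of $R$. Each enlargement yields further identifications among the produced elements. The bipartition gives tight control on the parities of all the walks that can be assembled from $a_0,b_0,c_i,d_j$, and a pigeonhole-style collision between elements on the two sides of the bipartition must eventually produce a closed walk whose length is odd.

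The main obstacle, as ever in this kind of loop argument, is the higher-arity step. Absorption alone does not force any equality between the $c_i$'s and the $d_j$'s, so to obtain an odd closed walk one must use the minimality of $n$: the only way the construction above can fail to generate an odd closed walk is if one variable of $t$ can be collapsed via a walk of even length inside a single bipartition class, which would yield an $(n-1)$-ary operation still witnessing local absorption of $=_A$ on a suitable nonempty, symmetric, loop-free subuniverse of $\alg A^2$. Such a derived counterexample candidate would contradict the minimal choice of $n$. Executing this collapse precisely, while preserving nonemptiness, symmetry, and loop-freeness, is where essentially all the technical work lies.
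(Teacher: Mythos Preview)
Your argument has a genuine gap: the case $n>2$ is not proved. You set up the elements $c_i,d_j$ and correctly observe that for $n=2$ the identity $c_1=d_2$, $c_2=d_1$ immediately yields an odd closed walk. But for $n>2$ there is no such coincidence, and your proposal from that point on is a wish list rather than a proof: you say you ``would iterate the construction'' and appeal to a ``pigeonhole-style collision'' and to the minimality of $n$, and you explicitly concede that ``executing this collapse precisely \dots\ is where essentially all the technical work lies'' --- and then do not do that work. There is no reason given why iterating must terminate, why any collision must occur, or how exactly a failure would produce a counterexample candidate of smaller arity. As written, the argument is complete only for $n=2$.

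More importantly, you are working much harder than necessary. The paper's proof is a three-line application of Lemma~\ref{lem:aux_loops}: take any $(a,b)\in R$, form the alternating walk $(a,b,a,b,\dots)$ with $n$ vertices (length $n-1$), and apply Lemma~\ref{lem:aux_loops} with this walk used as \emph{both} $(a_1,\dots,a_n)$ and $(b_1,\dots,b_n)$ (so the hypothesis $(a_i,b_i)\in{=_A}$ is trivially met). The lemma outputs a walk of length $n$ in $R$ between the same two endpoints as the original walk of length $n-1$. Concatenating the two gives a closed walk of length $2n-1$, which is odd. No bipartition, no minimality of $n$, no iteration is needed for this claim; those tools are reserved for later steps in the proof of Theorem~\ref{thm:olsak_loop}.
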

\begin{proof}
    Since $R$ is nonempty and symmetric we have $(a,b),(b,a)\in R$. 
    Apply Lemma~\ref{lem:aux_loops} to the walk $(a,b,a,b,\dotsc,a/b)$ 
    of length $n-1$~(i.e., $n$ vertices, $n-1$ steps) taken twice~(where the last element is either $a$ or $b$ depending on the parity of $n$).
    The lemma provides a directed walk of length $n$
    connecting the first and last elements.
    Since $R$ is symmetric all the edges are undirected and we obtained a closed walk of odd length.
\end{proof}
\begin{claim}
    There exists a counterexample candidate containing a $3$-element clique.
\end{claim}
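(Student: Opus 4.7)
The plan is to take the counterexample candidate $R$ granted by the previous claim, which contains an odd closed walk, and either identify a $3$-clique inside $R$ itself or pass to a compositional power of $R$ that inherits all the counterexample properties while acquiring a triangle. Let $\ell = 2m+1$ denote the minimum length of a closed walk of odd length in $R$; since $R$ has no loop, $\ell \ge 3$.

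A preliminary observation I would establish is that any shortest odd closed walk $a_0, a_1, \dots, a_\ell = a_0$ is automatically simple, i.e., the vertices $a_0, \dots, a_{\ell-1}$ are pairwise distinct. Indeed, a repetition $a_i = a_j$ with $i < j$ would split the walk into a closed walk of length $j-i$ and one of length $\ell - (j-i)$; exactly one of these is odd, and both are strictly shorter than $\ell$, contradicting minimality. In particular, in the case $\ell = 3$ the three pairwise-related vertices $a_0, a_1, a_2$ already form a $3$-clique inside $R$ (using symmetry of $R$ for the edge $(a_0,a_2)$), and the claim is immediate.

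Assume now $\ell \ge 5$, equivalently $m \ge 2$. I would consider the compositional power $R' = \gp{R}{(2m-1)}$. This relation is a subuniverse of $\alg A^2$, symmetric (as $R$ is) and non-empty; iterated application of Lemma~\ref{lem:abs_pp} yields that $R'$ locally $n$-absorbs $\gp{(=_A)}{(2m-1)} = {=_A}$; and $R'$ has no loop, because a loop in $R'$ would be a closed walk of odd length $2m-1 < \ell$ in $R$, contradicting the minimality of $\ell$. Hence $R'$ is again a counterexample candidate. To exhibit a triangle in $R'$, I would revisit the shortest cycle: the pair $(a_0, a_2)$ lies in $R'$ via the long way around, $a_0 = a_{\ell}, a_{\ell-1}, a_{\ell-2}, \dots, a_2$, which uses exactly $\ell - 2 = 2m-1$ edges of $R$ (invoking symmetry of $R$ to reverse the orientation); and the pairs $(a_0, a_1)$ and $(a_1, a_2)$ lie in $R'$ by bouncing along a single symmetric edge, e.g., $a_0, a_1, a_0, a_1, \dots, a_1$ of length $2m-1$. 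Since $a_0, a_1, a_2$ are distinct by the simplicity observation, they form a $3$-clique in $R'$.

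The main obstacle is the simultaneous balancing of the ``no loop'' and ``contains a triangle'' properties in the derived relation $R'$: the exponent must be small enough to rule out short odd closed walks (so that $R'$ stays loop-free) yet large enough to close a triangle using the long side of the shortest odd cycle. Choosing the exponent to be exactly $\ell - 2$, the largest odd integer strictly less than $\ell$, makes both aspects work simultaneously, thanks to the minimality of $\ell$.
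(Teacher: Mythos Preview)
Your proof is correct. Both you and the paper rely on the same core mechanism: an odd compositional power of a counterexample candidate is again a counterexample candidate (symmetry, nonemptiness, and local $n$-absorption of $=_A$ are all preserved by Lemma~\ref{lem:abs_pp}, and loop-freeness follows from the minimality of the shortest odd closed walk). The difference is purely in execution. The paper proceeds iteratively: as long as the shortest odd closed walk has length greater than $3$, replace $R$ by $R\circ R\circ R$, which is again a counterexample candidate and has a strictly shorter odd closed walk; repeat until a triangle appears. You instead jump in one step to $R^{\circ(\ell-2)}$, choosing the exponent to be the largest odd number below $\ell$ so that loop-freeness is preserved while the ``long arc'' of the minimal odd cycle closes the third side of a triangle on $a_0,a_1,a_2$. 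Your argument is slightly more explicit (you spell out why the minimal odd closed walk is a simple cycle, which the paper uses tacitly when it asserts that a $3$-cycle is a $3$-clique), while the paper's iterative version avoids having to name the exponent. Either way the content is the same.
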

\begin{proof}
    Take a counterexample candidate $R$; 
    it has an odd cycle, and if it has a triangle we are done. 
    Thus the length of a shortest odd cycle is greater than $3$.
    In this case, however $R\circ R\circ R$ is a counterexample candidate~%
    (we use Lemma~\ref{lem:abs_pp} to provide local absorption) with shorter odd cycle.
    We proceed this way and, in the end, find a counterexample candidate with a $3$-cycle~(which is a $3$-clique).
\end{proof}

\begin{claim}
    No counterexample candidate contains an $n$-element clique.
\end{claim}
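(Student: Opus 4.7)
The strategy will be a proof by contradiction: assume some counterexample candidate $\alg R\le\alg A^2$ contains an $n$-element clique $\{a_1,\ldots,a_n\}$ (pairwise distinct, with $(a_i,a_j)\in R$ for all $i\neq j$), and aim to produce a loop in $R$, contradicting that $R$ is a counterexample candidate.

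First, for each $k\in[n]$ I would form the length-$n$ tuple of pairs
\[
\vec c^{(k)}=\bigl((a_1,a_k),(a_2,a_k),\ldots,(a_{k-1},a_k),(a_k,a_k),(a_{k+1},a_k),\ldots,(a_n,a_k)\bigr),
\]
whose $k$-th coordinate is the loop $(a_k,a_k)\in{=_A}$ while every other coordinate is a clique edge in $R$; so each $\vec c^{(k)}$ is a one-$(=_A)$-in-$R$ tuple of length $n$. Collecting these into the finite set $\mathcal C=\{\vec c^{(k)}:k\in[n]\}$ and applying local $n$-absorption yields an $n$-ary term $t$ of $\alg A$ with $t(\vec c^{(k)})\in R$ for every $k$. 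Since the second-coordinate tuple of $\vec c^{(k)}$ is the constant tuple $(a_k,\ldots,a_k)$, idempotency gives $t(\vec c^{(k)})=(b,a_k)$, where $b:=t(a_1,\ldots,a_n)$. Hence $(b,a_k)\in R$ for every $k$, and loop-freeness of $R$ forces $b\notin\{a_1,\ldots,a_n\}$; by symmetry of $R$ also $(a_k,b)\in R$, so $\{a_1,\ldots,a_n,b\}$ is an $(n+1)$-element ``quasi-clique'' in $R$ with every pair of distinct vertices joined by $R$ in both directions.

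The main obstacle is ruling out this seemingly consistent configuration. My plan is to enrich $\mathcal C$ from the outset so that a single absorbing term $t$ is forced to satisfy many simultaneous constraints. Concretely, I would include in $\mathcal C$ every one-$(=_A)$-in-$R$ tuple of length $n$ whose components are drawn from $\{(a_i,a_j):i,j\in[n]\}$; this collection is still finite. The resulting $t$ then satisfies $(\phi(\sigma),\phi(\tau))\in R$ whenever $\sigma,\tau\in[n]^n$ agree on exactly one coordinate, where $\phi(\sigma):=t(a_{\sigma(1)},\ldots,a_{\sigma(n)})$, and by idempotency $\phi(k,k,\ldots,k)=a_k$.

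The hard part, and the step where I expect the real work to lie, is then to show that these constraints on $\phi$ cannot be met unless $\phi(\sigma)=\phi(\tau)$ for some pair $\sigma,\tau$ agreeing on exactly one coordinate, because any such coincidence immediately produces a loop $(\phi(\sigma),\phi(\sigma))\in R$. Alternatively, one might iterate the construction—replacing some $a_i$ by $b$ yields a new $n$-clique in $R$—and use the minimality of $n$ by extracting from $t$ an $(n-1)$-ary absorbing operation on a related loop-free relation, contradicting minimality. Either route closes the argument and rules out the existence of an $n$-clique in any counterexample candidate.
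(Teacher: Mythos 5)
Your first paragraph follows the paper's argument exactly up to the point where you obtain $(b,a_k)\in R$ for every $k$, with $b=t(a_1,\dots,a_n)$, but then you stop short of the one observation that closes the argument, and the remainder of your proposal is a plan rather than a proof: the ``hard part'' about the map $\phi$ on $[n]^n$ is explicitly left open, and the fallback via minimality of $n$ is only a hope (in the paper, minimality of $n$ is used for a different claim --- the $(n-1)$-absorption of $=_B$ by $T=S\circ S\circ S$ at the end of the appendix --- and it is not clear how you would extract an $(n-1)$-ary absorbing operation from the quasi-clique configuration).

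The missing step is elementary: $\alg R$ is a subalgebra of $\alg A^2$, so $R$ is closed under \emph{every} term operation of $\alg A$ applied coordinate-wise to tuples lying entirely in $R$ --- no absorption is needed for this. You have already established $(b,a_1),\dots,(b,a_n)\in R$; applying the same term $t$ to these $n$ pairs yields $\bigl(t(b,\dots,b),\,t(a_1,\dots,a_n)\bigr)\in R$, which by idempotency of $t$ is exactly $(b,b)$. That is a loop, contradicting that $R$ is a counterexample candidate. With this one line your argument coincides with the paper's proof; the enlarged family $\mathcal{C}$, the $(n+1)$-element quasi-clique, and the analysis of $\phi$ are all unnecessary.
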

\begin{proof}
    Suppose $(a_1,\dotsc,a_n)$ is such a clique.
    We can choose, using the definition of local absorption, $t$ such that $(t(a_1,\dotsc,a_n),t(a_i,\dotsc,a_i))\in R$ for all $i$.
    We use this fact, and the fact that $\alg R\leq \alg A^2,$ to conclude that
    \begin{equation*}
    \big(t(t(a_1,\dotsc,a_n),\dotsc,t(a_1,\dotsc,a_n)),t(a_1,\dotsc,a_n)\big)\in R,
    \end{equation*}
    but, by the idempotency of $t$, the two elements are equal and we have obtained a loop --- a contradiction.
\end{proof}

In order to finish the proof we fix $R$ to be a counterexample candidate with a $3$-element clique and let $a_1,\dotsc,a_m$ be distinct, forming a maximal clique in $R$~%
(such a clique exists by the last claim).
Let $B$ be the subset of $A$ containing vertices with edges to each of $a_1,\dotsc,a_{m-2}$.
Note that $B$ is a subuniverse~%
(since $\alg A$ is idempotent)
and  $S = B^2\cap R$ is nonempty
as $(a_{m-1},a_m),(a_m,a_{m-1})\in S$.

Note, that $\alg S\leq \alg B^2$ is symmetric, nonempty, has no $3$-clique and it locally $n$-absorbs $=_B$.
We obtain a contradiction by showing that $\alg T = \alg S\circ\alg S\circ\alg S$ locally $n-1$ absorbs $=_B$.
The graph $T$ is non-empty, symmetric, has no loop and $\alg T\leq\alg B^2$. 
We will fix a one-$=_B$-in-$T$ tuple, and construct a finite set of one-$=_A$-in-$R$ tuples such that if $t(x_1,\dotsc,x_n)$ is an operation of $\alg A$ producing elements of $R$ on the tuples from the last set then $t(x_1,x_1,x_2,x_3,\dotsc,x_{n-1})$ produces an element of $T$ on the original tuple.
The theorem we are working to prove clearly follows from this fact.

Let $(c_1,d_1),(c_2,d_2),\dotsc,(c_{n-1},d_{n-1})$ be a one-$=_B$-in-$T$ tuple. We consider two cases: in case one $c_i=d_i$ for some $i>1$ and in case two $c_1=d_1$.
In case one (see Figure~\ref{fig:case1}), we assume, wlog that $i=2$, and find for all $j\neq 2$ elements $c'_j,c''_j$ such that 
$(c_j,c'_j),(c'_j,c''_j),(c''_j,d_j)\in S$.
It suffices to take care of the three following one-$=_A$-in-$R$ evaluations:

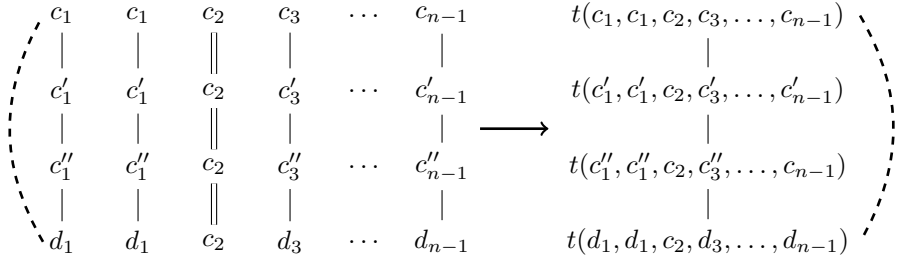
\begin{figure}
    \centering
    \begin{tikzpicture}
        
        % Node attributes excluding position
        \node (00) at (0,0) {$c_1$};
        \node (10) at (1,0) {$c_1$};
        \node (20) at (2,0) {$c_2$};
        \node (30) at (3,0) {$c_3$};
        \node (40) at (4,0) {$\dots$};
        \node (50) at (5,0) {$c_{n-1}$};
        \node (80) at (8.5,0) {$t(c_1,c_1,c_2,c_3,\dots ,c_{n-1})$};
        
        \node (01) at (0,-1) {$c'_1$};
        \node (11) at (1,-1) {$c'_1$};
        \node (21) at (2,-1) {$c_2$};
        \node (31) at (3,-1) {$c'_3$};
        \node (41) at (4,-1) {$\dots$};
        \node (51) at (5,-1) {$c'_{n-1}$};
        \node (81) at (8.5,-1) {$t(c'_1,c'_1,c_2,c'_3,\dots ,c'_{n-1})$};
        
        \node (02) at (0,-2) {$c''_1$};
        \node (12) at (1,-2) {$c''_1$};
        \node (22) at (2,-2) {$c_2$};
        \node (32) at (3,-2) {$c''_3$};
        \node (42) at (4,-2) {$\dots$};
        \node (52) at (5,-2) {$c''_{n-1}$};
        \node (82) at (8.5,-2) {$t(c''_1,c''_1,c_2,c''_3,\dots ,c_{n-1})$};
        
        \node (03) at (0,-3) {$d_1$};
        \node (13) at (1,-3) {$d_1$};
        \node (23) at (2,-3) {$c_2$};
        \node (33) at (3,-3) {$d_3$};
        \node (43) at (4,-3) {$\dots$};
        \node (53) at (5,-3) {$d_{n-1}$};
        \node (83) at (8.5,-3) {$t(d_1,d_1,c_2,d_3,\dots ,d_{n-1})$};

        \path [-,line width=1pt, dashed] (-0.25,-3) edge[bend left] node [left] {} (-0.25,0);
        
        \path [-] (00) edge node[left] {} (01);
        \path [-] (01) edge node[left] {} (02);
        \path [-] (02) edge node[left] {} (03);
        
        \path [-] (10) edge node[left] {} (11);
        \path [-] (11) edge node[left] {} (12);
        \path [-] (12) edge node[left] {} (13);
        
        \draw[double equal sign distance] (20) -- (21);
        \draw[double equal sign distance] (21) -- (22);
        \draw[double equal sign distance] (22) -- (23);
        
        \path [-] (30) edge node[left] {} (31);
        \path [-] (31) edge node[left] {} (32);
        \path [-] (32) edge node[left] {} (33);
        
        \path [-] (50) edge node[left] {} (51);
        \path [-] (51) edge node[left] {} (52);
        \path [-] (52) edge node[left] {} (53);
        
        \path [-] (80) edge node[left] {} (81);
        \path [-] (81) edge node[left] {} (82);
        \path [-] (82) edge node[left] {} (83);
        
        \path [-,line width=1pt, dashed] (10.5,0) edge[bend left] node [left] {} (10.5,-3);
        
        \path [->, line width=1pt] (5.5,-1.5) edge node[left] {} (6.4,-1.5);
        
    \end{tikzpicture}
    \caption{Solid lines are are $S$-related and dashed lines are $T$-related.}
    \label{fig:case1}
\end{figure}

\begin{gather*}
(c_1,c'_1),(c_1,c'_1),(c_2,c_2),(c_3,c'_3),\dotsc,(c_{n-1},c'_{n-1}),
\\
(c'_1,c''_1),(c'_1,c''_1),(c_2,c_2),(c'_3,c''_3),\dotsc,(c'_{n-1},c''_{n-1})\text{ and }\\
(c''_1,d_1),(c''_1,d_1),(c_2,c_2),(c''_3,d_3),\dotsc,(c''_{n-1},d_{n-1}).
\end{gather*}
In case two (see Figure~\ref{fig:case2}) the situation is a bit more involved, we define $c'_i,c''_i$ for all $i>1$ but need $4$ evaluations:

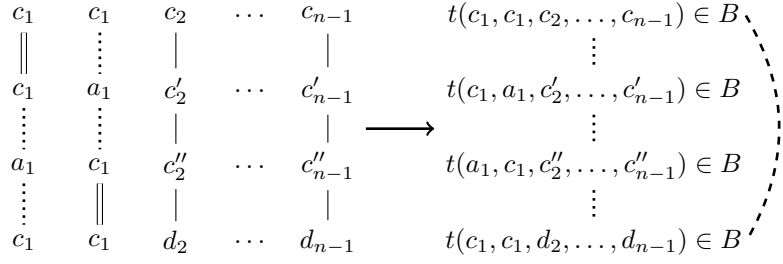
\begin{figure}
    \centering
    \begin{tikzpicture}
        
        % Node attributes excluding position
        \node (00) at (0,0) {$c_1$};
        \node (10) at (1,0) {$c_1$};
        \node (20) at (2,0) {$c_2$};
        \node (30) at (3,0) {$\dots$};
        \node (40) at (4,0) {$c_{n-1}$};
        \node (70) at (7.5,0) {$t(c_1,c_1,c_2,\dots ,c_{n-1})\in B$};
        
        \node (01) at (0,-1) {$c_1$};
        \node (11) at (1,-1) {$a_1$};
        \node (21) at (2,-1) {$c'_2$};
        \node (31) at (3,-1) {$\dots$};
        \node (41) at (4,-1) {$c'_{n-1}$};
        \node (71) at (7.5,-1) {$t(c_1,a_1,c'_2,\dots ,c'_{n-1})\in B$};
        
        \node (02) at (0,-2) {$a_1$};
        \node (12) at (1,-2) {$c_1$};
        \node (22) at (2,-2) {$c''_2$};
        \node (32) at (3,-2) {$\dots$};
        \node (42) at (4,-2) {$c''_{n-1}$};
        \node (72) at (7.5,-2) {$t(a_1,c_1,c''_2,\dots ,c''_{n-1})\in B$};
        
        \node (03) at (0,-3) {$c_1$};
        \node (13) at (1,-3) {$c_1$};
        \node (23) at (2,-3) {$d_2$};
        \node (33) at (3,-3) {$\dots$};
        \node (43) at (4,-3) {$d_{n-1}$};
        \node (73) at (7.5,-3) {$t(c_1,c_1,d_2,\dots ,d_{n-1})\in B$};
        
        \draw[double equal sign distance] (00) -- (01);
        \draw[dotted, line width=1pt] (01) -- (02);
        \draw[dotted, line width=1pt] (02) -- (03);
        
        \draw[dotted, line width=1pt] (10) -- (11);
        \draw[dotted, line width=1pt] (11) -- (12);
        \draw[double equal sign distance] (12) -- (13);
        
        \draw[-] (20) -- (21);
        \draw[-] (21) -- (22);
        \draw[-] (22) -- (23);
        
        \draw[-] (40) -- (41);
        \draw[-] (41) -- (42);
        \draw[-] (42) -- (43);
        
        \draw[dotted, line width=1pt] (70) -- (71);
        \draw[dotted, line width=1pt] (71) -- (72);
        \draw[dotted, line width=1pt] (72) -- (73);
        
        \path [-,line width=1pt, dashed] (9.5,0) edge[bend left] node [left] {} (9.5,-3);
        
         \path [->, line width=1pt] (4.5,-1.5) edge node[left] {} (5.4,-1.5);
        
    \end{tikzpicture}
    \caption{Solid lines are are $S$-related, dashed lines are $T$-related, and dotted lines are $R$-related.}
    \label{fig:case2}
\end{figure}

\begin{gather*}
(c_1,c_1),(c_1,a_1),(c_2,c'_2)\dotsc,(c_{n-1},c'_{n-1}),
\\
(c_1,a_1),(a_1,c_1),(c'_2,c''_2),\dotsc,(c'_{n-1},c''_{n-1}),\\
(a_1,c_1),(c_1,c_1),(c''_2,d_2),\dotsc,(c''_{n-1},d_{n-1})\text{ and two new ones }\\
(c_1,a_1),(a_1,a_1),(c'_2,a_1),\dotsc, (c'_{n-1},a_1),\\
(a_1,a_1),(c_1,a_1),(c''_2,a_1),\dots, (c''_{n-1},a_1).
\end{gather*}
The list contains $5$ evaluations, but the second one ~(included for simplicity) is in fact not a one-$=_A$-in-$R$ evaluation, but a usual application of the term to elements of $R$.
Any term, putting all these evaluations in $R$ puts~(by idempotency and the fact that all considered elements are adjacent to $a_i$ if $1<i<m-1$)
$t(c_1,a_1,c'_2,\dotsc,c'_{n-1}), t(a_1,c_1,c''_2,\dotsc,c''_{n-1})\in B$.
These elements witness the path required to put the pair $(t(c_1,c_1,c_2,c_3,\dotsc,c_{n-1}),t(c_1,c_1,d_2,d_3\dotsc,d_{n-1}))$ in $T$.

\section{Consistent instances are sensitive}\label{sec:k+2}
In this section we provide a proof for the ``only if'' direction in Theorem~\ref{thm:mainresult} 
and ``1 implies 2'' in Theorem~\ref{thm:sensfull}.
We will proceed with the two proofs in parallel; 
in one case we fix an algebra $\alg A$ and in the other a variety $\var V$.
We will assume, without loss of generality, that the only  operation symbol of $\var V$ is  $(k+2)$-ary and is a near unanimity operation for all members of $\var V$.  So, all members of $\var V$ are idempotent.
Formally, in the case of Theorem~\ref{thm:sensfull}, we should be working with instances over $\alg A^2$,
but if $\alg A$ has local $(k+2)$-ary near unanimity term operations, then so does $\alg A^2$ and
so we can work directly with an algebra possessing  local near unanimity term operations and denote it by $\alg A$.
We will remark on the differences between these two cases only in the places where we apply near unanimity operations.

For the purpose of this section we modify the definition of an instance slightly:
an instance is a triple $\inst I = (V, \{\alg A_x \mid x \in V\}, \constr) $,  where $\constr = \{(S,R_S) \mid S \subseteq V, |S| \leq k\}$ and $R_S \leq \prod_{x \in S} \alg A_x$.
Note that the definition of a $\csp$ instance is, formally, different than our standard definition:
the variables involved in a constraint are a set and not a tuple.
This minor modification will allow us to present the proofs more succinctly. 
In order for the interpretation of a constraint to be unique we assume, without loss of generality, 
that the algebras $\alg A_x$ are disjoint.
When applying the results of this section in Theorem~\ref{thm:mainresult} we will set each $\alg A_x$ to be an isomorphic copy of $\alg A$,
and in case of Theorem~\ref{thm:sensfull} we will choose isomorphic copies from the variety, so that their domains are disjoint.

The rough idea of the proof is to fix, in a $(k,k+1)$-instance, 
a tuple from the relation constraining set of variables $Y$
and consider the instance obtained by removing $Y$ from the set of variables 
and shrinking the constraint relations so that only the tuples extending the fixed choice of values for the variables in $Y$ remain. 
If we were able to show that the obtained instance contains a $(k,k+1)$-subinstance, 
both theorems would then follow by induction on the number of variables of the instance. 
It is well known that for instances \textbf{with finite domains}, 
the latter property is equivalent to the solvability of certain relaxed instances, here called $k$-trees. 
Our strategy for the proof is, in fact, to prove the solvability of $k$-trees, 
by induction on a measure of complexity of $k$-trees.  
Unfortunately, for infinite domains, the solvability of $k$-trees is in general weaker than having a $(k,k+1)$-subinstance, and this brings several technical complications into our proof. 
In particular, we will be working with  $\csp$ instances,  that \textbf{won't necessarily be  $(k,k+1)$-instances}, or
even $k$-uniform.

The remaining parts of this section are organized as follows. 
In the first subsection we introduce concepts that are useful for working with instances and their solutions -- patterns and realizations. 
The next subsection studies solvability  with a fixed evaluation for $k$ variables 
and provides two core technical claims for the inductive proof of the solvability of $k$-trees; 
the proof is then assembled in  the third subsection and the missing parts of Theorems~\ref{thm:mainresult} and~\ref{thm:sensfull} are derived as a consequence.

Until Theorem~\ref{thm:core} in the last section we  fix 
\begin{itemize}
    \item an integer $k \geq 2$,
    \item a variety $\var V$ with a $(k+2)$-ary near unanimity term in case of Theorem~\ref{thm:mainresult} 
    or an algebra $\alg A$ with local near unanimity term operations of arity $k+2$
    in case of Theorem~\ref{thm:sensfull};
    \item an instance $\inst I = (V, \{\alg A_x \mid x \in V\}, \constr) $, 
    where $\constr = \{(S,R_S) \mid S \subseteq V, |S| \leq k\}$ and $R_S \leq \prod_{x \in S} \alg A_x$, such that, for any $S' \subseteq S$ with $|S| \leq k$, the projection of $R_S$ onto $S'$ is contained in $R_{S'}$~%
    (here either every $\alg A_x$ is in a variety $\var V$ in case of Theorem~\ref{thm:mainresult}, 
    or $\alg A_x$ is an isomorphic copy of $\alg A$ in case of Theorem~\ref{thm:sensfull}).
\end{itemize}

A $(k,k+1)$-instance can  naturally be expanded to meet the condition in the last item by adding the constraints $(S',R_{S'})$ for $|S'|<k$, where $R_{S'}$ is defined as the projection of  $R_S$ onto $S'$ for an arbitrary $k$-element superset $S$ of $S'$.
It is an easy exercise, and we leave it to the reader, to verify that this definition does not depend on the choice of $S$.

For a tuple of (not necessarily distinct) variables $x_1$, \dots, $x_l$ with $l \leq k$ we denote $R_{x_1, \dots, x_l} = \{(r_{x_1}, \dots, r_{x_l}) \mid \vc{r} \in R_{\{x_1, \dots, x_l\}}\} \leq \prod_{i=1}^l \alg A_{x_i}$.
%To avoid ambiguity, we assume that the universes of the $\alg A_x$'s are disjoint. 
Finally, we set $A = \bigcup_{x \in V} A_x$. 

\subsection{Patterns}

A pattern is a hypergraph whose  vertices are labeled by variables and hyperedges indicate that constraints should be satisfied. It will be convenient to have the set of hyperedges closed under taking subsets.

\begin{definition}
A \emph{pattern} is a triple $\pat{P} = (P; \asc{F},v)$, where $P$ is a set of \emph{vertices},  $\asc{F}$ is a family of at most $k$-element subsets of $P$ closed under taking subsets, and $v$ is a mapping $v: P \to V$.  Members of $\asc{F}$ are called \emph{faces} and the variable $v(i)$ is referred to as the \emph{label} of $i$.

A \emph{realization} of $\pat{P}$ is a mapping   $\alpha: P \to A$, which is \emph{consistent with} $v$, that is, $\alpha(i) \in A_{v(i)}$ for every $i \in P$, and \emph{satisfies} every face $\{f_1, \dots, f_l\} \in \asc{F}$, that is,  
$(\alpha(f_1), \dots, \alpha(f_l)) \in R_{v(f_1), \dots, v(f_l)}$.
\end{definition}

 For clarity, we will always call a mapping from a set of vertices to $A$ (which is not necessarily a realization of a pattern) an \emph{assignment}   (denoted $\alpha$, $\beta$, \dots), a mapping from a set of variables to $A$ an \emph{evaluation} (denoted $\phi$, $\rho$, \dots), and a mapping from a set of vertices to $V$ a \emph{labeling} (denoted $v$).
 We say that an assignment $\alpha$ \emph{extends} an evaluation $\phi$ if $\alpha(p) = \phi(v(p))$ for any $p$ in the domain of $\alpha$ such that $v(p)$ is in the domain of $\phi$.

 Since we assume that the $A_x$'s are disjoint, any assignment uniquely determines a consistent labeling and it makes sense to say that an assignment satisfies a set of vertices $F$, provided $|F| \leq k$. Also note that, by the assumptions on $\inst{I}$, if an assignment $\alpha$ satisfies $F$, then it satisfies every subset of $F$. 
 Finally, note that in the same situation $\alpha(i)=\alpha(i')$ whenever $v(i)=v(i')$. %therefore $\alpha$ \emph{induces} an evaluation $\phi:v(F) \to A$ by requiring $\alpha(i) = \phi(v(i))$ (where $v$ is the labeling determined by $\alpha$).

A pattern  $\pat{P}' = (P'; \asc{F}',v')$ is a \emph{subpattern} of $\pat{P}$ if $P' \subseteq P$,  $\asc{F}' \subseteq \asc{F}$, and $v'$ is the restriction of $v$ to $P'$.
By a union of two patterns we mean the set-theoretical union of the vertex sets, face sets, and labelings. It can only be formed if there are no collisions among labels (that is, each vertex belonging to both patterns must have the same label in both patterns). 

The richest patterns are the \emph{complete patterns}, whose faces are all the subsets of the vertex set of size at most $k$.
Note that a realization of a complete pattern with $l \leq k$ vertices is essentially the same as a tuple in the corresponding constraint relation.
The most important patterns for our purposes 
%(and the only ones we will work with) 
are $l$-trees with $l \leq k$. These are, informally, patterns obtained from the empty pattern by gradually adding complete patterns with at most $l+1$ vertices and merging them along a face to the already constructed pattern. 

\begin{definition}
Let $l \leq k$ and
let $F$ be a set of labeled vertices of size at most $l$.
The \emph{complete $l$-tree with base $F$ of depth 1} is the complete pattern with vertex set $F$. 
The \emph{complete $l$-tree with base $F$
of depth $d+1$} is obtained from the complete $l$-tree $\pat{P}$ with base $F$ of depth $d$ by adding to $\pat{P}$, for every face $E$ of $\pat{P}$ and every  $(l+1-|E|)$-element set of variables $U$,
a set $G$ of $|U|$ fresh vertices labeled by all elements of $U$ and all the at most $k$-element subsets of $E \cup G$ as faces. 

An \emph{$l$-tree} is a subpattern of a complete $l$-tree.
\end{definition}

The significance of $l$-trees is apparent from the following observation.

\begin{lemma} \label{lem:trees_real}
Assume that $\inst{I}$ is a $(k,k+1)$-instance~(with small arity constraints added).
Let $l \leq k$, let $\pat{P}$ be an $l$-tree, and let $F$ be a face of $\pat{P}$.
Then any assignment $\alpha: F \to A$ that satisfies $F$ can be extended to a realization of $\pat{P}$. 
In particular, every $l$-tree is realizable.
\end{lemma}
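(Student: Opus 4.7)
I would prove the assertion by induction on the minimum depth $d$ of a complete $l$-tree $\pat{P}'$ containing $\pat{P}$ as a subpattern; it suffices to extend $\alpha$ to a realization of $\pat{P}'$, since the restriction to $\pat{P}$ is then the desired realization and every face of $\pat{P}$ is automatically a face of $\pat{P}'$.

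For the base case $d=1$, the tree $\pat{P}'$ is the complete pattern on its base $F'$ of size at most $l \leq k$, and $F \subseteq F'$. I would extend the tuple encoded by $\alpha$ to a tuple in $R_{v(F')}$ via surjectivity of the projection onto $R_{v(F)}$, which holds in a $(k,k+1)$-instance with small-arity constraints added (and which automatically assigns equal values to vertices sharing a label).

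For the inductive step, decompose $\pat{P}'$ as the depth-$d$ complete $l$-tree $\pat{Q}$ together with, for every face $E$ of $\pat{Q}$ and every $(l+1-|E|)$-element set of variables $U$, a set $G_{E,U}$ of fresh vertices labeled by $U$. Any face of $\pat{P}'$ either already belongs to $\pat{Q}$ or is contained in some $E \cup G_{E,U}$ (since any genuinely new face must include a fresh vertex). If $F$ falls in the first case, the induction hypothesis applied to $\pat{Q}$ yields a realization $\beta$ of $\pat{Q}$ extending $\alpha$; then for each branch I extend $\beta|_E$ to a satisfying assignment on $E \cup G_{E,U}$, which is feasible because $|E \cup G_{E,U}|=l+1 \leq k+1$, so the $(k,k+1)$-property of $\inst{I}$ (or plain projection surjectivity when $l+1 \leq k$) supplies the missing values. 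If $F$ falls in the second case, I first extend $\alpha$ to a satisfying assignment $\alpha'$ on $F \cup E$ using $|F \cup E| \leq l+1 \leq k+1$, then apply the inductive hypothesis to $\pat{Q}$ with starting face $E$ and assignment $\alpha'|_E$ to obtain a realization $\beta$ of $\pat{Q}$, and finally finish each branch as before; for the branch $(E,U)$ containing $F$, the values on $F \cap G_{E,U}$ are already prescribed by $\alpha'$, and only the remaining vertices of that branch need filling in (again by the $(k,k+1)$-property applied inside $E \cup G_{E,U}$).

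The particular case that every $l$-tree is realizable follows by taking $F$ to be the empty face with the empty assignment, which trivially satisfies it (the degenerate situations in which $\pat{P}$ has no vertices or no faces being trivial). The chief technical obstacle lies in the second subcase of the inductive step, where $F$ straddles $\pat{Q}$ and an added branch: one must first push $\alpha$ onto the adjacent face $E$ of $\pat{Q}$ before the induction hypothesis becomes applicable, and then carefully reconcile the resulting realization $\beta$ of $\pat{Q}$ with the partial values already prescribed by $\alpha'$ on $F \cap G_{E,U}$ when completing that branch.
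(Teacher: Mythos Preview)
Your proof is correct, but it takes a slightly different route from the paper's. The paper's argument is a two-liner: it first observes that \emph{every $l$-tree with a face $F$ is a subpattern of a complete $l$-tree with base $F$} (a re-rooting observation), and then simply extends $\alpha$ outward from the base one layer at a time using the $(k,k+1)$-property. Because the base is chosen to be $F$ itself, the extension always proceeds from an already-realized face to an adjacent $(l{+}1)$-clique, and no case distinction is needed.

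Your approach avoids the re-rooting observation at the price of the two-case split in the inductive step. Case~1 (where $F$ lies inside the inner tree $\pat Q$) is the paper's argument in disguise; Case~2 (where $F$ straddles an outer branch) is the extra work you pay for not re-rooting: you must first push $\alpha$ onto the adjacent face $E$ of $\pat Q$, then invoke the induction hypothesis, and finally reconcile with the prescribed values on $F\cap G_{E,U}$. This reconciliation step is sound because $F\cup E\subseteq E\cup G_{E,U}$ has size at most $l{+}1\le k{+}1$, so the $(k,k{+}1)$-property (or projection surjectivity when $l{+}1\le k$) does supply the needed extension. Both arguments are valid; the paper's is shorter and more conceptual, while yours is more explicit and does not rely on the somewhat informal claim that any face can serve as a base.
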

\begin{proof}
If $\pat{P}$ is a complete $l$-tree with base $F$, then $\alpha$ can be gradually extended to a realization of $\pat{P}$ by a straightforward application of the definition of $(k,k+1)$ instance.
It remains to observe that every $l$-tree with a face $F$ is a subpattern of a complete $l$-tree with base $F$.
\end{proof}

As noted above, realizability of $k$-trees in some sense even characterizes $(k,k+1)$-instances for finite domains. 
From this perspective it makes sense to use $k$-trees to measure the consistency level (called the quality) 
of a tuple in a constraint relation and, more generally, the consistency level of a realization. 

\begin{definition}
Let $F$ be a labeled set of vertices of size at most $k$. We say that an assignment $\alpha$, whose domain includes $F$ and which is consistent with the labeling, \emph{satisfies  $F$ with quality $d$} if $\restr{\alpha}{F}$ can be extended to a realization of the complete $k$-tree with base $F$ of depth $d$.
A realization $\alpha$ of a pattern $\pat{P}$ \emph{has quality} $d$ (or $\alpha$ satisfies $\pat{P}$ with quality $d$) if $\alpha$ satisfies each face of the pattern with quality $d$.

Similarly, we say that an evaluation $\phi: W \to A$ (where $|W| \leq k$) \emph{has quality} $d$ if
 the corresponding assignment for a $|W|$-element set of vertices labeled by all the elements of $W$ has quality $d$. 
\end{definition}

Informally, an evaluation $\phi$ has quality $d$ if it survives $d$ steps in a certain naturally defined consistency procedure. 
Note that a realization of a pattern is the same as a realization of quality $1$ and  a realization of quality $d$ is also a realization of quality $d'$ for any $d' \leq d$. 
Finally, observe that if an assignment $\alpha$ satisfies $F$ with quality $d$, then it satisfies every subset of $F$ with quality $d$.

We finish this subsection with two observations.

\begin{lemma} \label{lem:real_subalg}
The set of quality-$d$ realizations of a pattern $\pat{P}$ is a subuniverse of $\prod_{i \in P} \alg A_{v(i)}$.
\end{lemma}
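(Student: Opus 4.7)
The plan is to express the set of quality-$d$ realizations of $\pat{P}$ as an intersection of preimages, under coordinate projections, of certain subuniverses, which is automatically a subuniverse of $\prod_{i \in P} \alg A_{v(i)}$. Throughout, I would use the standard observation that for \emph{any} pattern $\pat{Q} = (Q; \asc{G}, v_Q)$ the set of its realizations forms a subuniverse of $\prod_{i \in Q} \alg A_{v_Q(i)}$: indeed, an assignment $\alpha$ realizes $\pat{Q}$ iff for every face $G = \{g_1,\dots,g_l\}$ the tuple $(\alpha(g_1),\dots,\alpha(g_l))$ lies in the subuniverse $R_{v_Q(g_1),\dots,v_Q(g_l)}$, and preimages of subuniverses under coordinate projections, as well as finite intersections of subuniverses, are themselves subuniverses.

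For each face $F \in \asc{F}$ of $\pat{P}$, let $\pat{T}_F$ denote the complete $k$-tree with base $F$ of depth $d$, and set
\[
B_F = \proj_F\{\beta \mid \beta \text{ realizes } \pat{T}_F\} \leq \prod_{i \in F} \alg A_{v(i)}.
\]
Since projections of subuniverses are subuniverses, $B_F$ is itself a subuniverse; and by the very definition of satisfaction with quality $d$, an assignment $\alpha$ on $P$ satisfies $F$ with quality $d$ precisely when $\restr{\alpha}{F} \in B_F$.

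Therefore the set of quality-$d$ realizations of $\pat{P}$ is exactly
\[
\bigcap_{F \in \asc{F}}\{\alpha \colon P \to A \mid \restr{\alpha}{F} \in B_F\},
\]
which is an intersection of preimages, under coordinate projections, of the subuniverses $B_F$, hence a subuniverse of $\prod_{i \in P} \alg A_{v(i)}$. The argument is entirely routine and presents no substantial obstacle; the only mild point to remember is that while the realizations of $\pat{T}_F$ live in a potentially much larger product than $\prod_{i \in F} \alg A_{v(i)}$, projecting onto the base $F$ preserves the subuniverse property, which is what lets us pull everything back inside the small product we care about.
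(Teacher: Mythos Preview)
Your proof is correct and follows essentially the same idea as the paper's. The only cosmetic difference is that the paper glues all the per-face $k$-trees into a single larger pattern $\pat{Q}$ and observes that the quality-$d$ realizations of $\pat{P}$ are the projection to $P$ of the (quality-$1$) realizations of $\pat{Q}$, whereas you handle each face separately and intersect the resulting preimages; both amount to the same observation that realizations of patterns are subuniverses and that projections and preimages preserve this.
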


\begin{proof}
For $d$=1 the claim is a straightforward consequence of the fact that constraint relations are subuniverses of products of $\alg A_x$'s. 
Otherwise we observe that the set of quality-$d$ realizations of $\pat{P}$ is the projection of the set of quality-1 realizations of a larger pattern $\pat{Q}$ to $P$. Indeed, $\pat{Q}$ can be taken as the pattern obtained from $\pat{P}$ by appending to every face $F$ the complete $k$-tree with base $F$ of depth $d$.
\end{proof}

\begin{lemma} \label{lem:face-to-tree}
Let $E \subseteq F$ be labeled sets of vertices, $E \leq k$, $|F| \leq k+1$,  
and let $\alpha:E \to A$ be an assignment which is consistent with the labeling and satisfies $E$ with quality $d+1$. Then $\alpha$ can be extended to an assignment $\beta: F \to A$ which is consistent with the labeling and satisfies each at most $k$ element subset of $F$ with quality $d$.

More generally, for any $k$-tree $\pat{P}$, any face $F$, and any $d$, there exists $d'$ such that  every assignment $\alpha:F \to A$ which satisfies $F$ with quality $d'$ can be extended to a realization of $\pat{P}$ of quality $d$.
\end{lemma}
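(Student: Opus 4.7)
My plan is to prove the two parts of the lemma in sequence, with the general statement following from the local one by induction on a construction of $\pat{P}$ as a $k$-tree.

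For the first statement, I unfold the definition of quality $d+1$: the assignment $\alpha$ extends to a realization $\gamma$ of the complete $k$-tree with base $E$ of depth $d+1$. By the recursive definition of this tree, the step from depth $1$ to depth $2$ adds, for the face $E$ itself and any $(k+1-|E|)$-element set of labels $U$, fresh vertices $G$ labeled by $U$ together with all at most $k$-element subsets of $E \cup G$ as faces. Choosing $U$ to contain the labels of $F \setminus E$ (padding arbitrarily if $|F| < k+1$), I identify the new vertices with $F \setminus E$ and set $\beta = \restr{\gamma}{F}$. To verify that $\beta$ satisfies each at most $k$-element subset $S$ of $F$ with quality $d$, I argue that the depth-$(d+1)$ tree with base $E$ contains, hanging off the face $S$, a depth-$d$ complete $k$-tree with base $S$, and restrict $\gamma$ to this subtree.

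The second statement follows by induction on a construction of $\pat{P}$ as a $k$-tree: write $\pat{P}$ as $\pat{P}_0 \subseteq \pat{P}_1 \subseteq \cdots \subseteq \pat{P}_m = \pat{P}$, where $\pat{P}_0$ has vertex set $F$ with all subsets of size at most $k$ as faces, and each $\pat{P}_{j+1}$ is obtained from $\pat{P}_j$ by gluing in a single complete pattern on at most $k+1$ vertices along one of its existing faces. The first statement, applied to such a gluing step with $E$ equal to the gluing face and $F$ equal to the whole newly attached pattern, shows that any realization of $\pat{P}_j$ of quality $e+1$ extends to a realization of $\pat{P}_{j+1}$ of quality $e$, because the newly added vertices and faces live inside a single complete $k$-tree attached at the gluing face. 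Setting $d' = d + m$ and iterating, any quality-$d'$ assignment on $F$ extends to a quality-$d$ realization of $\pat{P}$.

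The main obstacle I anticipate is the structural claim tacitly used in Part 1: that the depth-$(d+1)$ complete $k$-tree with base $E$ contains, for every $S \subseteq F$ with $|S| \leq k$, a depth-$d$ complete $k$-tree with base $S$ embedded so that the restriction of $\gamma$ constitutes a valid extension of $\restr{\beta}{S}$. I expect this to be handled by a subsidiary induction on $d$ using the recursive construction: the extra layer that distinguishes depth $d+1$ from depth $d$ supplies, for each face appearing by the second layer of the construction, precisely the depth margin needed to host a depth-$d$ subtree based at that face. Careful bookkeeping of which faces arise at which level of the construction closes the induction and gives the claim in full generality.
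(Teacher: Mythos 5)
Your proposal is correct and takes essentially the same route as the paper, whose proof is only the one-liner ``the first observation follows from the definitions while the second one is proved by induction from the first one'': you read the first claim off the recursive definition of the complete $k$-tree with base $E$ (identifying $F\setminus E$ with fresh vertices added at the second layer, and noting that a depth-$d$ complete $k$-tree based at any face present by depth $2$ sits inside the depth-$(d+1)$ tree), and you obtain the second claim by gluing the blocks of the $k$-tree one at a time starting from the face $F$, applying the first claim at each gluing and losing one quality level per step, so $d'=d+m$ works. The two structural facts you assert rather than prove --- the embedded subtree claim and the existence of a block decomposition of (a superpattern of) $\pat{P}$ re-rooted at the face $F$ --- are exactly the routine $k$-tree facts the paper also leaves implicit, so there is no gap beyond the paper's own level of detail.
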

\begin{proof}
The first observation follows from the definitions while the second one is proved by induction from the first one.
\end{proof}

\subsection{Fixing patterns}

A fixing pattern is a pattern together with a specified set $Y$ of fixing variables. The idea is to require that any consistent evaluation of $Y$ can be extended to a realization of the whole pattern. 
Since our instance isn't necessarily a $(k,k+1)$-instance the following modification is needed.

\begin{definition}
A \emph{fixing pattern} is a pair $(\pat{P},Y)$, where $\pat{P}$ is a pattern and  $Y$ is a set of variables of size at most $k$. The elements of $Y$ are called \emph{fixing variables}, the remaining variables from $v(P) \setminus Y$ are called \emph{inner}. 

A fixing pattern $(\pat{P}, Y)$ is \emph{f-realizable}  if for every $d$ 
 there exists $d' = z_{(\pat{P},Y)}(d) \geq d$ such that 
 every evaluation $\phi: Y \to A$ of quality $d'$
 can be extended to 
 a realization of $\pat{P}$ of quality $d$.
\end{definition}

It will be a feature of the proofs in this subsection that
the sufficient $d' = z_{(\pat{P},Y)}(d)$ from the definition will actually depend only on the ``shape'' of the fixing pattern: it will not depend on the instance, or on the variety, or on the concrete choice of labeling (i.e., the same $d'$ will work for a pattern obtained from $\pat{P}$ by changing $v$ to $r(v)$ and $Y$ to $r(Y)$, for any $r: V \to V$). 

A vertex $f$ of a fixing pattern $(\pat{P},Y)$ is called fixing/inner if the variable $v(f)$ is.
Faces consisting entirely of inner variables are called \emph{inner}, the remaining faces are called \emph{fixing}. 
A fixing face, whose set of inner vertices is $F$ and whose set of labels of fixing vertices is $Y'$, is denoted $[F,Y']$. Note that the definition of f-realization only depends on the ``inner part'' of the fixing pattern together with the list of those $[F,Y']$ that are present in the fixing pattern.
It will often be convenient to choose $\pat{P}$ \emph{free}, that is, the sets of fixing \textbf{vertices} of any two maximal fixing faces are disjoint.

An inner face $F$ is called \emph{completely fixed} if $[F,Y']$ is a (fixing) face for every $(k - |F|)$-element set of variables $Y' \subseteq Y$.
If $\pat{Q}$ is a pattern and $Y$ a set of variables of size at most $k$, which is disjoint from $v(Q)$,  
then the \emph{complete $Y$-fixing} (\emph{complete vertex $Y$-fixing}, respectively) of  $\pat{Q}$ is the free fixing pattern $(\pat{P},Y)$,
whose set of inner faces coincides with the set of faces of $\pat{Q}$ and each inner face (inner vertex, respectively) is completely fixed. 
Since complete fixings are chosen freely,  a complete fixing of a $k$-tree is a $k$-tree.

We say that a pattern $\pat{Q}$ is \emph{strongly realizable} if each complete fixing of $\pat{Q}$ is f-realizable.

Our aim, and the main technical contribution of this section is to prove that every $k$-tree is strongly realizable.  We now present, in Lemma~\ref{lem:sr1} and Lemma~\ref{lem:sr3}, two constructions that preserve f-realizability.
A proof that  the complete fixing of every $k$-tree  can be obtained by these constructions is contained in the next subsection.

\begin{lemma} \label{lem:sr1}
Let $1 \leq l \leq k+1$.
Let $(\pat{P},Y$)  be the complete vertex $Y$-fixing of a complete pattern $\pat{S}$ with $l$ vertices and, if $l \leq k-1$, freely add to $\pat{P}$ an additional fixing face $[S,Y']$ (and its subfaces) for some $Y' \subseteq Y$ of size $k-l$.

If each complete pattern with $l-1$ vertices is  strongly realizable,
then $(\pat{P},Y)$ is f-realizable.
\end{lemma}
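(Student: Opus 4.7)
The plan is to apply a Baker-Pixley-style argument (Theorem~\ref{thm:localBP}, or Theorem~\ref{thm:BPold} in the varietal setting) to a subuniverse that encodes realizations of $\pat P$. Let $W'$ denote the set of all fixing vertices of $\pat P$, and set
\[
T = \{\gamma: W' \to A \mid \gamma \text{ is consistent with the labeling and extends to a realization of $\pat P$ of quality $d$}\}.
\]
By Lemma~\ref{lem:real_subalg} (applied to the natural enlargement of $\pat P$ that encodes the quality-$d$ requirement), $T$ is a subuniverse of $\prod_{w \in W'} \alg A_{v(w)}$. The target tuple $\vc a$ with $\vc a(w) = \phi(v(w))$ lies in $T$ precisely when $\phi$ extends to a quality-$d$ realization of $\pat P$, which is exactly the conclusion we need for f-realizability.

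Using the $(k+2)$-ary (local) near unanimity operations, Theorem~\ref{thm:localBP} reduces membership $\vc a \in T$ to checking $\proj_U(\vc a) \in \proj_U(T)$ for every $(k+1)$-element subset $U \subseteq W'$; the small cases where $|W'| < k+2$ (occurring only for very small $k$, $l$, $|Y|$) can be dispatched directly. So fix such a $U$ and pick any inner vertex $s_{j_0}$. Form the sub-pattern $\pat Q$ of $\pat P$ obtained by deleting $s_{j_0}$ together with all fixing vertices in its star. The inner vertices of $\pat Q$ form a complete pattern $\pat S'$ on $l-1$ vertices, and the inherited fixing of $\pat Q$ -- namely the faces $[\{s_i\},Z]$ for $i \neq j_0$ together with, if present, the residue $[S \setminus \{s_{j_0}\}, Y']$ of the additional face -- is contained in the complete $Y$-fixing of $\pat S'$ (since each such face has total size at most $k$). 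By the induction hypothesis (strong realizability of $\pat S'$), this complete $Y$-fixing is f-realizable, so $\phi$ of sufficiently high quality extends to a realization $\beta$ of $\pat Q$. Crucially, $\beta$ already assigns the correct $\phi$-value to every fixing vertex of $\pat P$ lying outside the star of $s_{j_0}$, thereby handling the members of $U$ that sit there.

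What remains is to assign a value $r_{j_0}$ at $s_{j_0}$ and to fill in the fixing vertices of its star so that the members of $U \cap \mathrm{star}(s_{j_0})$ also receive their $\phi$-values. The value $r_{j_0}$ must satisfy the fixing faces $[\{s_{j_0}\},Z]$ (with $\phi$-values at the $U$-vertices), the pair faces with each $\beta(s_i)$, and, if present, the additional face $[S,Y']$ (at which $\beta$ already supplies the correct $\phi$-values on the $Y'$-fixing vertices). Using the high qualities of both $\phi$ and $\beta$, an iterated application of Lemma~\ref{lem:face-to-tree} furnishes such an $r_{j_0}$; the remaining fixing vertices of $\mathrm{star}(s_{j_0})$ outside $U$ are then filled in by a single further consistency extension. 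This yields a realization of $\pat P$ of quality $d$ agreeing with $\vc a$ on $U$, completing the Baker-Pixley check.

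The main obstacle, in my view, is the quality bookkeeping. For Baker-Pixley to deliver a single function $z_{(\pat P,Y)}(d)$ independent of $U$ and of $\phi$, the quality $d' = z_{(\pat P,Y)}(d)$ demanded of $\phi$ must be chosen large enough \emph{in advance} that, for every choice of $U$ and the consequent $s_{j_0}$, the induction hypothesis returns $\beta$ with enough residual quality, Lemma~\ref{lem:face-to-tree} supplies a valid $r_{j_0}$ that \emph{simultaneously} satisfies all face-wise constraints around $s_{j_0}$ (not merely each face in isolation), and the final realization achieves the target quality $d$. Verifying that these nested quality reductions compose into a bound depending only on $\pat P$ and $d$, and in particular arguing the simultaneous existence of $r_{j_0}$ by realizing all faces through $s_{j_0}$ inside a single $k$-tree extension of the data, is the technical heart of the argument.
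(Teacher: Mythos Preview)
Your overall framework—encoding quality-$d$ realizations as a subuniverse $T$ indexed by the fixing vertices and invoking Baker--Pixley to reduce membership of $\vc a$ to $(k+1)$-element windows $U$—is exactly the paper's approach. The gap is in how you handle a fixed window. You write ``pick any inner vertex $s_{j_0}$'', but this choice cannot be arbitrary. After removing $s_{j_0}$ and using the inductive hypothesis to realize the remaining $(l-1)$-vertex pattern by some $\beta$, you must produce a single value $r_{j_0}$ that is simultaneously compatible with all the $\beta(s_i)$, $i\neq j_0$, \emph{and} with the prescribed $\phi$-value at every vertex of $U$ lying in $C_0\cup C_{j_0}$ (the fixing vertices incident to $s_{j_0}$ and to the additional face). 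The data from which $r_{j_0}$ must be obtained therefore has size $(l-1)+|U\cap(C_0\cup C_{j_0})|$; whenever the second summand exceeds $k-l+1$ this is larger than $k$, and no application of Lemma~\ref{lem:face-to-tree}—nor any $k$-tree argument built from the separately obtained $\phi$ and $\beta$—can guarantee such an $r_{j_0}$. Concretely, take $k=l=2$ and let $U$ contain both fixing vertices attached to $s_1$: you would then need $r_1$ adjacent to each of $\beta(s_2)$, $\phi(y_1)$, $\phi(y_2)$ in three different binary constraint relations, and nothing in your setup forces such a common neighbour to exist. This is a genuine combinatorial obstruction, not a quality-bookkeeping issue as you diagnose.

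The paper resolves it by choosing the removed vertex with care. Writing $D_i=U\cap C_i$, a short count shows that for $l\le k$ some $i\ge 1$ always satisfies $|D_0\cup D_i|\le k-l+1$, and \emph{that} $s_i$ is the one to delete; then $F=(S\setminus\{s_i\})\cup D_0\cup D_i$ has size at most $k$ and a single use of Lemma~\ref{lem:face-to-tree} produces the required value at $s_i$. For $l=k+1$ the count can fail (each $D_i$ may be a singleton), and a second pigeonhole is needed: two of the singletons carry the same fixing label, so one removes both of the corresponding inner vertices and uses strong realizability of complete $2$-vertex patterns to fill them back in. Your plan contains neither the counting choice nor any treatment of the $l=k+1$ case.
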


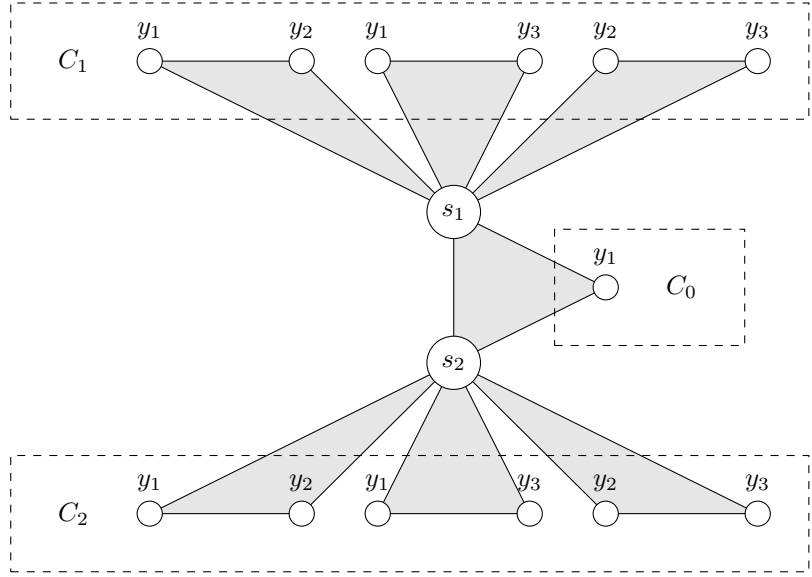
\begin{figure} 
\centering
\begin{tikzpicture}
    
    \filldraw[fill=gray!20, draw=black](1,0) -- (3,0) -- (5,-2) -- cycle;
    \filldraw[fill=gray!20, draw=black](4,0) -- (6,0) -- (5,-2) -- cycle;
    \filldraw[fill=gray!20, draw=black](7,0) -- (9,0) -- (5,-2) -- cycle;
    \filldraw[fill=gray!20, draw=black](5,-4) -- (7,-3) -- (5,-2) -- cycle;
    \filldraw[fill=gray!20, draw=black](1,-6) -- (3,-6) -- (5,-4) -- cycle;
    \filldraw[fill=gray!20, draw=black](4,-6) -- (6,-6) -- (5,-4) -- cycle;
    \filldraw[fill=gray!20, draw=black](7,-6) -- (9,-6) -- (5,-4) -- cycle;

    \node[draw=none] (1) at (0,0) {$C_1$};
    \node[shape=circle, draw=black, fill=white, label={$y_1$}] (2) at (1,0) {};
    \node[shape=circle, draw=black, fill=white, label={$y_2$}] (3) at (3,0) {};
    \node[shape=circle, draw=black, fill=white, label={$y_1$}] (4) at (4,0) {};
    \node[shape=circle, draw=black, fill=white, label={$y_3$}] (5) at (6,0) {};
    \node[shape=circle, draw=black, fill=white, label={$y_2$}] (6) at (7,0) {};
    \node[shape=circle, draw=black, fill=white, label={$y_3$}] (7) at (9,0) {};
    
    \node[fit=(1)(7), draw, dashed, inner sep=5mm] {};
    
    \node[shape=circle, draw=black, fill=white] (8) at (5,-2) {$s_1$};
    
    \node[shape=circle, draw=black, fill=white, label={$y_1$}] (9) at (7,-3) {};
    \node[draw=none] (10) at (8,-3) {$C_0$};
    \node[fit=(9)(10), draw, dashed, inner sep=5mm] {};
    
    \node[shape=circle, draw=black, fill=white] (11) at (5,-4) {$s_2$};

    \node[draw=none] (12) at (0,-6) {$C_2$};
    \node[shape=circle, draw=black, fill=white, label={$y_1$}] (13) at (1,-6) {};
    \node[shape=circle, draw=black, fill=white, label={$y_2$}] (14) at (3,-6) {};
    \node[shape=circle, draw=black, fill=white, label={$y_1$}] (15) at (4,-6) {};
    \node[shape=circle, draw=black, fill=white, label={$y_3$}] (16) at (6,-6) {};
    \node[shape=circle, draw=black, fill=white, label={$y_2$}] (17) at (7,-6) {};
    \node[shape=circle, draw=black, fill=white, label={$y_3$}] (18) at (9,-6) {};
    
    \node[fit=(12)(18), draw, dashed, inner sep=5mm] {};
    
    %\node[draw=black, dashed, fit=(12) (18)](FIt1) {};

\end{tikzpicture}

\caption{Case $k=3$, $l=2$ in the proof of Lemma~\ref{lem:sr1}.} \label{fig:Lemma_sr1}

\end{figure}

\begin{proof}
The case $l=1$ follows directly from Lemma~\ref{lem:face-to-tree} with the choice $d' = d+1$ and we henceforth assume $l>1$. 

Fix an arbitrary $d$. We need to choose $d'$ large enough so that the applications of the assumptions or Lemma~\ref{lem:face-to-tree}, which will be used in the proof, do not decrease the quality of our assignments below $d$. Specifically, we first choose $d''$ so that $d'' \geq d+2$ and, in case that $l=k+1$, also $d'' \geq z_{(\pat{Q},Z)}(d+1)$ for each complete fixing $(\pat{Q},Z)$ of a complete pattern with 2 vertices (note that $l = k+1 \geq 3$ in this case); and then choose $d'$ so that $d' \geq z_{(\pat{Q},Z)}(d'')$ for each complete fixing $(\pat{Q},Z)$ of a complete pattern with $l-1$ vertices (we will actually only use $(\pat{Q},Z)$ equal to $(\pat{P},Y)$ take away one inner vertex).

Denote $S = \{s_1, \dots, s_l\}$ the set of inner vertices of $(\pat{P},Y)$, $C_i$ (where $i \in [l]$) the set of fixing vertices coming from the vertex-fixing faces $[\{s_i\},\dots]$, and $C_0$ the set of fixing vertices coming from the fixing face  $[S,Y']$ (which is empty if $l \geq k$), see Figure~\ref{fig:Lemma_sr1}. Let $C = C_0 \cup C_1 \cup \dots \cup C_l$. Finally, let $\phi: Y \to A$ be an evaluation of quality $d'$.

We consider the set $T$ of restrictions of quality-$d$ realizations of $\pat{P}$ to the set $C$. Note that this set is a subuniverse of the product of the corresponding $\alg A_x$'s by Lemma~\ref{lem:real_subalg}.
\[
T = \{ \restr{\beta}{C}:
     \beta \mbox{ satisfies $\pat{P}$ with quality $d$}\} \leq \prod_{c \in C} \alg{A}_{v(c)}
\]
We need to prove that the tuple $\vc{a}$ defined by
$\vc{a}(c) = \phi(v(c))$ for all $c \in C$ is in $T$.
By the Baker-Pixley Theorem (Theorem~\ref{thm:BPold} when proving Theorem~\ref{thm:mainresult} and Theorem~\ref{thm:localBP} when proving Theorem~\ref{thm:sensfull}) it is enough to show that for any $(k+1)$-element set of coordinates $D$, the relation $T$ contains a tuple $\vc{b}$ that agrees with $\vc{a}$ on this set. This is now our aim. 

Denote $D_i = C_i \cap D$ and assume
 that there exists $i \geq 1$ such that $|D_0 \cup D_i| \leq k-l+1$. 
In this case we find a suitable tuple $\vc{b}$ in three steps as follows.
First, by the choice of $d'$,  we can extend $\phi$ to an assignment $\gamma: P \setminus \{s_i\} \to A$  that satisfies every $k$-element subset of $P \setminus \{s_i\}$ with quality $d''$, and set $\beta(p) = \gamma(p)$ for each $p \in P \setminus (\{s_i\} \cup C_0 \cup C_i)$.
Second, set $\beta(p) = \phi(v(p))$ for each $p \in D_0 \cup D_i$, let $F = (S \setminus \{s_i\}) \cup D_0 \cup D_i$, and note that $F$ has size at most $(l-1)+(k-l+1) = k$ and that $\beta$ satisfies $F$ with quality $d''$. Therefore, by Lemma~\ref{lem:face-to-tree}, $\restr{\beta}{F}$ can be extended to $F \cup \{s_i\}$ so that $\beta$ satisfies each at most $k$-element subset of $F \cup \{s_i\}$ with quality $d''-1 \geq d+1$. %
Third, for each face $E$ of $\pat{P}$ where $\beta$ is not yet fully defined we again use Lemma~\ref{lem:face-to-tree} and extend $\restr{\beta}{E \cap \dom(\beta)}$ to $E$ so that $\beta$ satisfies $E$ with quality $d$.
By construction, $\beta(c) = \phi(v(c))$ for every $c \in D$, and $\beta$ satisfies every face of $\pat{P}$ with quality $d$: the  fixing faces within $P \setminus (C_0 \cup C_i)$ because of the first step, the face $S$ because of the second step, and the remaining fixing faces (within $S \cup C_0 \cup C_i$) because of the third step. Therefore $\vc{b} = \restr{\beta}{C}$ is from $T$ and agrees with $\vc{a}$ on $D$.

Let $i \geq 1$ be such that $|D_i|$ is minimal. If $l \leq k$, then simple arithmetic gives us that $|D_0 \cup D_i| \leq k-l+1$ (so we are done in this case). Indeed, otherwise $|D_i| \geq k-l+2 - |D_0|$ and
$|D| \geq |D_0| + l|D_i| \geq |D_0| + l(k-l+2-|D_0|)$. For the maximum size of $D_0$, that is, $|D_0| = |C_0| =  k-l$, the right hand side of the last inequality is equal to $k+l$, and if $|D_0|$ decreases it gets bigger. Then $|D| \geq k+l > k+1$, a contradiction.

The remaining case is $l = k+1$ (in particular, $C_0 = D_0 = \emptyset$) and $|D_i| > k-l+1 = 0$ for each $1 \leq i \leq k+1$. Then, in fact, $D_i = \{d_i\}$ for each $i \geq 1$ (as $|D| \leq k+1$). By the pigeonhole principle, there are $i \neq j$ such that $v(d_i) = v(d_j)$. In this case we modify the three step procedure for finding $\vc{b}$ as follows. In the first step we define $\beta$ only on $P \setminus (\{s_i,s_j\} \cup C_i \cup C_{j})$,  in the second step we set $\beta(d_i) = \beta(d_j) = \phi(v(d_i))$, define $F = (S \setminus \{s_i,s_j\}) \cup D_i \cup D_{j}$, and instead of Lemma~\ref{lem:face-to-tree} we use the choice of $d''$ (coming from complete fixings of 2-element complete patterns) to extend $\restr{\beta}{F}$ to $F \cup \{s_i,s_j\}$. 
\end{proof}

The next lemma provides the base case for the second construction. 
We remark that having a near unanimity term of arity $2k$, when proving Theorem~\ref{thm:mainresult}, or local near unanimity term operations of arity $2k$, when proving Theorem~\ref{thm:sensfull},  is sufficient for the proof.

\begin{lemma} \label{lem:sr2}
Let $(\pat{P}_1,Y)$ and $(\pat{P}_2,Y)$ be free fixing patterns with exactly one common vertex $f$, 
which is labeled by $x \not\in Y$ and which is completely fixed in both patterns. For $i \in \{1,2\}$ let $\pat{P}'_i$ be the pattern obtained from $\pat{P}_i$ by removing the fixing vertices and all the vertices labeled $x$ (and all the incident faces).  Let $\pat{Q}$ be the union of $\pat{P}_1$ and $\pat{P}_2.$

If $(\pat{P}_i,Y)$, $i = 1,2$ are f-realizable and $\pat{P}'_i$, $i=1,2$  are strongly realizable, then $(\pat{Q},Y)$ is f-realizable. 
\end{lemma}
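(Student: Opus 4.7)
The approach mirrors that of Lemma~\ref{lem:sr1}: we apply the Baker--Pixley theorem via the (local) $(k+2)$-ary near unanimity term. Fix $d$ and choose $d'$ sufficiently large by composing the $z$-functions guaranteed by f-realizability of $(\pat{P}_i,Y)$, by strong realizability of $\pat{P}'_i$, and by Lemma~\ref{lem:face-to-tree}. Given an evaluation $\phi\colon Y\to A$ of quality $d'$, let $C$ denote the set of all fixing vertices of $\pat{Q}=\pat{P}_1\cup\pat{P}_2$ and consider
\[
T=\{\restr{\beta}{C}\colon\beta\text{ realizes }\pat{Q}\text{ with quality }d\}\leq\prod_{c\in C}\alg{A}_{v(c)},
\]
which is a subuniverse by Lemma~\ref{lem:real_subalg}. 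The aim is to show that the tuple $\vc{a}$ defined by $\vc{a}(c)=\phi(v(c))$ lies in $T$. By Theorem~\ref{thm:BPold} (or Theorem~\ref{thm:localBP} in the setting of Theorem~\ref{thm:sensfull}), this reduces to finding, for each $(k+1)$-element subset $D\subseteq C$, a tuple $\vc{b}\in T$ with $\restr{\vc{b}}{D}=\restr{\vc{a}}{D}$.

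\textbf{Constructing $\vc{b}$ for a given $D$.}
Fix $D$ and set $D_i=D\cap P_i$. Since $|D_1|+|D_2|\leq k+1$ and $k\geq 2$, at least one of the $D_i$---without loss of generality $D_2$---has size at most $k-1$. Choose $W\subseteq Y$ with $v(D_2)\subseteq W$ and $|W|=\min(|Y|,k-1)$, and set $U=\{x\}\cup W$, a set of variables of size at most $k$. I would build $\beta$ in three stages: \emph{(i)} apply f-realizability of $(\pat{P}_1,Y)$ to $\phi$ to obtain a high-quality realization $\alpha_1$ of $\pat{P}_1$ extending $\phi$, and set $a:=\alpha_1(f)$; \emph{(ii)} apply strong realizability of $\pat{P}'_2$ to the complete vertex $U$-fixing of $\pat{P}'_2$, with the evaluation $\phi'\colon U\to A$ defined by $\phi'(x)=a$ and $\phi'(y)=\phi(y)$ for $y\in W$, obtaining a high-quality realization $\alpha'$ of $\pat{P}'_2$; \emph{(iii)} define $\beta$ as $\alpha_1$ on $\pat{P}_1$, the constant value $a$ at every $x$-labeled inner vertex of $\pat{P}_2$, $\alpha'$ at the non-$x$ inner vertices of $\pat{P}_2$, $\phi(v(c))$ at each $c\in D_2$, and then---using freeness of $\pat{P}_2$---extend $\beta$ one maximal fixing face at a time via Lemma~\ref{lem:face-to-tree} at the remaining fixing vertices of $\pat{P}_2$. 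By construction, $\restr{\vc{b}}{D}=\restr{\vc{a}}{D}$: on $D_1$ because $\alpha_1$ extends $\phi$, on $D_2$ by direct assignment.

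\textbf{Main obstacle.}
The technical heart of the proof will be verifying that $\beta$ is a realization of $\pat{Q}$ of quality $d$. Faces inside $\pat{P}_1$ follow from stage~(i). A face of $\pat{P}_2$ decomposes into non-$x$ inner vertices, $x$-labeled inner vertices (all sent to $a$), and fixing vertices. Faces whose labels lie in $U$, together with non-$x$ inner vertices of $\pat{P}_2$, are handled by the $U$-fixing built into $\alpha'$; here one uses that $R_{u_1,\dotsc,u_l}$ for a tuple with repeated entries reduces to the relation indexed by the underlying set, so arbitrarily many $x$-labeled vertices all set to $a$ cause no additional trouble. Faces involving a fixing vertex labeled by some $y\in Y\setminus W$ are dealt with one maximal fixing face at a time, using Lemma~\ref{lem:face-to-tree} and freeness of $\pat{P}_2$, provided that enough quality margin has been built into $d'$. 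The bookkeeping of the cascade of quality parameters---in particular ensuring that $\phi'$ has sufficient quality to invoke strong realizability of $\pat{P}'_2$ in stage~(ii), exploiting the quality of $\alpha_1$ at $f$---is routine but tedious, and is the main place where the quantitative choice of $d'$ must be made explicit.
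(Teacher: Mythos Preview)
Your approach is correct and is a genuine alternative to the paper's proof. The paper does \emph{not} invoke Baker--Pixley here: instead it builds $2k$ explicit realizations $\alpha_i^j$ (for $i\in\{1,2\}$ and $j\in\{1,\dots,k\}$), where $\alpha_i^j$ extends $\phi$ to a quality-$d''$ realization of $\pat{P}_i$, then uses the evaluation $\{x\}\cup Y\setminus\{y_j\}\to A$ to realize the complete $(\{x\}\cup Y\setminus\{y_j\})$-fixing of $\pat{P}'_{3-i}$, and finally patches the remaining $y_j$-labeled fixing vertices of $\pat{P}_{3-i}$ via Lemma~\ref{lem:face-to-tree}. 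Each $\alpha_i^j$ disagrees with $\phi$ only on fixing vertices of $\pat{P}_{3-i}$ labeled $y_j$, so every fixing coordinate is wrong in at most one of the $2k$ realizations, and a single application of a $2k$-ary (local) near unanimity operation finishes. Your Baker--Pixley argument achieves the same end with the $(k+2)$-ary operation and is pleasantly uniform with the proof of Lemma~\ref{lem:sr1}; the paper's route is more direct but needs the larger arity $2k$ (which is of course available).

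One correction: in step~(ii) you must use the \emph{complete $U$-fixing} of $\pat{P}'_2$, not the complete \emph{vertex} $U$-fixing. Strong realizability gives you the former, and you need it: a face $[F,Y']$ of $\pat{P}_2$ with $F=F_x\cup F'$ and $|F'|>1$ is controlled only through the fixing face $[F',\{x\}\cup Y']$ (or $[F',Y']$ when $F_x=\emptyset$) of the complete $U$-fixing; the vertex fixing contains no such face. With that fix your stage~(iii) verification goes through: the quality of $\phi'$ at $U=\{x\}\cup W$ comes from the face $[\{f\},W]$ in $\pat{P}_1$ (available since $f$ is completely fixed and $|W|\le k-1$), and freeness of $\pat{P}_2$ lets you treat maximal fixing faces independently when extending via Lemma~\ref{lem:face-to-tree}.
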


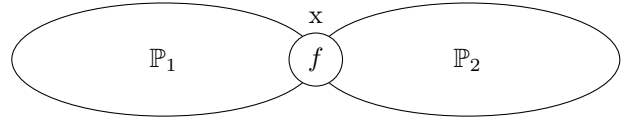
\begin{figure}
    \centering
    \begin{tikzpicture}
    
        \draw (-2,0) ellipse (2 and 0.75);
        \draw (2,0) ellipse (2 and 0.75);
    
        \node[draw=none] (P1) at (-2,0) {$\mathbb{P}_1$};
        \node[shape=circle, draw=black, fill=white,label={x}] (f) at (0,0) {$f$};
        \node[draw=none] (P2) at (2,0) {$\mathbb{P}_2$};
         
    \end{tikzpicture}
    \caption{Pattern $\mathbb{Q}$ in Lemma~\ref{lem:sr2}.}
    \label{fig:Lemma32}
\end{figure}

\begin{proof}
Fix $d$, choose $d''$ so that each complete fixing $(\pat{S},Z)$ of $\pat{P}'_1$ or $\pat{P}'_2$, which we will use in the proof, satisfies $d'' \geq z_{(\pat{S},Z)}(d+1)$, and  choose $d' \geq z_{(\pat{P}_i,Y)}(d'')$ for $i=1,2$. 

Let $\phi: Y \to A$ be an evaluation of quality $d'$ and denote $Y = \{y_1, \dots, y_k\}$ (where variables can possibly repeat).
For each $i \in \{1,2\}$ and $j \in \{1, \dots, k\}$ we construct a realization $\alpha_i^j: Q \to A$ of $\pat{Q}$ of quality $d$. The sought after quality-$d$ extension $\alpha$ of $\phi$ will be obtained by applying a $2k$-ary (local) near unanimity operation to these realizations. 
In order to construct $\alpha_i^j$  we first extend $\phi$ to a realization $\beta$ of $\pat{P}_i$ of quality $d''$ and define $\alpha_i^j(p) = \beta(p)$ for each $p \in \dom(\beta) = P_i$.
Next, we extend the evaluation $\rho: \{x\} \cup Y \setminus \{y_j\} \to A$, defined by $\rho(x) = \beta(f)$ and $\rho(y)=\phi(y)$ else, to a quality-$(d+1)$ realization $\gamma$ of the complete $(\{x\} \cup Y \setminus \{y_j\})$-fixing of $\pat{P}'_{3-i}$ and define $\alpha_i^j(c) = \gamma(c)$ for each $c \in \dom(\gamma)$ (noting that $\rho$ has quality $d''$ since $\beta$ does and $f$ is completely fixed in $\pat{P}_i$).
Finally, for each face $F$ of $\pat{Q}$ where $\alpha_i^j$ is not yet fully defined (this concerns fixing vertices of $\pat{P}_{3-i}$ labeled $y_j$) we use Lemma~\ref{lem:face-to-tree} and extend $\alpha_i^j$ so that it satisfies $F$ with quality $d$. Now $\alpha_i^j$ satisfies all the faces of $\pat{Q}$ with quality $d$ and agrees with $\phi$ on all of the fixing variables, except those from $\pat{P}_{3-i}$ labeled $y_j$. 
It follows that applying a $2k$-ary term operation to the $\alpha_i^j$ that satisfies the near unanimity condition for the set of components of the $\alpha_i^j$ gives an assignment of quality $d$ (by Lemma~\ref{lem:real_subalg}) that extends $\phi$, as required. 
\end{proof}

\begin{corollary} \label{cor:pppp}
Let $(\pat{P},Y)$ be a fixing pattern with two vertices $f_1 \neq f_2$ both labeled $x$ and both completely fixed, and let $n$ be a positive integer. Let $(\pat{Q},Y)$ be the fixing pattern obtained from the disjoint union of $n$ copies of $\pat{P}$ by identifying, for each $i \in \{1, \dots, n-1\}$, the vertex $f_2$ in the $i$-th copy with the vertex $f_1$ in the $(i+1)$-st copy. Let $\pat{P}'$ be the pattern obtained from $\pat{P}$ by removing the fixing vertices and all the vertices labeled $x$.

If $(\pat{P},Y)$ is f-realizable and $\pat{P}'$ is strongly realizable, then $(\pat{Q},Y)$ is f-realizable. %\mnote{$\pat{P}$ also needs to be $f$-realizable?} \lnote{Yes}
\end{corollary}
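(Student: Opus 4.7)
The natural plan is a straightforward induction on $n$, using Lemma~\ref{lem:sr2} as the single-step gluing tool. The base case $n=1$ is trivial since $(\pat{Q}, Y) = (\pat{P}, Y)$, which is f-realizable by hypothesis. For the inductive step, let $\pat{Q}_m$ denote the chain of $m$ copies of $\pat{P}$ glued as described. Assuming $(\pat{Q}_{n-1}, Y)$ is f-realizable, I would apply Lemma~\ref{lem:sr2} with the free fixing patterns $\pat{P}_1 := \pat{Q}_{n-1}$ and $\pat{P}_2 := \pat{P}^{(n)}$, the $n$-th copy of $\pat{P}$. These patterns share exactly one vertex, namely the identified pair $f_2^{(n-1)} = f_1^{(n)}$, which carries the label $x$.

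To invoke Lemma~\ref{lem:sr2} I must verify that $x \notin Y$, that the common vertex is completely fixed in each of the two patterns, and that the patterns $\pat{P}'_1, \pat{P}'_2$ obtained by removing all fixing vertices and all $x$-labeled vertices are strongly realizable. The first point is automatic: since $f_1, f_2$ are declared completely fixed in $\pat{P}$ they are inner vertices, so $x \notin Y$. Complete fixedness of the gluing vertex in $\pat{P}_2$ comes directly from the hypothesis on $\pat{P}$, while in $\pat{P}_1 = \pat{Q}_{n-1}$ the vertex inherits the fixing faces around $f_2$ from the $(n-1)$-st copy of $\pat{P}$; these faces survive gluing since the fixing was free and no identifications touch the fixing vertices.

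The key structural observation is that $\pat{P}'_1$ is a disjoint union of $n-1$ copies of $\pat{P}'$. Indeed, the identifications defining $\pat{Q}_{n-1}$ occur exclusively at $x$-labeled vertices, so once these are removed the chain disintegrates into its constituent copies of $\pat{P}'$. Then I would use the fact, which needs a short verification, that a disjoint union of strongly realizable patterns is strongly realizable: any complete $Z$-fixing of the union decomposes as the disjoint union of complete $Z$-fixings of the components, and for any $d$ one takes the maximum $d'$ of the component quality bounds $z_{(\pat{S}_i, Z)}(d)$; an evaluation $\phi : Z \to A$ of quality $d'$ extends to a quality-$d$ realization of each component, and these realizations combine into one without conflict because the components share no vertices and the extensions agree with $\phi$ on the (disjoint) fixing vertices.

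The main obstacle, such as it is, is really just bookkeeping: checking at each stage that the gluing vertex remains completely fixed in $\pat{Q}_{n-1}$, and that the quality bounds supplied by Lemma~\ref{lem:sr2} compose through the induction to yield a single uniform function $z_{(\pat{Q}, Y)}$ depending only on $n$ and on the shape of $\pat{P}$. No further ideas beyond those already present in Lemma~\ref{lem:sr2} should be needed, since the only new pattern manipulation in this corollary is the disjoint-union observation above.
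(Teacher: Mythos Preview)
Your proposal is correct and follows essentially the same approach as the paper's proof, which is a one-sentence induction invoking Lemma~\ref{lem:sr2} together with the observation that removing the $x$-labeled and fixing vertices from the chain leaves a disjoint union of copies of $\pat{P}'$, hence a strongly realizable pattern. You have simply unpacked the details (complete fixedness of the gluing vertex, the disjoint-union argument for strong realizability) that the paper leaves implicit.
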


\begin{figure}[ht]
    \centering
    \begin{tikzpicture}
    
        \draw (2,0) ellipse (2 and 0.75);
        \draw (6,0) ellipse (2 and 0.75);
        \draw (10,0) ellipse (2 and 0.75);
    
        \node[shape=circle, draw=black, fill=white,label={x}] (1) at (0,0) {$f_1$};
        \node[draw=none] (P1) at (2,0) {$\mathbb{P}$};
        \node[shape=circle, draw=black, fill=white,label={x}] (2) at (4,0) {$f_1'/f_2$};
        \node[draw=none] (P2) at (6,0) {$\mathbb{P}$};
        \node[shape=circle, draw=black, fill=white,label={x}] (3) at (8,0) {$f_1''/f_2'$};
        \node[draw=none] (P3) at (10,0) {$\mathbb{P}$};
        \node[shape=circle, draw=black, fill=white,label={x}] (4) at (12,0) {$f_2''$};
         
    \end{tikzpicture}
    \caption{Pattern $\mathbb{Q}$ in Corollary \ref{cor:pppp}.}
    \label{fig:Corollary_pppp}
\end{figure}
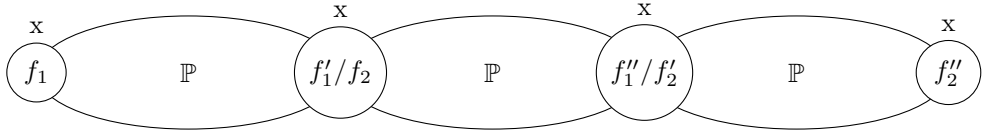

\begin{proof}
The proof follows by induction from Lemma~\ref{lem:sr2}, noting that in each step if we remove vertices labeled $x$ and fixing vertices from $\pat{Q}$, we get a pattern which is a disjoint union of strongly realizable patterns and is thus strongly realizable. 
\end{proof}

The following lemma provides the second construction. The proof uses Corollary~\ref{cor:pppp} (which requires a near unanimity term of arity $2k$ or local near unanimity term operations of arity $2k$) but the rest of the reasoning is based on the loop lemma stated in Theorem~\ref{thm:olsak_loop}, for which a near unanimity term (or local near unanimity term operations) of any arity is sufficient.

\begin{lemma} \label{lem:sr3}
Let $(\pat{P}_1,Y)$ and $(\pat{P}_2,Y)$ be fixing patterns with a common inner face $E$ and no other common vertices, such that both $\pat{P}_1$ and $\pat{P}_2$ are $k$-trees. For $i = 1,2$ let $f_i$  be a completely fixed inner vertex of $\pat{P}_i$ with label $x$ such that $E \cup \{f_i\}$ is a face of $\pat{P}_i$.
Let $\pat{Q}$ be the pattern obtained from the union of $\pat{P}_1$ and $\pat{P}_2$ by identifying vertices $f_1$ and $f_2$, and let $\pat{Q}'$ be the pattern obtained from $\pat{Q}$ (or $\pat{P}_1 \cup \pat{P}_2$) by removing the fixing vertices and all the vertices labeled $x$.

If $(\pat{P}_1 \cup \pat{P}_2,Y)$ is f-realizable and $\pat{Q}'$ is strongly realizable, then $(\pat{Q},Y)$ is f-realizable.
\end{lemma}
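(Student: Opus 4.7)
The plan is to mirror the diamond case from Section~\ref{sect:sensitivity}: a realization of $\pat{Q}$ extending $\phi$ corresponds to a realization of $\pat{P}_1 \cup \pat{P}_2$ extending $\phi$ in which $f_1$ and $f_2$ receive the same value, so I will apply Theorem~\ref{thm:loop_inv} to produce such a ``loop''. Fix the target quality $d$ and sufficiently large $d_T \leq d_S$, and for an evaluation $\phi: Y \to A$ of an appropriately large quality $d'$, with $x := v(f_1) = v(f_2)$, set
\begin{align*}
S &= \{(\beta(f_1), \beta(f_2)) \mid \beta \text{ is a quality-$d_S$ realization of $\pat{P}_1 \cup \pat{P}_2$ extending $\phi$}\}, \\
T &= \{(\beta(f_1), \beta(f_2)) \mid \beta \text{ is a quality-$d_T$ realization of $\pat{P}_1 \cup \pat{P}_2$}\},
\end{align*}
both subuniverses of $\alg A_x^2$ by Lemma~\ref{lem:real_subalg}. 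Applying the $(k+2)$-ary near unanimity term coordinatewise to realizations shows that $S$ locally $(k+2)$-absorbs $T$, since at every fixing vertex $k+1$ of the $k+2$ input realizations take the prescribed value $\phi(y)$, which the near unanimity property preserves.

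The directed walk required by Theorem~\ref{thm:loop_inv} will come from Corollary~\ref{cor:pppp} applied to $(\pat{P}_1 \cup \pat{P}_2, Y)$: its f-realizability is given, and deleting fixing and $x$-labeled vertices from $\pat{P}_1 \cup \pat{P}_2$ produces the same pattern as deleting them from $\pat{Q}$, namely $\pat{Q}'$, which is strongly realizable by hypothesis. A chain of $k+1$ copies of $\pat{P}_1 \cup \pat{P}_2$, glued by identifying $f_2$ of copy $i$ with $f_1$ of copy $i+1$, is therefore f-realizable, and any of its f-realizations projects onto a directed walk of length $k+1$ in $S$.

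The main obstacle is verifying the two absorption conditions of Theorem~\ref{thm:loop_inv}; the idea is to leverage $S$ absorbs $T$ via the inclusions $=_A \subseteq T$ and $S^{-1} \subseteq T$, yielding both absorptions of $S$ with $l=1$. For the first inclusion I will take $\alg A := \langle \pi_1(S) \cup \pi_2(S) \rangle \leq \alg A_x$, so that $S \leq \alg A^2$; since $\{b : (b,b) \in T\}$ is a subuniverse of $\alg A_x$, it suffices to verify $(b,b) \in T$ for $b \in \pi_1(S) \cup \pi_2(S)$. Given $b = \beta(f_1)$ with $\beta \in S$, keep $\beta$ on $\pat{P}_1$ and extend the assignment $\beta|_E \cup \{(f_2,b)\}$ to a realization of $\pat{P}_2$ via Lemma~\ref{lem:face-to-tree}; the assignment is valid on the face $E \cup \{f_2\}$ of the $k$-tree $\pat{P}_2$ because the label coincidence $v(f_1)=v(f_2)=x$ forces $R_{v(E \cup \{f_1\})} = R_{v(E \cup \{f_2\})}$, so $\beta|_E \cup \{(f_2,b)\}$ coincides (after relabeling) with $\beta|_{E \cup \{f_1\}}$. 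The case $b \in \pi_2(S)$ is symmetric. For $S^{-1} \subseteq T$, given $\beta$ witnessing $(a,b) \in S$, define $\beta'$ on $E \cup \{f_1, f_2\}$ by swapping the values at $f_1$ and $f_2$ while retaining $\beta|_E$; by the same label coincidence, $\beta'|_{E \cup \{f_i\}}$ equals $\beta|_{E \cup \{f_{3-i}\}}$ up to relabeling, so Lemma~\ref{lem:face-to-tree} extends $\beta'$ independently to realizations of $\pat{P}_1$ and $\pat{P}_2$, producing $(b,a) \in T$.

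Theorem~\ref{thm:loop_inv} then supplies a loop $(a,a) \in S$, which yields the desired realization of $\pat{Q}$ extending $\phi$ of quality at least $d$. The parameters $d_T$ and $d_S$ are chosen to accommodate the finite depth losses incurred by the applications of Lemma~\ref{lem:face-to-tree} above, and $d'$ is taken to be the maximum of $z_{(\pat{P}_1 \cup \pat{P}_2, Y)}(d_S)$ and the analogous bound from Corollary~\ref{cor:pppp} for the $(k+1)$-copy chain.
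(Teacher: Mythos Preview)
Your overall strategy matches the paper's: apply Theorem~\ref{thm:loop_inv} to the projection onto $(f_1,f_2)$ of the ``$\phi$-extending realizations'' relation, obtain the walk from Corollary~\ref{cor:pppp}, and derive the two required absorptions from inclusions into a larger relation $T$. Your identification of where each hypothesis is used (in particular that $\pat{Q}'$ is what Corollary~\ref{cor:pppp} needs) is correct.

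However, there is a genuine gap in the absorption step. You take $S$ to be projections of quality-$d_S$ realizations extending $\phi$, and $T$ to be projections of quality-$d_T$ realizations with $d_T \leq d_S$. When you apply the near unanimity term to $k{+}1$ realizations witnessing tuples of $S$ together with one witnessing a tuple of $T$, the result does extend $\phi$ as you say, but it only has quality $\min(d_S,d_T)=d_T$: the set of quality-$d_S$ realizations is a subuniverse, but the single $T$-witness need not lie in it. So the output pair lands in the larger set $S'=\{(\beta(f_1),\beta(f_2)) : \beta$ is a quality-$d_T$ realization extending $\phi\}$, not in $S$; you have not shown that $S$ absorbs $T$. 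Setting $d_T=d_S$ fixes the absorption but then breaks your inclusion arguments, since the extensions through Lemma~\ref{lem:face-to-tree} strictly lose quality; and switching to $R=S'$ in the loop lemma just moves the problem to verifying $S'^{-1}\subseteq T$, where the same quality drop reappears. The two requirements on $d_T$ are in genuine conflict, and your final sentence about ``choosing $d_T$ and $d_S$ to accommodate the finite depth losses'' cannot simultaneously satisfy both.

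The paper breaks this circularity by a more refined definition of $T$: rather than a single lower quality bound, $T$ consists of quality-$d$ realizations of $\pat{P}_1\cup\pat{P}_2$ that \emph{additionally} satisfy the specific faces $\{f_i\}$ with quality $d''{+}1$ and $E\cup\{f_i\}$ with quality $d''$, and $S$ is simply the subset of $T$ extending $\phi$. Now $S\subseteq T$ at the same base quality, so absorption is immediate; and the extra face-specific quality conditions are exactly what survive the swap at $f_1,f_2$ and feed into Lemma~\ref{lem:face-to-tree} to rebuild a realization back at base quality $d$. The ambient algebra is taken to be $B=\{a\in A_x : x\mapsto a$ has quality $d''{+}1\}$, which automatically gives $\proj_{f_1,f_2}(S)\leq \alg B^2$. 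This layered definition of $T$ is the idea missing from your proposal.
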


\begin{figure}
    \centering
    \begin{tikzpicture}
        
        % Positions for nodes
        \coordinate (f1) at (0,0);
        \coordinate (f2) at (4,-1);
        \coordinate (blank1) at (1.75,-1.25);
        \coordinate (blank2) at (1,-4);
        
        % Triangles
        % Edges are determined by node position
        \filldraw[fill=gray!20, draw=black] (f1) -- (blank1) -- (blank2) -- cycle;
        \filldraw[fill=gray!20, draw=black] (f2) -- (blank1) -- (blank2) -- cycle;
        
        % Node attributes excluding position
        \node[shape=circle, draw=black,label={x}, fill=white] (f1node) at (f1) {$f_1$};
        \node[shape=circle, draw=black,inner sep=8pt, fill=white] (blank2node) at (blank2) {};
        \node[shape=circle, draw=black,inner sep=8pt, fill=white] (blank1node) at (blank1) {};
        \node[shape=circle, draw=black,label={x}, fill=white] (f2node) at (f2) {$f_2$};
        
        % Dotted edge
        \path [-,line width=1pt, dashed] (f1node) edge[bend left] node [left] {} (f2node);

        %Bezier Curves
        \draw[black] 
            (f2node) 
            .. controls (7,-2) and (7,-4) .. 
            (blank2node);
            
        \draw[black] 
            (f1node) 
            .. controls (-3,0) and (-3,-4) .. 
            (blank2node);
        
        % P labels   
        \node at (-1,-2) {$\mathbb{P}_1$};
        \node at (4,-2.5) {$\mathbb{P}_2$};
        
    \end{tikzpicture}
    \caption{Patterns $\pat{P}_1 \cup \pat{P}_2$ and $\pat{Q}$ in Lemma~\ref{lem:sr3}}
    \label{fig:Lemma33}
\end{figure}

\begin{proof}
Let $r > 2$ be such that, in the case of proving Theorem~\ref{thm:mainresult}, $\var V$ has an $r$-ary near unanimity term, and in the case of proving Theorem~\ref{thm:sensfull}, $\alg A$ has local near unanimity term operations of arity $r$ (so $r = k+2$ works).  Let  $(\pat{Q}^{r-1},Y)$ be the fixing pattern obtained by taking
the disjoint union of $r-1$ copies of $\pat{P}_1 \cup \pat{P}_2$ and identifying the vertex $f_2$ in the $i$-th copy with the vertex $f_1$ in the $(i+1)$-first copy, for each $i \in \{1, \dots, r-2\}$.
The pattern $(\pat{Q}^{r-1},Y)$ is f-realizable by Corollary~\ref{cor:pppp}.

Fix $d$, choose $d''$ using Lemma~\ref{lem:face-to-tree} so that, for both $i \in \{1,2\}$, every quality-$d''$ assignment $\alpha: E \cup \{f_i\} \to A$  extends to a quality-$d$ realization of $\pat{P}_i$, and choose $d' \geq z_{(\pat{Q}^{r-1},Y)}(d''+1)$.

Let $\phi: Y \to A$ be an evaluation of quality $d'$.
Denote by $B$ the set of all elements of $a \in A_x$ such that the evaluation $x \mapsto a$ has quality $d''+1$, denote by $T$ the set of all the quality-$d$ realizations $\beta$  of  $\pat{P}_1 \cup \pat{P}_2$  such that both $\{f_1\}$ and $\{f_2\}$ have quality $d''+1$ and both $E \cup \{f_1\}$ and $E \cup \{f_2\}$ have quality $d''$, and denote by $S \subseteq T$  the set of those $\beta \in T$ that extend $\phi$. 
By a similar argument to that of Lemma~\ref{lem:real_subalg}, both $T$ and $S$ are subuniverses of $\prod_{p \in P_1 \cup P_2} \alg A_{v(p)}$.
Using the near unanimity term of arity $r$ (or local near unanimity term operations of arity $r$)
 $S$ clearly locally $r$-absorbs $T$.
The plan is to apply Theorem~\ref{thm:loop_inv} to the binary relation $\proj_{f_1,f_2} S \subseteq B \times B$. If this binary relation contains a loop, then the corresponding $\alpha \in S$ satisfies $\alpha(f_1)=\alpha(f_2)$ and, therefore, we actually obtain a realization of $\pat{Q}$ of quality $d$, as required.

It remains to verify the assumptions of Theorem~\ref{thm:loop_inv}.
By the choice of $d'$, the pattern $\pat{Q}^{r-1}$ has a quality-$(d''+1)$ realization that extends $\phi$. The images of copies of vertices $f_1$ and $f_2$ in such a realization yield a directed walk in $\proj_{f_1,f_2}(S)$ of length $r-1$. 
 Next, since $S$ locally $r$-absorbs $T$, then $\proj_{f_1,f_2}(S)$ locally $r$-absorbs $\proj_{f_1,f_2}(T)$, so it is enough to verify that the latter relation contains $=_B$ and $\proj_{f_1,f_2}(S)^{-1}$. For the first case, pick $b \in B$ and recall that the assignment $f_1 \mapsto b$ has quality $d''+1$ by the definition of $B$. We extend this assignment (using Lemma~\ref{lem:face-to-tree}) to a quality $d''$-assignment $\alpha: E \cup \{f_1\} \to A$, define $\alpha(f_2) = \alpha(f_1)$, and extend $\alpha$ to a quality-$d$ realization of $\pat{P}_1 \cup \pat{P}_2$. The obtained assignment witnesses $(b,b) \in \proj_{f_1,f_2}(T)$. Finally, to show that $\proj_{f_1,f_2}(T)$ contains $\proj_{f_1,f_2}(S)^{-1}$, consider any $(a,b) \in \proj_{f_1,f_2}(S)^{-1}$. By the definition of $S$, the pattern $\pat{P}_1 \cup \pat{P}_2$ has a realization $\alpha$ such that $\alpha(f_1)=b$, $\alpha(f_2)=a$, and both $E \cup \{f_1\}$ and $E \cup \{f_2\}$ have quality $d''$. We flip the values $\alpha(f_1)$ and $\alpha(f_2)$, restrict $\alpha$ to $E \cup \{f_1,f_2\}$ and extend this assignment using the choice of $d''$ to a  realization of $\pat{P}_1 \cup \pat{P}_2$ of quality $d$, giving us $(a,b) \in \proj_{f_1,f_2}(T)$ and concluding the proof.
\end{proof}

\subsection{Assembly}

Lemma~\ref{lem:sr1} and Lemma~\ref{lem:sr3} enable us to prove that every $k$-tree is strongly realizable. We split the inductive proof of this fact into two lemmata.

\begin{lemma}
Let $1 < l \leq k+1$ and assume, in case that $l>2$, that every $(l-2)$-tree is strongly realizable. Then every complete pattern with $l$ vertices is strongly realizable.
\end{lemma}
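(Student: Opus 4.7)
The plan is to proceed by induction on $l$, with the case $l = 2$ as base case. For $l > 2$ the hypothesis gives strong realizability of every $(l-2)$-tree, which supplies in particular strong realizability of complete patterns with fewer than $l$ vertices (those patterns are depth-$1$ complete $(l-2)$-trees); for $l = 2$ the only ingredient needed is strong realizability of $1$-vertex patterns, which follows easily from Lemma~\ref{lem:face-to-tree}. Fix the complete $l$-vertex pattern $\pat{Q}$ on inner vertices $S = \{s_1, \dots, s_l\}$ and let $\pat{Q}^*$ denote its complete $Y$-fixing; the aim is to show that $\pat{Q}^*$ is f-realizable.

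The starting point is an application of Lemma~\ref{lem:sr1}, whose hypothesis on $(l-1)$-vertex patterns is supplied by the induction hypothesis (or is trivial for $l = 2$). This yields an f-realizable pattern on inner vertices $S$ carrying the complete vertex $Y$-fixing together with, when $l \leq k-1$, a single top-face fixing. The plan is then to augment this pattern to all of $\pat{Q}^*$ by adding missing fixing faces one at a time via the following gluing: given a current f-realizable pattern $\pat{P}$ on inner vertices $S$ and a missing target fixing $[F, Y']$, construct an auxiliary f-realizable pattern $\pat{R}$ that shares only the vertex $s_1$ with $\pat{P}$, has all remaining inner vertices fresh with labels matching the corresponding vertices of $S$, and carries $[F, Y']$ (with the fresh copies identified with $S$ via the labels). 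For $F = S$ the auxiliary $\pat{R}$ is a Lemma~\ref{lem:sr1}-pattern on a fresh copy of $S$ carrying the chosen top-face fixing; for $|F| < l$ it is the complete $Y$-fixing of the complete $|F|$-vertex pattern on a fresh copy of $F$, which is f-realizable by the induction hypothesis. Lemma~\ref{lem:sr2} then gives f-realizability of $\pat{P} \cup \pat{R}$, its hypotheses being met because $s_1$ is completely fixed in both with label outside $Y$, and removing it from either of $\pat{P}$, $\pat{R}$ leaves a complete pattern on fewer than $l$ vertices (strongly realizable by the induction hypothesis, or trivially for $l = 2$). I would then identify the remaining fresh inner vertices of $\pat{R}$ with their counterparts in $S$ one at a time by repeated applications of Lemma~\ref{lem:sr3}, taking the common face $E$ to be the growing set of already-identified vertices.

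The main obstacle will be verifying, at each application of Lemma~\ref{lem:sr3}, that the residual pattern $\pat{Q}'$ (the ambient pattern with its $x$-labeled inner vertex and its fixing vertices stripped away) is strongly realizable. After stripping, that pattern splits into two complete sub-patterns on at most $l-1$ vertices glued along the common face $E$, which is precisely the shape of one branching step in a complete $(l-2)$-tree, so the induction hypothesis applies (while for $l = 2$ both sub-patterns are single vertices, strongly realizable by Lemma~\ref{lem:face-to-tree}). Some bookkeeping is required to handle the duplicated vertex fixings produced at each gluing step, but these simply impose repeated copies of the same constraint. After iterating the construction over every missing fixing face of $\pat{Q}^*$, one obtains an f-realizable pattern on inner vertices $S$ carrying all the fixings of $\pat{Q}^*$, completing the proof.
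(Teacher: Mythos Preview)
Your approach is essentially the same as the paper's: start from the complete vertex $Y$-fixing supplied by Lemma~\ref{lem:sr1}, then add the missing fixing faces one at a time by building an auxiliary pattern carrying the new face and merging its inner vertices into $S$ via repeated applications of Lemma~\ref{lem:sr3}, with the $(l-2)$-tree hypothesis providing strong realizability of the stripped pattern $\pat{Q}'$ at each step. The paper differs only cosmetically: it takes the auxiliary pattern disjoint from the current $\pat{S}$ and uses Lemma~\ref{lem:sr3} with empty common face for the very first identification (rather than your Lemma~\ref{lem:sr2}), and it always builds the auxiliary as a Lemma~\ref{lem:sr1}-pattern rather than using the full complete fixing in the $|F|<l$ case.

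Two small points of care. First, your shared vertex ``$s_1$'' must be chosen inside $F$; otherwise the auxiliary pattern on a fresh copy of $F$ has no vertex with label $v(s_1)$ to share, so write ``some $s_j \in F$'' rather than the fixed $s_1$. Second, a complete pattern on $l-1$ vertices is not literally a depth-$1$ complete $(l-2)$-tree (those have at most $l-2$ vertices), but it is an $(l-2)$-tree realized at depth $2$, so the hypothesis still applies as you need.
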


\begin{proof}
We start with a complete vertex $Y$-fixing of a complete pattern with $l$ vertices, which is f-realizable by Lemma~\ref{lem:sr1} (note that a complete pattern with $l-1$ vertices is an $(l-2)$-tree), and add fixing faces one by one while preserving the f-realizability. 

So, let  $\pat{S}$ be an $f$-realizable $Y$-fixing of a complete pattern with $l$ vertices and let $[E,Y']$ be such that $E = \{e_1, \dots, e_{l'}\}$, $l'
 < k$, is an inner face of $\pat{S}$  and $Y' \subseteq Y$ is a $(k-|E|)$-element set of variables. Our aim is to show that $\pat{S}$ plus the fixing face $[E,Y']$ is f-realizable. Let $(\pat{C},Y)$ be the complete vertex $Y$-fixing of a complete pattern with the set of inner vertices $G = \{g_1, \dots, g_{l'}\}$ (where $g_i$'s are fresh vertices) labeled according to $E$ (i.e.,  $v(g_i) = v(e_i)$ for each $i \in [l']$) with an additional fixing face $[G,Y']$. By Lemma~\ref{lem:sr1}, this fixing pattern is realizable.
Let $(\pat{C}^i, Y)$, $i \in \{0, \dots, l'\}$ be the fixing pattern obtained by renaming the vertices $g_1, \dots, g_i$ to $e_1, \dots, e_i$, respectively. The aim, reformulated, is to show that $(\pat{S} \cup \pat{C}^{i},Y)$ is f-realizable for $i=l'$. We prove this claim by induction on $i$. 

For $i=0$ the union $\pat{S} \cup \pat{C}^i$ is disjoint, therefore the claim follows from the f-realizability of $\pat{S}$ and $\pat{C}^0 = \pat{C}$. For the induction step from $i$ to $i+1$ we apply Lemma~\ref{lem:sr3} with $\pat{P}_1 = \pat{S}$, $\pat{P}_2 = \pat{C}^i$, $f_1=e_{i+1}$, and $f_2 = g_{i+1}$. Note that $(\pat{P}_1 \cup \pat{P}_2,Y)$ is f-realizable by the induction hypothesis and $\pat{Q}'$ is strongly realizable since it is an $(l-2)$-tree (or a single vertex in case that $l=2$), so we can conclude that $(\pat{Q},Y) = (\pat{S} \cup \pat{C}^{i+1},Y)$ is f-realizable, finishing the proof. 
\end{proof}

\begin{lemma}
Let $1 < l \leq k+1$ and assume that every complete pattern with $l$ vertices is strongly realizable. Then every $(l-1)$-tree is strongly realizable.
\end{lemma}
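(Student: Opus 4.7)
I plan to induct (with a nested structure) on the complexity of the $(l-1)$-tree, using the outer induction on $l$ for the base case of $(l-2)$-trees.

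A preliminary observation I will invoke repeatedly is that strong realizability passes to subpatterns: given a complete fixing of a subpattern $\pat{P}'$ of $\pat{P}$, its fresh fixing vertices can be embedded into those of a complete fixing of $\pat{P}$ by matching the $[F, Y']$ fixing faces over shared inner faces $F$, and a quality-$d$ realization of the larger pattern restricts to one of the smaller. I will use this both to reduce to complete $(l-1)$-trees and to embed the sub-objects that appear below.

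The main induction will be on the number $n$ of $l$-vertex complete sub-patterns (``$l$-blocks'') in the complete $(l-1)$-tree. The base case $n=0$ yields an $(l-2)$-tree, strongly realizable by the outer induction. For $n \geq 1$, I will pick an outermost $l$-block $\pat{B}$ -- one added at the greatest depth -- and decompose the tree as $\pat{T}' \cup \pat{B}$, where $\pat{T}'$ is the complete $(l-1)$-tree obtained by deleting $\pat{B}$'s fresh vertices (strongly realizable by inner induction) and the two pieces share a face $E = \{e_1, \ldots, e_p\}$ of size $p \leq l - 1$. To glue, I will introduce a disjoint isomorphic copy $\pat{B}^*$ of $\pat{B}$ in which $E$ is replaced by fresh vertices $g_1, \ldots, g_p$ of the same labels; the disjoint union $\pat{T}' \sqcup \pat{B}^*$ is strongly realizable via independent f-realization of the two sides. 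I will then identify $g_i$ with $e_i$ one at a time using Lemma~\ref{lem:sr3}: at step $i$ the pieces share the face $\{e_1, \ldots, e_{i-1}\}$, both sides are $k$-trees (being $(l-1)$-trees with $l-1 \leq k$), and $e_i, g_i$ are completely fixed inner vertices of the same label with $\{e_1, \ldots, e_i\}$ a face of each side. The output will be the complete fixing of $\pat{T}' \cup \pat{B}$ with duplicated fixing vertices on the sub-faces of $E$; f-realizability of the standard complete fixing will then follow from the subpattern observation.

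The main obstacle will be verifying the hypothesis of Lemma~\ref{lem:sr3} that $\pat{Q}'$ -- the current union with fixing vertices and all vertices labeled $v(e_i)$ removed -- is strongly realizable. The key point is that deleting any $v(e_i)$-labeled vertex removes at least $g_i$ from $\pat{B}^*$, collapsing $\pat{B}^*$ to at most $l - 1$ vertices so that it no longer contributes an $l$-block, while deletion in $\pat{T}'$ can only further decrease its $l$-block count. Hence $\pat{Q}'$ will have strictly fewer $l$-blocks than $\pat{T}' \cup \pat{B}$, and strong realizability of $\pat{Q}'$ will follow from the inner inductive hypothesis (applied via subpattern inheritance, embedding $\pat{Q}'$ into a suitable complete $(l-1)$-tree of strictly smaller block count).
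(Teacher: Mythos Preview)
Your approach is essentially the paper's: both build up the $(l-1)$-tree from a previously handled tree $\pat S$ (your $\pat T'$) and a complete block $\pat C$ (your $\pat B$), then glue by separating the shared vertices into fresh copies and re-identifying them one at a time via Lemma~\ref{lem:sr3}. The difference lies only in how the induction is bookkept.

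The paper does not count ``$l$-blocks''. It inducts over the class of $(l-1)$-trees built in $n$ steps from complete patterns of size $\le l$, with a secondary induction on $|S\cap C|$ for the last step. This buys a much cleaner verification of the $\pat Q'$ hypothesis of Lemma~\ref{lem:sr3}: one simply observes that $\pat Q'$ is a subpattern of $\pat S\cup\pat C'$ (where $\pat C'$ is $\pat C$ with one shared vertex renamed), and $\pat S\cup\pat C'$ is strongly realizable by the secondary hypothesis (same $n$, smaller $|S\cap C'|$). No embedding of $\pat Q'$ into a tree of ``smaller block count'' is needed. Your version of this step is workable, but only once you make precise that ``$l$-block count'' means the minimum over all constructions and check that this quantity cannot increase under taking subpatterns; the paper's formulation avoids that detour entirely.

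Two small points worth tightening. First, your base case $n=0$ appeals to strong realizability of $(l-2)$-trees via an outer induction on $l$; the lemma as stated does not carry that hypothesis, whereas the paper's base case (a single complete pattern on $\le l$ vertices) uses only the lemma's own assumption. In the assembled proof of Corollary~\ref{cor:trees_strong_real} your appeal is harmless, but it makes your argument for the lemma not self-contained. Second, when you speak of ``the complete $(l-1)$-tree'' and then remove an outermost block to get $\pat T'$, note that $\pat T'$ is no longer a \emph{complete} $(l-1)$-tree; your induction really runs over the broader class of $(l-1)$-trees built step by step, exactly as the paper makes explicit.
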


\begin{proof}
It is enough to show that every complete $(l-1)$-tree is strongly realizable. However, for an inductive proof of this claim, it will be convenient to use more general $(l-1)$-trees, those that can be obtained from the empty pattern in $n$ steps by taking the union of the already constructed pattern $\pat{S}$ with a complete pattern $\pat{C}$ on $l' \leq l$ vertices such that $S \cap C$ (where $0 \leq |S \cap C| < l'$) is a face in both patterns (with the same  labelling in both patterns). The induction is primarily on $n$ and secondarily on $|S \cap C|$.
For $n=1$ the claim follows from the assumption of the lemma. If $S \cap C = \emptyset$, then $\pat{S} \cup \pat{C}$ is a disjoint union and the claim follows by the inductive assumption and the assumption of the lemma.

Otherwise, take a fresh set $Y$ of $k$ variables and let $(\pat{Q},Y)$ be a complete $Y$-fixing of $\pat{S} \cup \pat{C}$. 
Pick a vertex in $S \cap C$, say vertex $f_1$ labeled $x$, let $\pat{C}'$ be the pattern obtained from $\pat{C}$ by renaming vertex $f_1$ to a fresh vertex $f_2$, let $(\pat{P}_1,Y)$ and $(\pat{P}_2,Y)$ be complete $Y$-fixings of $\pat{S}$ and $\pat{C}'$, respectively, and let $E = (S \cap C) \setminus \{f_1\}$. 
Note that this notation is consistent with the statement of Lemma~\ref{lem:sr3}: $\pat{Q}$ can be obtained from $\pat{P}_1 \cup \pat{P}_2$ by identifying vertices $f_1$ and $f_2$. 
To conclude the proof, we observe that the assumptions of Lemma~\ref{lem:sr3} are satisfied. Indeed,  $(\pat{P}_1 \cup \pat{P}_2,Y)$ is f-realizable by the inductive assumption (since it is a complete fixing of $\pat{S} \cup \pat{C}'$ for which $|S \cap C'| < |S \cap C|$) and
 $\pat{Q}'$ is strongly realizable since it is a subpattern of  $\pat{S} \cup \pat{C}'$.
\end{proof}

The following corollary is the core technical contribution of this section. Its proof follows by induction from the previous two lemmata.

\begin{corollary} \label{cor:trees_strong_real}
Every $k$-tree is strongly realizable. 
\end{corollary}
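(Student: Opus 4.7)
The plan is to prove the corollary by induction on $l$, establishing that every $l$-tree is strongly realizable for each $1 \leq l \leq k$, alternating between the two preceding lemmata at each stage. The base rung of the ladder is supplied by the first of the two lemmata applied at index $l = 2$: its hypothesis is vacuous in that case, so we immediately obtain that every complete pattern on two vertices is strongly realizable. Feeding this into the second lemma at the same index $l = 2$ then gives that every $1$-tree is strongly realizable.

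For the induction step, I would assume that every $(l-1)$-tree is strongly realizable for some $l$ with $2 \leq l \leq k$, and invoke the first lemma at index $l+1$: its hypothesis, namely strong realizability of all $((l+1) - 2) = (l-1)$-trees, is exactly the inductive assumption, so the lemma yields that every complete pattern on $l+1$ vertices is strongly realizable. Plugging this conclusion into the second lemma at the same index $l+1$ then gives strong realizability of every $((l+1)-1) = l$-tree, closing the induction step. Iterating up to $l = k$ delivers the corollary.

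There is essentially no obstacle at the level of this corollary itself: all the genuine technical weight has already been absorbed into the two preceding lemmata, which in turn rest on the pattern-assembly constructions of Lemma~\ref{lem:sr1} and Lemma~\ref{lem:sr3} and, through the latter, on the new loop lemma Theorem~\ref{thm:loop_inv}. The only thing one needs to verify carefully is that the indices line up, that the range $1 < l \leq k+1$ in both lemmata is wide enough for the induction to reach $l = k$ (indeed the induction terminates precisely at the upper endpoint $l+1 = k+1$), and that the inductive hypothesis matches the hypothesis of the first lemma exactly. The role of this corollary is thus purely to splice the two alternating constructions into a single clean induction, whose uniform output, strong realizability of all $k$-trees, is exactly what is needed downstream to derive the main theorems Theorem~\ref{thm:mainresult} and Theorem~\ref{thm:sensfull}.
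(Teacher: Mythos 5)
Your proof is correct and is essentially the paper's argument: the paper proves the corollary by exactly this alternating induction through the two preceding lemmata, applying the first lemma at index $l+1$ (whose hypothesis about $(l-1)$-trees is the inductive assumption, vacuous at the base $l+1=2$) and then the second lemma at the same index, terminating at $l+1=k+1$ to reach $k$-trees. Your index bookkeeping matches the lemmata's stated range $1<l\leq k+1$, so nothing further is needed.
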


Armed with Corollary~\ref{cor:trees_strong_real},
we are ready to execute the idea outlined in the beginning of this section.
For the purpose of the following theorem, we call an instance  $\inst I = (V, \{\alg A_x \mid x \in V\}, \constr)$ a \emph{weak $k$-instance} 
if it satisfies the running assumption, that is,
$\constr = \{(S,R_S) \mid S \subseteq V, |S| \leq k\}$ and, for any $S' \subseteq S$ such that $|S| \leq k$, the projection of $R_S$ onto $S'$ is contained in $R_{S'}$.

\begin{theorem} \label{thm:core}
Let $k \geq 2$ and $n \geq 0$ be integers. Then there exists $d = z(n,k)$ such that for any variety $\var V$ with a $(k+2)$-ary  near unanimity term, or any idempotent algebra $\alg A$ with local near unanimity term operations of arity $k+2$,
any weak $k$-instance $\inst I$ of $\csp(\var V)$ (or $\csp(\alg A)$) with at most $n$ variables, and any at most $k$-element set of variables $Y$, every evaluation $\phi: Y \to A$ of quality $d$ extends to a solution of $\inst I$. 
\end{theorem}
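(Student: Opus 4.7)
The plan is to induct on $n$, defining $z(n,k)$ recursively along the way.

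For the base case ($n \leq k$), the whole set $V$ is already a face of $\inst I$, so a solution is just a tuple of $R_V$ extending $\phi$. Iterated use of Lemma~\ref{lem:face-to-tree}, one variable of $V \setminus Y$ at a time, produces such a tuple provided the initial quality is at least $|V \setminus Y| + 1 \leq k+1$.

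For the inductive step ($n > k$), I would first apply Lemma~\ref{lem:face-to-tree} iteratively to extend $\phi$ to an evaluation $\phi_0$ on a $k$-element superset $Y_0 \supseteq Y$ in $V$ (possible since $|V| > k$), preserving enough quality; so we may assume $|Y_0| = k$. I then form the \emph{shrunk instance} $\inst J$ on variables $V' = V \setminus Y_0$, with constraint relations
\[
R^J_F = \bigl\{\vc{r} \in R_F : (\vc{r}|_{F'}, \phi_0|_{Y'}) \in R_{F' \cup Y'} \text{ for all } F' \subseteq F,\ Y' \subseteq Y_0,\ |F'|+|Y'| \leq k\bigr\}
\]
for $F \subseteq V'$ with $|F| \leq k$. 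The universal quantifier over $F' \subseteq F$ makes inheritance of projection consistency transparent, so $\inst J$ is a weak $k$-instance with $n-k<n$ variables; and any solution of $\inst J$, combined with $\phi_0$ on $Y_0$, extends $\phi$ to a solution of $\inst I$.

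The crux is to verify that the empty evaluation in $\inst J$ has quality $z(n-k, k)$, so that the inductive hypothesis on $\inst J$ may be applied to it. Let $\pat P$ be the complete $k$-tree with empty base and depth $z(n-k, k)$, viewed as a pattern labeled into $V'$, and form its complete $Y_0$-fixing $(\pat Q, Y_0)$ as a fixing pattern in $\inst I$. By Corollary~\ref{cor:trees_strong_real}, $(\pat Q, Y_0)$ is f-realizable; choosing $z(n, k) \geq z_{(\pat Q, Y_0)}(1)$ plus the slack used in extending from $Y$ to $Y_0$ ensures that $\phi_0$ extends to a realization $\alpha$ of $\pat Q$ in $\inst I$. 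Restricting $\alpha$ to the inner vertices gives a realization of $\pat P$ in $\inst J$, because the defining condition of each $R^J_F$ matches exactly the satisfaction in $\inst I$ of the $Y_0$-fixing faces of $\pat Q$ incident to $F$.

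The main obstacle will be the quality bookkeeping: one must verify that the quality of realizations in $\inst I$ of the lifted pattern translates to the required quality of realizations in $\inst J$ of the original $k$-tree, and that the recursion for $z(n, k)$ terminates. Both are routine given the explicit form of $R^J_F$, Lemma~\ref{lem:face-to-tree}, and the observation that for each fixed $n, k$ only finitely many shapes of $\pat Q$ arise.
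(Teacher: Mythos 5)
Your proposal is correct and follows essentially the same route as the paper's proof: induct on the number of variables, shrink the instance by fixing the evaluation (your explicit fiber description of $R^J_F$ is the same relation as the paper's ``restrictions of partial solutions extending $\phi$''), use Corollary~\ref{cor:trees_strong_real} to extend the fixed evaluation over the complete fixing of a deep complete $k$-tree, and read off from the inner vertices the quality needed to invoke the induction hypothesis on the shrunk instance. The only differences are cosmetic bookkeeping: you remove a full $k$-element set $Y_0$ per step (base case $n\leq k$, handled by iterating Lemma~\ref{lem:face-to-tree}), whereas the paper removes just $Y$ (base case $n\leq 1$), and both proofs rely on the same shape-only dependence of $z_{(\pat{Q},Y)}$ to make the recursion for $z(n,k)$ well defined.
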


\begin{proof}
We prove the claim by induction on $n$.  If $n \leq 1$, then the claim trivially holds with $d = 1$.  Otherwise, we denote $d' = z(n - 1,k)$ and  pick a $d$ greater than or equal to $z_{(\pat{T},Y)}(d')$ for every complete $Y$-fixing $(\pat{T},Y)$ of a complete $k$-tree  of depth $d'$ (formed over the variables $V \setminus Y$). 

Consider an instance $\inst{I}$ of $\csp(\var V)$ (or $\csp(\alg A)$) and an evaluation $\phi: Y \to A$ of quality $d$.
We define a new instance $\inst I' = (V', \{\alg A_x \mid x \in V'\}, \{(S,R'_S) \mid S \subseteq V, |S| \leq k\})$ by setting
$V' = V \setminus Y$ and 
$$
R'_S = \{\restr{\rho}{S} \mid \rho: Y \cup S \to A \mbox{ is a partial solution of $\inst{I}$ such that }
\restr{\rho}{Y} = \phi\}
$$
Clearly, $\inst I'$ is a weak $k$-instance.
%Take any $W \subseteq V'$ such that $|W| \leq k$.
We have chosen $d$ so that, in the instance $\inst I$, the partial evaluation $\phi$ extends to a realization of the complete $Y$-fixing of a complete $k$-tree of depth $d'$ over the set of variables $V \setminus Y$ (the base can be chosen arbitrarily for the argument). This realization witnesses that, in the instance $\inst I'$, there exists an evaluation %$Y' \to A$  
of quality $d'$.
By the choice of $d'$, any such evaluation extends to a solution $\theta$ of $\inst{I}'$.
Now $\phi \cup \theta$ is a solution of $\inst{I}$, finishing the proof.
\end{proof}

To conclude, we state the parts of Theorem~\ref{thm:mainresult} and Theorem~\ref{thm:sensfull} that we set out to prove in this section as the following corollary.  It directly follows from Lemma~\ref{lem:trees_real} and the previous theorem.

\begin{corollary}
\begin{enumerate}
\item If $\var V$ is a variety that has a $(k+2)$-ary near unanimity term then every $(k,k+1)$-instance of the $\csp$ over $\var V$
is sensitive.
\item If $\alg A$ is an idempotent algebra that has local near unanimity term operations of arity $k+2$ then every $(k,k+1)$-instance of $\csp(\alg A)$ is sensitive.
\end{enumerate}
\end{corollary}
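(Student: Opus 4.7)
The plan is to derive the corollary essentially as a direct packaging of Theorem~\ref{thm:core} together with Lemma~\ref{lem:trees_real}; all of the hard work has already been done there, so what remains is a short bookkeeping argument.

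First I would let $\inst I = (V, \{\alg A_x \mid x\in V\}, \constr)$ be an arbitrary $(k,k+1)$-instance over $\var V$ (respectively over $\alg A$) and augment $\inst I$ by adding, for every $S'\subseteq V$ with $|S'|<k$, the constraint $(S', R_{S'})$, where $R_{S'}$ is the projection of $R_S$ onto $S'$ for an arbitrary $k$-element $S\supseteq S'$. As discussed in the running assumptions, this augmentation is well-defined precisely because the instance is $(k,k+1)$-consistent, and the resulting instance is a weak $k$-instance with the same solution set. Set $n = |V|$ and take $d = z(n,k)$, the quality threshold supplied by Theorem~\ref{thm:core}.

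Next, fix any constraint $(S, R_S)$ of $\inst I$ together with a tuple $\vc r \in R_S$; sensitivity amounts to extending $\vc r$ to a solution of $\inst I$. The corresponding partial evaluation $\phi: S \to A$ satisfies the face $S$ by construction, and $|S|\le k$. Because $\inst I$ is a $(k,k+1)$-instance, Lemma~\ref{lem:trees_real} applied to the complete $k$-tree with base $S$ of depth $d$ produces an extension of $\phi$ to a realization of that tree. By the definition of quality, this means precisely that $\phi$ itself has quality $d$. Theorem~\ref{thm:core} now yields a solution of $\inst I$ extending $\phi$, and hence extending the tuple $\vc r$, which is what sensitivity requires.

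There is essentially no obstacle at this stage; the corollary is really the sensitivity reformulation of Theorem~\ref{thm:core}. The only minor point to watch is that a raw $(k,k+1)$-instance carries only $k$-ary constraints, whereas Theorem~\ref{thm:core} is stated for weak $k$-instances, so one must first pass through the canonical projection-augmentation described above. The identical argument covers both parts of the corollary, since Theorem~\ref{thm:core} is phrased uniformly for varieties with a $(k+2)$-ary near unanimity term and for idempotent algebras with local near unanimity term operations of arity $k+2$.
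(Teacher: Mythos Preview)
Your proposal is correct and follows exactly the approach indicated in the paper, which simply records that the corollary ``directly follows from Lemma~\ref{lem:trees_real} and the previous theorem.'' You have merely spelled out the bookkeeping that the paper leaves implicit: augment to a weak $k$-instance, use Lemma~\ref{lem:trees_real} to certify that any tuple in a constraint has arbitrarily high quality, and then invoke Theorem~\ref{thm:core} with $d=z(|V|,k)$.
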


\end{document}